\let\llncssubparagraph\subparagraph
\let\subparagraph\paragraph
\let\subparagraph\llncssubparagraph
\newcommand{\qedhere}{\mbox{\qed}}
\newif\ifAnonymous
\newif\ifComments
\newcommand{\td}[1]{
\begin{center}
\fbox{
\begin{minipage}{3in} \color{Purple}
{\bf TD-Y:} {\rm #1}
\end{minipage}
}
\end{center}
}
\newcommand{\td}[1]{}
\newcommand{\nt}[1]{
\begin{center}
\fbox{
\begin{minipage}{3in} \color{Purple}
{\bf NT:} {\rm #1}
\end{minipage}
}
\end{center}
}
\newcommand{\nt}[1]{}
\newcommand{\lb}[1]{
\begin{center}
\fbox{
\begin{minipage}{3in} \color{Purple}
{\bf LB} {\rm #1}
\end{minipage}
}
\end{center}
}
\newcommand{\lb}[1]{}
\newcommand{\ks}[1]{
\begin{center}
\fbox{
\begin{minipage}{3in} \color{Purple}
{\bf KS} {\rm #1}
\end{minipage}
}
\end{center}
}
\newcommand{\ks}[1]{}
\newcommand{\gj}[1]{
\begin{center}
\fbox{
\begin{minipage}{3in} \color{Purple}
{\bf GJ} {\rm #1}
\end{minipage}
}
\end{center}
}
\newcommand{\gj}[1]{}
\newcommand{\ty}[1]{\ensuremath{\mathsf{#1}}}
\newcommand{\valid}{\mathsf{valid}}
\newcommand{\spec}{\mathsf{spec}}
\newcommand{\emp}{\mathsf{emp}}
\newcommand{\wand}{\mathbin{-\hspace*{-3pt}*}}%
\newcommand{\triple}[3]{\left\{ \begin{array}{@{}c@{}} #1 \end{array} \middle\} \ {#2} \ \middle\{ \begin{array}{@{}c@{}}#3 \end{array} \right\}}
\newcommand{\vd}{\vdash}
\newcommand{\vdel}{\vdash}
\newcommand\unit{()}
\newcommand\sep{\mid}
\newcommand\tty[1]{{\tt #1}}
\newcommand\pair[2]{\langle#1,#2\rangle}
\newcommand\emit{\mathsf{emit}}
\newcommand\TEMPletin[2]{\mathsf{let}\ #1\ \mathsf{in}\ #2}
\newcommand\TEMPemit{\mathsf{emit}}
\newcommand\ifte[3]{{\rm if}\, #1\, #2\, #3}
\newcommand\proj[1]{\pi_{#1}}
\newcommand\refer{\mathsf{ref}\,}
\newcommand\hole{\bullet}
\newcommand\Loc{{\rm Loc}}
\newcommand\config[1]{#1}
\newcommand\dom[1]{\mathsf{dom}(#1)}
\newcommand\erase[1]{\widehat{#1}}
\newcommand\wrap{{\bf wrap}}
\newcommand{\nil}{\textsf{nil}}
\newcommand{\cons}{\textsf{cons}}
\newcommand{\fold}{\textsf{fold}}
\newcommand{\true}{\textsf{true}}
\newcommand{\false}{\textsf{false}}
\newcommand{\Tif}{\textsf{if}}
\newcommand{\Tthen}{\textsf{then}}
\newcommand{\Telse}{\textsf{else}}
\newcommand{\eqdef}{\stackrel{\text{\tiny{def}}}{=}}
\newcommand{\Setb}[2]{\left\{#1 \ \middle| \ #2 \right\}}
\newcommand{\leqprefix}{\leq_{\textit{pref}}}
\newcommand{\monarrow}{\rightarrow_{\text{mon}}}
\newcommand{\sem}[1]{[\![#1]\!]}
\newcommand{\semsl}[1]{\sem{#1}}
\newcommand{\semel}[1]{\sem{#1}}
\newcommand{\modelsel}{\models}
\newcommand{\semdomain}[1]{\textit{#1}}
\newcommand{\VAL}{\semdomain{Val}}
\newcommand{\VAR}{\semdomain{Var}}
\newcommand{\EXP}{\semdomain{Exp}}
\newcommand{\LAM}{\semdomain{Lam}}
\newcommand{\LOC}{\semdomain{Loc}}
\newcommand{\HEAP}{\semdomain{Heap}}
\newcommand{\FUN}{\semdomain{Fun}}
\newcommand{\FENV}{\semdomain{FEnv}}
\newcommand{\CON}{\semdomain{Con}}
\newcommand{\collp}{\textsf{coll}}
\newcommand{\iterp}{\textsf{iter}}
\newcommand{\collpw}{\collp_{\textsf{w}}}
\newcommand{\iterpw}{\iterp_{\textsf{w}}}
\newcommand{\wrapcoll}{\wrap_{\textsf{coll}}}
\newcommand{\Phicoll}{\Phi_{\textsf{coll}}}
\newcommand{\Lcoll}{\mathcal{L}_{\textsf{coll}}}
\newcommand{\Rp}{\textsf{R}}
\newcommand{\Rpw}{\textsf{R}_{\textsf{w}}}
\newcommand{\strp}{\textsf{str}}
\newcommand{\strpw}{\textsf{str}_{\textsf{w}}}
\newcommand{\esafep}{\textsf{esafe}}
\newcommand{\safep}{\textsf{safe}}
\newcommand{\safepw}{\textsf{safe}_{\textsf{w}}}
\newcommand{\wrapstr}{\wrap_{\textsf{str}}}
\newcommand{\Phistr}{\Phi_{\textsf{str}}}
\newcommand{\Lstr}{\mathcal{L}_\mathsf{str}}
\newcommand{\validsanip}{\mathsf{str}_{\textsf{trace}}}
\newcommand{\isallocp}{\mathsf{allocs}}
\newcommand{\notfreshp}{\mathsf{notfresh}}
\newcommand{\openp}{\textsf{open}}
\newcommand{\openpw}{\textsf{open}_{\textsf{w}}}
\newcommand{\closedp}{\textsf{closed}}
\newcommand{\closedpw}{\textsf{closed}_{\textsf{w}}}
\newcommand{\unlockedp}{\textsf{unlocked}}
\newcommand{\unlockedpw}{\textsf{unlocked}_{\textsf{w}}}
\newcommand{\lockedp}{\textsf{locked}}
\newcommand{\lockedpw}{\textsf{locked}_{\textsf{w}}}
\newcommand{\wrapbrack}{\wrap_{\textsf{brac}}}
\newcommand{\Phibrack}{\Phi_{\textsf{brac}}}
\newcommand{\Lbrack}{\mathcal{L}_\mathsf{brac}}
\newcommand{\wrapfile}{\wrap_{\textsf{file}}}
\newcommand{\Phifile}{\Phi_{\textsf{file}}}
\newcommand{\Lfile}{\mathcal{L}_\mathsf{file}}
\newcommand{\validfilep}{\mathsf{file}_{\textsf{trace}}}
\newcommand{\noclosep}{\mathsf{noclose}}
\newcommand{\isopenp}{\mathsf{isopen}}
\newcommand{\csharplstset}{%
  \lstset{
    basicstyle = \sffamily,
    columns = fullflexible,
    moredelim = **[is][\color{blue}]{`}{`},
    commentstyle = \rmfamily\itshape,
    mathescape = true,
    escapebegin = \color[rgb]{0.0, 0.23, 0.0},
    morekeywords = {let,in,as,emit},
    literate={
      {<<}{$\langle$}1
      {>>}{$\rangle$}1
      {lambda}{$\lambda$}1
    }
  }
}
\newcommand{\WSL}{{\mathsf{W}}}
\newcommand{\MEL}{{\mathsf{M}}}
\newcommand{\WEL}{{\mathsf{W}}}
\newcommand{\trace}{\mathsf{trace}}
\newcommand{\hist}{\mathsf{hist}}
\newcommand{\inv}{\mathsf{inv}}
\newcommand{\botHist}{\mathsf{hs}} 
\newcommand{\topTrace}{\mathsf{tr}} 
\begin{document}

\setlength{\pdfpageheight}{\paperheight}
\setlength{\pdfpagewidth}{\paperwidth}


\title{Trace Properties from \\ Separation Logic Specifications}

\ifAnonymous

\else

\author{Lars Birkedal\inst{1} \and Thomas Dinsdale-Young\inst{1} \and Guilhem Jaber\inst{2} \and \\ Kasper Svendsen\inst{3} \and Nikos Tzevelekos\inst{4}}
\institute{
  Aarhus University
  (\email{\{birkedal,tyoung\}@cs.au.dk})
  \and
  University Paris Diderot
  (\email{gjaber@pps.univ-paris-diderot.fr})
  \and
  University of Cambridge
  (\email{ksvendsen@cs.au.dk})
  \and
  Queen Mary University of London
  (\email{nikos.tzevelekos@qmul.ac.uk})
}

\authorrunning{L. Birkedal \and T. Dinsdale-Young \and G. Jaber \and K. Svendsen \and N. Tzevelekos}

\fi

\maketitle

\begin{abstract}
We propose a formal approach for relating abstract separation logic library specifications with the trace properties they enforce on interactions between a client and a library.
Separation logic with abstract predicates enforces a resource discipline that constrains when and how calls may be made between a client and a library.
Intuitively, this can enforce a protocol on the interaction trace.
This intuition is broadly used in the separation logic community but has not previously been formalised.
We provide just such a formalisation.
Our approach is based on using wrappers which
instrument library code to induce execution traces for the properties under examination.
By considering a separation logic extended with trace resources, we
prove that when a library satisfies its separation logic specification then its wrapped version satisfies the same 
specification and, moreover, maintains the trace properties as an invariant.
Consequently, any client and library implementation that are correct with respect to the separation logic specification will satisfy the trace properties.
\end{abstract}





\section{Introduction}

Separation logic~\cite{Reynolds:2002:separation-logic,Ishtiaq:2001:bi-asertlog} provides a powerful formalism for specifying an interface between a library and a client in terms of resources.
For example, a client may obtain an ``opened file'' resource by calling an $\tty{open}$ operation of a file library, which it can then use to access the file by calling a $\tty{read}$ operation.
Abstract predicates~\cite{Parkinson:2005:seplog-abstr} crucially hide the implementation details --- the client is not aware of what an ``opened file'' resource actually consists of, which may vary between implementations of the library, but only the functionality for using it that the library provides.
A client that is verified with respect to the abstract specification will behave correctly with any implementation of the library.

In separation logic, functions are specified with preconditions and postconditions.
We can think of the precondition as specifying resources that a client must provide in order to call the function.
The postcondition then specifies the resources that the client receives when the function returns.
For example, a simplified specification for a file library might be the following:
\[
\triple{ \closedp }{\tty{open}()}{ \openp }
\quad
\triple{ \openp }{\tty{close}()}{ \closedp }
\quad
\triple{ \openp }{\tty{read}()}{ \openp }
\]
A client acquires an $\openp$ resource, represented by an abstract predicate, by calling \tty{open}.
With this file resource, the client can call \tty{close} and relinquish the resource, or call \tty{read} and retain the resource.
The rules of separation logic allow us to prove specifications for clients that use the library correctly, such as:\vspace{-1mm}
\[
  \triple{\closedp}{\tty{open}();\, \tty{read}();\, \tty{close}()}{\closedp}\vspace{-1mm}
\]
On the other hand, we cannot prove any useful specifications\footnote{%
  It is always possible to prove a vacuous specification with precondition $\bot$, but
  a useful specification should at least have a satisfiable precondition.
  For complete, closed programs we would typically expect the precondition to be $\emp$.
} for programs that use the library incorrectly, like the following:
  $\tty{open}();\, \tty{close}();\, \tty{read}()$.

Intuitively, separation logic specifications imply properties about the trace of interactions between a library and a client.
For example, the specification for the file library ensures that a file can only be accessed if it has previously been opened and not subsequently closed.
This strongly depends on the fact that the specification is abstract: a client has no way to obtain the $\openp$ resource except by calling the $\tty{open}$ operation.
If the client were able to forge the $\openp$ resource then it could violate the trace property.
While this intuition is broadly used in the separation logic community, it has not previously been formalised.

In this paper we present a formal approach to establishing trace properties from abstract separation logic specifications.
We achieve this by placing a \emph{wrapper} between a client and a library that generates a trace of the interactions between the client and library.
The wrapper has no bearing on the underlying semantics of the program (when traces are ignored) but simply allows us to formally interpret trace properties.
Supposing that the library implements an abstract separation logic specification, we show that the wrapped library also satisfies this specification, 
but moreover maintains the desired trace properties as an invariant.
For the latter step, we recur to a separation logic extended with trace resources, which can be used for instantiating the abstract predicates in the original specification.
In the context of a client that uses the library in accordance with the specification, the trace properties are thus guaranteed to hold. 

Our approach establishes \emph{temporal} trace properties from separation logic specifications that are \emph{not} inherently temporal, independently of implementation details.
%
While ours is not the first approach incorporating temporal reasoning in program logics, and separation logic in particular (q.v.~\S\ref{sec:conclusions}), it is the one 
to derive trace properties for
libraries that already have separation logic specifications.
Previous approaches~\cite{gotsman-rcu,feng,fcsl,tada} achieve temporal reasoning by specifying and verifying
the underlying libraries using trace-oriented specifications; 
the
  novelty of our work is in deriving temporal properties from trace-agnostic specifications.

A motivation for our approach is to \emph{justify} separation logic specifications, by showing that they entail more intuitive trace properties.
Such justification is particularly useful if the specification is intended to formally capture an English-language specification (e.g. the POSIX file system~\cite{POSIXSL}).
While we limit our presentation to the sequential setting and a simple higher-order separation logic, we can show trace properties that include elaborate protocols that are, for instance, parametric on object identifiers or beyond regular languages (q.v~\S\ref{sec:examples}).
%
%

We begin by considering an illustrative example in \S\ref{sec:mot-ex}.
We introduce the programming language (\S\ref{sec:language}) and our separation logic (\S\ref{sec:logic}) before we tackle examples in depth in \S\ref{sec:examples}.
These examples establish that separation logic specifications can imply a variety of trace properties, such as:
\begin{compactitem}
  \item An iterator over a collection should only be used if the collection has not been modified since the iterator was created (\S\ref{sec:iterator-example}).
  \item A higher-order function calls its argument exactly once (\S\ref{sec:wb-example}). 
  \item A traversal on a stack invokes a given method on each value that has been pushed (but not popped) in the order in which they will be popped (\S\ref{stackmap-example}).
  \item Any string received from the user should be sanitised before it is included in an SQL query (\S\ref{sanitize-example}).
\end{compactitem}
The semantics of the logic is presented in \S\ref{sec:semantics}.
Finally, we discuss conclusions and related work in \S\ref{sec:conclusions}.



\section{Motivating Example}\label{sec:mot-ex}

As a motivating example, consider a library that provides a stack with push and pop operations.
In separation logic, these operations can be specified as follows:
\begin{gather*}
  \forall \alpha, a \ldotp \triple{\mathsf{stack}(\alpha) \land a \neq ()}{\texttt{push}(a)}{\mathsf{stack}(a :: \alpha)} \\
  \forall \alpha \ldotp \triple{\mathsf{stack}(\alpha)}{\texttt{pop}()}{r \ldotp (r = () \land \mathsf{stack}(\alpha) \land \alpha = \varepsilon) \lor {} \\ (\exists \alpha' \ldotp \alpha = r :: \alpha' \land \mathsf{stack}(\alpha'))}
\end{gather*}
The \texttt{push} operation simply prepends the given value to the stack, which is represented by the $\mathsf{stack}$ abstract predicate.
The value must be distinct from the unit $()$, which is used to indicate an empty stack.
The \texttt{pop} operation returns $()$ if the stack is empty, and otherwise removes and returns the head value.

We can also specify the stack in terms of trace properties that are satisfied by interactions between a client and a stack library.
For instance, a simple trace property that we might wish to show is this:
\begin{quote}
  Each (non-unit) value returned by an invocation of \texttt{pop} was an argument of a previous invocation of \texttt{push}.
\end{quote}
In order capture trace properties, we define a wrapper that instruments the library operations with code to emit trace events as:
\begin{align*}
  \texttt{push}' &\eqdef \lambda v \ldotp \texttt{push}(v); \TEMPemit \langle \texttt{push}, v \rangle &&&
  \texttt{pop}' &\eqdef \lambda \_ \ldotp \TEMPletin{r = \texttt{pop}()}{\TEMPemit \langle \texttt{pop}, r \rangle; r}
\end{align*}
We can then define a language of traces that captures our desired invariant:\vspace{-1mm}
\[
  \mathcal{L} = \Setb{t}{\forall i, v \ldotp t[i] = \langle \texttt{pop}, v \rangle \land v \neq () \implies \exists j \ldotp j < i \land t[j] = \langle \texttt{push}, v \rangle}
\]

Our separation logic includes two resources that allow us to reason about code that emits trace events: $\trace(t)$ expresses that $t$ is the current trace; and $\inv(I)$ expresses that the trace invariant is $I$.
The proof rule for $\TEMPemit$ updates the $\trace$ resource, while requiring that the new trace satisfies the invariant.
Using these, we can define a `wrapped' version of the abstract $\mathsf{stack}$ predicate so that the wrapped operations will satisfy the same abstract specification, but also enforce the trace invariant:
\[
  \mathsf{stack}'(\alpha) \eqdef \mathsf{stack}(\alpha) * \inv(\mathcal{L}) *
    \exists t \ldotp \trace(t) \land t \in \mathcal{L} \land \forall a \in \alpha \ldotp \exists i \ldotp t[i] = \langle \texttt{push}, a \rangle
\]
The proof of the wrapped push operation proceeds as follows:
\begin{source}
$\{ \mathsf{stack}'(\alpha) \land v \neq () \}$
$\{ \mathsf{stack}(\alpha) \land v \neq () * \inv(\mathcal{L}) * \exists t \ldotp \trace(t) \land t \in \mathcal{L} \land \forall a \in \alpha \ldotp \exists i \ldotp t[i] = \langle \texttt{push}, a \rangle \}$
  push(v);
$\{ \mathsf{stack}(v :: \alpha) * \inv(\mathcal{L}) * \exists t \ldotp \trace(t) \land t \in \mathcal{L} \land \forall a \in \alpha \ldotp \exists i \ldotp t[i] = \langle \texttt{push}, a \rangle \}$
$\{ \mathsf{stack}(v :: \alpha) * \inv(\mathcal{L}) * \exists t \ldotp \trace(t) \land (t \cdot \langle \texttt{push}, v \rangle) \in \mathcal{L} \land {} $
    $\forall a \in (v :: \alpha) \ldotp \exists i \ldotp (t \cdot \langle \texttt{push}, v \rangle)[i] = \langle \texttt{push}, a \rangle \}$
  emit<<push,v>>;
$\{ \mathsf{stack}(v :: \alpha) * \inv(\mathcal{L}) * \exists t \ldotp \trace(t) \land t \in \mathcal{L} \land \forall a \in \alpha \ldotp \exists i \ldotp t[i] = \langle \texttt{push}, a \rangle \}$
$\{ \mathsf{stack}'(v :: \alpha) \}$
\end{source}
The wrapped pop operation can be verified similarly.
We can thus conclude that the stack indeed satisfies the desired trace property.

The above example demonstrates our technique on a first-order library (the library cannot make call-backs to the client), but it also applies in a higher-order setting.
For instance, consider extending the stack with a \texttt{foreach} operation that traverses the stack and calls a client-supplied function on each element in order.
In separation logic, this operation can be specified as:
\[
  \forall \alpha,\! f,\! I \ldotp\! \triple{\mathsf{stack}(\alpha) * I(\varepsilon) * 
\forall \beta, a \ldotp\\
 \triple{I(\beta)}{f(a)}{I(a :: \beta)}}{\!\texttt{foreach}(f)\!}{\mathsf{stack}(\alpha) * {}\\ I(\mathit{rev}(\alpha))}
\]
This specification is subtle.
The \texttt{foreach} operation takes a function $f$ that is specified (using a nested triple) with an invariant $I(\beta)$ whose parameter records the list of values that $f$ has so far been called on.
The operation is given the predicate $I(\varepsilon)$ initially, and it returns with $I(\mathit{rev}(\alpha))$, where $\mathit{rev}(\alpha)$ is the reversal of the list $\alpha$.
Note that while the predicate $\mathsf{stack}$ is abstract to the client (the library determines the interpretation), the predicate $I$ is abstract to the library (the client determines the interpretation).
This means that for the library to obtain the $I(\mathit{rev}(\alpha))$ predicate for the postcondition from the $I(\varepsilon)$ given in the precondition, it must call $f$ on each element of $\alpha$ in order.
Moreover, it cannot make any further calls to $f$, since separation logic treats the predicate $I(\beta)$ as a resource, which must be given up by the library each time it calls $f$, and which the library has no means of duplicating since in separation logic $A \implies A * A$ does not hold in general.

Defining a wrapper for higher-order libraries is more complex than in the first-order case, since we wish the trace to capture all interactions between the client and the library, including call-backs between them.
The wrapper must therefore emit events at the call and return of library functions, as well as functions that are passed as arguments to library functions.
We can then specify a number of properties that traces generated by client-library interactions will have:
\begin{compactitem}
  \item The trace of \texttt{push} and \texttt{pop} operations obeys the stack discipline.
  \item Between invocation and return, $\texttt{foreach}(f)$ calls $f$ on each element of the stack in order, with no further calls.
  \item The invocations of \texttt{push}, \texttt{pop} and $f$ (the argument of a call of $\texttt{foreach}$) are atomic --- there are no further events between the call and return.
\end{compactitem}
While the first property can be seen as a straightforward consequence of the \texttt{push} and \texttt{pop} specifications, the others are more subtle.
In particular, they depend on the \texttt{foreach} specification's parametricity in $I$, which prevents the library from using its argument in an arbitrary fashion.
For particular instantiations of $I$, (e.g. $I(\beta) = \true$), \texttt{foreach} could call its argument an arbitrary number of times, or even store it and call it from future invocations of \texttt{pop}.
However, parametricity ensures that it will behave the same independent of how $I$ is instantiated, and so it cannot do that since
$I$ can be instantiated so as to enforce trace properties.

This example illustrates several important aspects of our approach.
Firstly, we support higher-order functions, such as \texttt{foreach}.
We also deal with expressive trace properties: the language of traces is not context-free, since the stack may be traversed multiple times.
Moreover, the connection between the separation logic specification and the trace properties that follow from it is subtle.

In \S\ref{sec:examples} we revisit this example, among others, in detail.
Before doing so, we present the formal setting of our approach.



\section{Programming Language}
\label{sec:language}

We define the programming language which will be the object of this study. To keep the setting general, the language is an untyped call-by-value $\lambda$-calculus with references, named functions, pairs,
integers and primitives for arithmetic, reference usage and conditionals.
Our language additionally features an $\emit$ primitive, which is used to output values as trace events.
We will formulate properties in terms of the event traces produced by program evaluation.
The syntactic classes of \emph{values} ($\VAL$), \emph{expressions} ($\EXP$) and \emph{evaluation contexts} ($\CON$) are given respectively as follows.
\[\begin{array}{@{}r@{\ }c@{\ }l}
 u,v & ::=&  \unit \sep n \sep x \sep l \sep f \sep \pair{u}{v}\\
 e & ::= &  v \sep \lambda x.e \sep e \ e' \sep \ifte{e_1}{e_2}{e_3}  \sep 
    \pair{e}{e'}\sep \proj{i}(e) \sep \refer e \sep {!}e \sep e \,{\sf op}\, e'  \sep \emit\, v  \\
 K & ::= & \hole \sep K e \sep v K \sep K \,{\sf op}\, e \sep v \,{\sf op}\, K \sep \ifte{K}{e}{e'} 
    \sep \pair{K}{e} 
    \sep \pair{v}{K} \sep \proj{i}(K)\sep \refer K \sep {!}K
\end{array}\]
Above we let $ i \in \{1,2\}, n \in \mathbb{Z}, f \in \FUN$ and $l \in \LOC$,
where $\FUN$ and $\LOC$ are countable sets of function and location identifiers respectively. 
Moreover, $x$ ranges over the set of variables $\VAR$, and ${\sf op}\in\{=,:=, \dotsc\}$ ranges over a set of binary operators which includes equality test, assignment and arithmetic operators.
We use $\LAM$ for the set of $\lambda$-abstraction expressions.
Note that $\lambda$-abstractions are not values; functional values are represented by function identifiers.

Our operational semantics tracks $\lambda$-abstractions in an environment $\gamma:\FUN\rightharpoonup_{\rm fin}\LAM$, where all encountered functions are named and stored.
This allows us to track function usage by emitting trace events that refer to these function names.
In addition, we can reduce expressions that use external functions: in such a case, an expression $e$ can contain the names of the external functions in its code, and $\gamma$ would provide their bodies. 
The semantics draws from the open-term trace semantics used e.g.\ in~\cite{Laird:2007:fulabstr-tracesem}, the main difference being that we explicitly control  events generated
by the reduction via the $\emit$ primitive (and also indiscriminately name $\lambda$-abstractions).

Expressions are evaluated inside states $(h,\gamma) \in \HEAP\times\FENV$ comprising a function environment $\gamma$ and a heap $h$.
A heap is a finite map from locations to values and a function environment is a finite map from function identifiers to $\lambda$-abstractions.
For any domain/codomain set pair $X,Y$, map $g:X\rightharpoonup Y$ and $(x,y)\in X\times Y$, we let 
$g[x\mapsto y]=\{(x,y)\}\cup\{\,(z,g(z))\mid z\in\dom{g}\setminus\{x\}\,\}$
regardless of whether $x\in\dom{g}$ or not.
The evaluation rules are:
\[\begin{array}{@{}r@{\ }c@{\ }l@{\quad}r}
\config{K[\lambda x.e],(h,\gamma)} & \rightarrow &   \config{K[f],(h,\gamma [f\mapsto \lambda x \ldotp e])} &  (f \notin \dom{\gamma})\\
\config{K[f \, v],(h,\gamma)} &  \rightarrow &  \config{K[e\{v/x\}],(h,\gamma)} & (\gamma(f) = \lambda x \ldotp e)\\
\config{K[\proj{i}(\pair{v_1}{v_2})],(h,\gamma)} & \rightarrow &  \config{K[v_i],(h,\gamma)} & (i \in \{1,2\}) \\
\config{K[\ifte{n}{e_1}{e_2}],(h,\gamma)} & \rightarrow & \config{K[e_1],(h,\gamma)} & (\text{if } n > 0)\\
\config{K[\ifte{0}{e_1}{e_2}],(h,\gamma)} & \rightarrow & \config{K[e_2],(h,\gamma)} & \\ 
\config{K[\refer v],(h,\gamma)} & \rightarrow & \config{K[l],(h[l \mapsto v],\gamma)} & (l\notin \dom{h})\\
\config{K[!l],(h,\gamma)} & \rightarrow & \config{K[h(l)],(h,\gamma)} & (l\in \dom{h})\\
\config{K[l:=v],(h,\gamma)} & \rightarrow & \config{K[\unit],(h[l\mapsto v],\gamma)} & (l\in \dom{h})\\
\config{K[\emit \, v],(h,\gamma)} & \xrightarrow{v} & \config{K[()],(h,\gamma)} 
\end{array}\]
The above defines a labelled transition system with labels from $\VAL\cup\{\epsilon\}$ and, by taking its reflexive transitive closure ($\to^*$), we obtain labels from $\VAL^*$, which we shall call \emph{traces}.

We next relate the behaviour of each expression with that of its counterpart where all emits have been omitted.
We write $\erase{e}$ for the expression obtained from $e$ by replacing all occurrences of $\emit\, v$ by $\unit$, and extend this notation to evaluation contexts and functional environments in the expected manner.

\begin{theorem}\label{thm:erasure-emit}
For all $e,e',h,h',\gamma,\gamma'$,
if $\config{\erase{e},(h,\erase{\gamma})} \rightarrow^* \config{e',(h',\gamma')}$ 
then there exists a trace $t$, an expression $e''$ and an environment $\gamma''$ such that
$e' = \erase{e''}, \gamma' = \erase{\gamma''}$ and
$\config{e,(h,\gamma)} \xrightarrow{t}{\!\!}^*\ \config{e'',(h',\gamma'')}$.
\end{theorem}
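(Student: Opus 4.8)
The plan is to reduce the statement to a single-step lifting lemma and then induct on the length of the erased reduction $\config{\erase{e},(h,\erase{\gamma})} \rightarrow^* \config{e',(h',\gamma')}$. The base case (length $0$) is immediate, taking $e'' = e$, $\gamma'' = \gamma$ and the empty trace. For the inductive step it suffices to prove: whenever $\config{\erase{e_0},(h,\erase{\gamma_0})} \rightarrow \config{\bar e,(h',\bar\gamma)}$, there are $t,e_1,\gamma_1$ with $\config{e_0,(h,\gamma_0)} \xrightarrow{t}{\!\!}^* \config{e_1,(h',\gamma_1)}$, $\erase{e_1} = \bar e$ and $\erase{\gamma_1} = \bar\gamma$; the two traces then concatenate. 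Before proving this I would record the routine facts that drive everything: erasure is the identity on values (the value grammar contains no $\emit$), hence $\erase{h} = h$ and erasure leaves $\dom{\gamma}$ and $\dom{h}$ unchanged; erasure commutes with context plugging, $\erase{K[e]} = \erase{K}[\erase{e}]$, and with value substitution, $\erase{e\{v/x\}} = \erase{e}\{v/x\}$; erasure sends evaluation contexts to evaluation contexts; and for every non-$\emit$ redex $r$, $\erase{r}$ is a redex of the same shape whose contraction is the erasure of the contraction of $r$.

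The crux is the single-step lemma, which I would prove by induction on the number of $\emit$ occurrences in $e_0$, using the unique decomposition of a non-value expression as $K[r]$ with $r$ its leftmost redex. The point is that an $\emit\,v$ in evaluation position in $e_0$ erases to the value $\unit$ and is therefore invisible to the erased reduction. If the leftmost redex $r$ of $e_0$ is itself an $\emit\,v$, I fire it in the original, $\config{K[\emit\,v],(h,\gamma_0)} \xrightarrow{v} \config{K[\unit],(h,\gamma_0)}$; this emits the label $v$ but, since $\erase{K[\emit\,v]} = \erase{K}[\unit] = \erase{K[\unit]} = \erase{e_0}$, leaves the erased configuration fixed, so the induction hypothesis applies to $K[\unit]$ (strictly fewer emits, same erased step to $\bar e$) and I prepend $v$ to the resulting trace. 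If instead $r$ is a non-$\emit$ redex, then $\erase{K}$ is an evaluation context, $\erase{r}$ a redex, and $\erase{K}[\erase{r}]$ a legal decomposition of $\erase{e_0}$; by determinacy of decomposition it coincides with the split $\bar K[\bar r]$ fired by the given erased step, whence firing $r$ in the original produces a state that erases exactly to $\bar e$ and $\bar\gamma$ (here I use that erasure commutes with contraction and with the heap and environment updates). A case that must be excluded is that $e_0$ is stuck: but a stuck term's active position is never an $\emit$ (which is always reducible), so it is preserved by erasure, and $\erase{e_0}$ would then be stuck too, contradicting the hypothesis that it reduces.

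Assembling the induction is then bookkeeping: the trace of the whole reduction is the concatenation of the trace from the induction hypothesis with the trace from the lifting lemma, and the non-$\emit$ step contributes no label. Fresh-name choices are aligned because $\dom{\gamma} = \dom{\erase{\gamma}}$ and $\dom{h} = \dom{\erase{h}}$, so the original reduction can always pick exactly the fresh function identifier or location chosen by the erased one, keeping the two configurations in lockstep. I expect the main obstacle to be the single-step lemma, and within it the decomposition-matching: one must argue carefully that after the invisible emits are consumed the exposed non-$\emit$ redex is precisely the one fired by the erased reduction, which rests on unique decomposition together with the fact that erasure commutes with the entire evaluation apparatus (plugging, substitution, contraction, and domains).
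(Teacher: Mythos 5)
Your proof is correct, and it rests on the same two ideas as the paper's: an $\emit$ in evaluation position is invisible under erasure, so the original configuration can fire it (decreasing an emit-count measure) without moving the erased one, and every non-emitting redex matches its erasure one-for-one. Where you genuinely differ is in the factorisation and the key lemma. The paper proves a \emph{forward} lemma --- a non-emitting step $\config{e,(h,\gamma)} \rightarrow \config{e',(h',\gamma')}$ projects to $\config{\erase{e},(h,\erase{\gamma})} \rightarrow \config{\erase{e'},(h',\erase{\gamma'})}$ --- and then runs a single lexicographic induction on the pair (length of the erased reduction, number of emits in $e$), matching steps by letting the original take \emph{some} non-emitting step, projecting it down, and appealing to determinism of the reduction to identify it with the given erased step. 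You instead prove a \emph{backward} single-step lifting lemma by an inner induction on the emit count, doing the matching yourself via unique decomposition and the commutation of erasure with plugging, substitution and contraction, and then close with a plain outer induction on length. The paper's route is shorter because determinism does the matching for free; your route is more self-contained and brings into the open two points the paper leaves implicit: exhaustiveness of its case split needs exactly your observation that stuck terms erase to stuck terms (so the original can step whenever its erasure can), and the semantics is deterministic only up to the choice of fresh function and location names, so the appeal to determinism tacitly relies on your fact that erasure preserves $\dom{\gamma}$ and $\dom{h}$, which is what lets the two reductions make the same fresh choices. One cosmetic divergence: in its terminal case the paper also flushes trailing emits, so the lifted configuration it constructs is itself irreducible --- more than the statement asks for, but convenient when the theorem is applied to terminating executions as in Theorem~\ref{thm:sl-trace-prop} --- whereas your base case stops at $e''=e$, which satisfies the statement exactly as written.
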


This theorem justifies the fact that instrumenting terms with emits does not change their semantics.
Its proof is in Appendix~\ref{sec:erase-emit}.



\section{Logic}
\label{sec:logic}

In this section we introduce our logic and prove some meta-theoretic results. 
The logic is a standard higher-order separation logic
for the $\lambda$-calculus introduced in \S\ref{sec:language}, extended with new
primitives for reasoning about trace properties and programs that emit traces. 

The logic consists of an assertion logic for reasoning about machine states
and a specification logic for reasoning about the behaviour of programs. The
assertion and specification logics are constructed over the same simply-typed
term language. The types of this simply-typed term language are given below:
\begin{align*}
  \sigma, \tau\, &::=\
    \ty{1} \mid \ty{Bool} \mid \ty{Nat} \mid \tau \to \sigma \mid \tau \times \sigma 
    \mid \ty{seq}~\tau \mid \ty{Prop} \mid \ty{Spec} \mid \ty{Val} \mid \ty{Exp} \mid \ty{Loc}
\end{align*}
The types include standard base types (unit, Booleans, and natural numbers) and type formers for functions, products and finite sequences.
Additionally, there is a type for resource assertions (\ty{Prop}), a type for specifications (\ty{Spec}),
and types for values, expressions and locations from the programming language
(\ty{Val}, \ty{Exp} and \ty{Loc}, respectively). The type $\ty{seq}~\tau$ is the
type of finite sequences of elements of type $\tau$; we will use this type to
represent traces.

The terms of the language are generated by the following grammar. The language
includes a simply-typed $\lambda$-calculus with pairs, a higher-order logic with
equality, 
primitive operators on finite sequences ($\nil,\cons,\fold$),
primitive separation logic resources ($\emp$, $*$, $\wand$, $\mapsto$), 
Hoare triples and embeddings back and forth between the specification and assertion 
logic ($\valid$ and $\spec$) and three trace primitives ($\trace$, $\hist$ and $\inv$). 
\begin{align*}
& P,Q,N,M ::= x \mid \lambda x {\,:\,} \tau \ldotp M \mid M N \mid () \mid (M,N) \mid \pi_i(M) \mid \bot \mid \top \mid P \lor Q \mid P \land Q \\
&\qquad\!\! 
 \mid P {\,\implies\,} Q \mid \forall x {\,:\,} \tau \ldotp P \mid \exists x {\,:\,} \tau \ldotp P \mid M =_\tau N   \mid \true \mid \false 
\mid \Tif~M~\Tthen~N_1~\Telse~N_1
\\
&\qquad\!\! 
\mid \valid(P) \mid \emp \mid P * Q \mid P \wand Q \mid M \mapsto N \mid \triple{P}{e}{Q}\mid \spec(M) \mid \nil_\tau \\
&\qquad\!\! 
\mid \cons(M, N) \mid \fold(M, N_1, N_2) 
 \mid \trace(M) \mid \hist(M) \mid \inv(M)
\end{align*}
The typing rules for the $\lambda$-calculus part are standard and have been 
omitted. Figure \ref{fig:logic-typing} includes an excerpt of some of the
more interesting typing rules.

\begin{figure}[t]
\begin{gather*}
  \inferrule[TPointsTo]{
    \Gamma \vdash M : \ty{Loc} \\
    \Gamma \vdash N : \ty{Val}
  }{
    \Gamma \vdash M \mapsto N : \ty{Prop}
  }\quad
  \inferrule[THoare]{
    \Gamma \vdash P : \ty{Prop} \\
    \Gamma \vdash M : \ty{Exp} \\
    \Gamma \vdash Q : \ty{Val} \rightarrow \ty{Prop}
  }{
    \Gamma \vdash \triple{P}{M}{Q} : \ty{Spec}
  }
\\
  \inferrule[TValid]{
    \Gamma \vdash P : \ty{Prop}
  }{
    \Gamma \vdash \valid(P) : \ty{Spec}
  }
\quad
  \inferrule[TSpec]{
    \Gamma \vdash M : \ty{Spec}
  }{
    \Gamma \vdash \spec(M) : \ty{Prop}
  }
\quad
  \inferrule[TFold]{
    \Gamma \vdash M : \ty{seq}~\tau \\\hspace{-4mm}
    \Gamma \vdash N_1 : \sigma \\\hspace{-4mm}
    \Gamma \vdash N_2 : \sigma {\times} \tau \rightarrow \sigma
  }{
    \Gamma \vdash \fold(M, N_1, N_2) : \sigma
  }\\[-9mm]
\end{gather*}
\caption{Excerpt of typing rules.}
\label{fig:logic-typing}\vspace{-2mm}
\end{figure}

The typing judgement has the form 
$\Gamma \vd M : \tau$ where $\Gamma$ is a term context associating types
with variables. Note that the typing rule for Hoare triples (\textsc{THoare})
takes an assertion $P$ as a precondition and a predicate $Q$ as a postcondition
to allow the postcondition to refer to the return value of the computation.
Since we are reasoning about a $\lambda$-calculus
(so that variables are immutable), we do not distinguish program 
and logical variables. The expression $e$ in a Hoare triple is thus typed
in the same context as the Hoare triple, allowing us to refer to
logical variables as program variables. 

\begin{figure}[t]
\begin{mathpar}
  \inferrule[Hyp]{
    \Gamma \mid \Theta, S \vdash S
  }{}
\and 
  \inferrule[Ret]{
   \Gamma, v : \ty{Val} \mid \Theta \vdash \{ \top \}~v~\{ r.~r = v \}
  }{}
\and
  \inferrule*[right=Valid]{
    \Gamma \mid \Theta \mid - \vdash P
  }{
    \Gamma \mid \Theta \vdash \valid(P)
  }
\and
  \inferrule*[right=SpecOut]{
    \Gamma \mid \Theta, S \vdash \triple{P}{e}{R}
  }{
    \Gamma \mid \Theta \vdash \triple{P * \spec(S)}{e}{R}
  }\vspace{-1mm}
\and
  \inferrule*[right=Frame]{
    \Gamma \mid \Theta \vdash \triple{P}{e}{Q}
  }{
    \Gamma \mid \Theta \vdash \triple{P * R}{e}{r.~Q(r) * R}
  }
\and
  \inferrule*[right=Bind]{
    \Gamma \mid \Theta \vdash \triple{P}{e}{x \ldotp Q} \\
    \Gamma, x : \ty{Val} \mid \Theta \vdash \triple{Q}{K{[x]}}{r \ldotp R} \\
    x \not\in FV(\Theta)
  }{
    \Gamma \mid \Theta \vdash \triple{P}{ K{[e]} }{ r \ldotp \exists x : \ty{Val} \ldotp R}
  }
\and
  \inferrule*[right=Csq]{
    \Gamma \mid \Theta \mid P_1 \vd P_2 \\
    \Gamma \mid \Theta \vdash \triple{P_2}{e}{Q_2} \\
    \Gamma, x : \ty{Val} \mid \Theta \mid Q_2(x) \vd Q_1(x)
  }{
    \Gamma \mid \Theta \vdash \triple{P_1}{e}{Q_1}
  }
\and
  \inferrule*[Right=Abs]{
    \Gamma, x : \ty{Val} \mid \Theta \vdash \triple{P}{e}{Q} \\
    x \not\in FV(\Theta)
  }{
    \Gamma \mid \Theta \vdash \triple{\emp}{\lambda x \ldotp e}{r \ldotp \forall x : \ty{Val} \ldotp \spec(\triple{P}{r x}{Q})}
  }
\and
\and
\inferrule[Write]{
    \Gamma, v, l : \ty{Val} \mid \Theta \vdash \triple{l \mapsto \_}{l := v}{r.~l \mapsto v * r = ()}
  }{}
\and
  \inferrule[PInvDupl]{
    \Gamma \mid \Theta \mid \inv(I) \vd \inv(I) * \inv(I)
  }{}
\and
\begin{gathered}
\inferrule[Read]{
    \Gamma, v, l : \ty{Val} \mid \Theta \vdash \triple{l \mapsto v}{!l}{r.~l \mapsto v * r = v}
  }{}
\quad
  \inferrule[Alloc]{
    \Gamma, v : \ty{Val} \mid \Theta \vdash \triple{\emp}{\refer v}{r \ldotp r \mapsto v }
  }{}
\end{gathered}
\and
  \inferrule[PHistDupl]{
    \Gamma \mid \Theta \mid \hist(t) \vd \hist(t) * \hist(t)
  }{}
\and
  \inferrule[PUseHist]{
    \Gamma \mid \Theta \mid \trace(t_1) * \hist(t_2) \vd \trace(t_1) * t_2 \leqprefix t_1
  }{}
\and
\begin{gathered}
  \inferrule[PAllocHist]{
    \Gamma \mid \Theta \mid \trace(t) \vd \trace(t) * \hist(t)
  }{}
\quad
  \inferrule*[right=Emit]{
    \Gamma \mid \Theta \vdash I(t \cdot v)
  }{
    \Gamma \mid \Theta \vdash \triple{ \trace(t) * \inv(I) }{\emit\, v}{ \trace(t \cdot v) }
  }
\end{gathered}
\vspace{-4mm}
\end{mathpar}
\caption{Selected specification and assertion entailments.}
\label{fig:specrules}\label{fig:elrules}\vspace{-2mm}
\end{figure}

Judgements of the assertion logic have the form $\Gamma \mid \Theta \mid P \vd Q$, where $\Gamma$ is again a term context, associating types to variables, $\Theta$ is context of assumed specifications, and $P$ and $Q$ are resource assertions.
The judgement should be interpreted as: under the assumption the specifications $\Theta$ hold, the resource assertion $P$ entails the resource assertion $Q$.
The assertion logic consists of the usual entailment rules for higher-order separation logic~\cite{Biering:2007:bi-hyp-hosl} with the following additions, 
\[
  \inferrule[PSpec]{
    \Gamma \mid \Theta, S \mid P \vd \spec(S)
  }{}
\qquad
  \inferrule[PValid]{
    \Gamma \mid \Theta, \valid(Q) \mid P \vd Q
  }{}
\]
which allow us to use the specification context in propositional entailments.

Judgements of the specification logic have the form $\Gamma \mid \Theta \vd S$, where again $\Gamma$ and $\Theta$ are term and specification contexts respectively, and $S$ is a specification
The specification logic consists of the usual entailment rules for higher-order logic, with the addition of the rules in Figure~\ref{fig:specrules}.
As a notational convention, we drop the $\lambda$ in the postcondition: $\triple{P}{e}{r \ldotp Q}$ means $\triple{P}{e}{\lambda r : \ty{Val} \ldotp Q}$, and 
$\triple{P}{e}{Q}$ stands for $\triple{P}{e}{\lambda \_ : \ty{Val} \ldotp Q}$.

\begin{example}
Consider the specification $\Phi(P_\mathit{init},\mathit{bracket})$ defined as follows:
\begin{align*}
&\exists \mathsf{inv} : \ty{Prop} \ldotp 
    \valid(P_\mathit{init} \implies \mathsf{inv}) \land {} 
     \forall P, Q : \ty{Prop} \ldotp \forall f : \ty{Val} \ldotp \\
    & \hspace{1em} \triple{\mathsf{inv} * P * \spec(\triple{P}{f()}{Q})}{ \mathit{bracket}(f) }{\mathsf{inv} * Q}
\end{align*}
This specification describes a module which requires some initial resource $P_\mathit{init}$ to establish its invariant $\mathsf{inv}$.
The assertion $P_\mathit{init} \implies \mathsf{inv}$ permits the invariant to be constructed; since this is a resource assertion, it is wrapped in $\valid$ to produce the specification which asserts that the implication holds for all resources.
The module provides one function, $\mathit{bracket}$, that is specified by the Hoare triple.
The specification states that, when $\mathit{bracket}$ is applied to a function $f$ that takes precondition $P$ to postcondition $Q$, then it will behave similarly in the presence of the invariant $\mathsf{inv}$.
The condition on the argument $f$ is specified with a nested triple in the precondition; since a triple is a $\ty{Spec}$ and the precondition must be a $\ty{Prop}$, the triple must be wrapped by $\spec$. 

As a trivial implementation, we can prove\, $\text{-}\,|\, \text{-} \vdash \Phi(\emp, \lambda f \ldotp f ())$.
A client specification $S$ may be proved against an abstract module by proving:
\[
P_\mathit{init} : \ty{Prop}, \mathit{bracket} : \ty{Val} \mid \Phi(P_\mathit{init},\mathit{bracket}) \vd S
\]
This can then be composed with the module implementation.
\qed
\end{example}

Note that the specification logic lacks an application rule for
applying an argument to a function. Instead, the abstraction rule
returns a specification for the given function, applied to an arbitrary
argument $x$. To use this specification, we instantiate
the specification with the actual function argument and pull
out the nested triple to the context using the \textsc{SpecOut}
rule.

\paragraph{Trace primitives.}
The logic includes three basic assertions, $\trace(t)$, $\hist(t)$
and $\inv(I)$, for reasoning about traces. The typing rules for these
primitive trace assertions are given below. We use $\ty{Trace}$ as 
shorthand for $\ty{seq}\,\ty{Val}$.
\[
  \inferrule*[right=TTrace]{
    \Gamma \vdash t : \ty{Trace}
  }{
    \Gamma \vdash \trace(t) : \ty{Prop}
  }
\quad
  \inferrule*[right=THist]{
    \Gamma \vdash t : \ty{Trace} 
  }{
    \Gamma \vdash \hist(t) : \ty{Prop}
  }
\quad
  \inferrule*[right=TInv]{
    \Gamma \vdash I : \ty{Trace} \rightarrow \ty{Bool}
  }{
    \Gamma \vdash \inv(I) : \ty{Prop}
  }
\]

The trace resource, 
$\trace(t)$, expresses that the trace of events
emitted so far is exactly $t$ and asserts exclusive right to emit further trace
events. The trace $t$ is represented as a finite sequence of values. Since $\trace(t)$
asserts exclusive right to emit events, it cannot be duplicated. The $\trace$ resource thus allows a single owner to reason precisely
about the current trace. 

The $\trace(t)$ resource suffices for examples where only a single resource needs
to refer to the current trace. To improve expresiveness, we use the $\hist(t)$
resource to reason about prefixes of the current trace. 
The resource $\hist(t)$ asserts that the trace $t$ is a prefix of the trace of events
emitted so far. Note that this property is preserved by emission of new events:
if $t$ is a prefix of the current trace then $t$ is also a prefix of any extension
of the current trace.
This resource is duplicable, and given the $\trace(t)$ resource we can construct a $\hist(t)$.
Moreover, if we have a $\trace(t_1)$ resource and a $\hist(t_2)$ resource, we can conclude that $t_2$ is a prefix of $t_1$.
These properties are captured in the axioms \textsc{PHistDupl}, \textsc{PAllocHist} and \textsc{PUseHist} rules of the extended logic, given in Figure~\ref{fig:elrules}.

For several of our examples we require that all traces generated
belong to a restricted language of traces. We express this formally using the invariant
resource, $\inv(I)$, which defines a trace invariant that everyone must obey. Here $I$ is 
a set of traces and $\inv(I)$ asserts that the current trace invariant is given by $I$.
Since it specifies an invariant, the $\inv(I)$ resource is duplicable (axiom \textsc{PInvDupl} in Figure~\ref{fig:elrules}).
A triple $\triple{P}{e}{Q}$ expresses
that for every initial state satisfying $P$ if $e$ executes to a terminal
state then the terminal state satisfies $Q$ and the current trace at every intermediate
state (including the initial and terminal state) satisfies the trace invariant.

To emit an event $v$ we thus require ownership of the trace resource $\trace(t)$ and that
the current trace after emitting the event satisfies the trace invariant $I$.
The proof rule \textsc{Emit} in Figure~\ref{fig:elrules} captures this.
Note that the $t \cdot v \in I$ assumption in the \textsc{Emit} rule is not strictly 
part of the syntax of our term language, but may be defined using $\fold$. The same is 
true of $\leqprefix$ and other operations on finite sequences, such as concatenation, 
length of a sequence and a subsequence operation. In subsequent sections we will use
these definable operations without further mention. We will also occasionally
need inductively defined predicates on traces. Since we are working in a higher-order
logic, such predicates are definable within the logic using the usual impredicative
Knaster-Tarski definition of least-fixed points of monotone operators. 





\section{Proving Trace Properties}\label{sec:examples}

In this section we show how to derive trace properties about client-library interactions from the library's specification.
We demonstrate our approach through
a series of increasingly complex examples, starting with the basic file
library example from the Introduction. 

The basic idea is to prove that for any library implementation satisfying
the abstract library specification, the wrapped library implementation
satisfies the same abstract specification and moreover the traces generated
by the wrapping satisfy a given invariant. This is achieved by reinterpreting
the abstract representation predicates of the specifications to additionally 
relate the abstract state with the current trace using trace assertions. 
Client programs verified against the abstract library interface can thus be 
linked with the wrapped library implementation to conclude that the traces
generated by the wrapping satisfy the given invariant. 

Theorem \ref{thm:sl-trace-prop} below formalises this idea. Here
$I_\mathit{init}$ is an initialiser operation to initialize the
internal state $P_0$ of the library. Theorem \ref{thm:sl-trace-prop}
allows us to prove that traces generated by running a client $e$
linked with a library implementation $I_{ops}$, after running the
initialiser, belong to a given language $\mathcal{L}$.
\begin{theorem}\label{thm:sl-trace-prop}
Given any specification $\Phi$, resource $P_0$, initialiser $I_\mathit{init}$,
library implementation $I_\mathit{ops}$, machine states  $s,s'$, 
trace $t$, clients $e,e'$ and language $\mathcal{L}$,
if the following conditions hold then $t \in \mathcal{L}$:
\begin{itemize}
\item
$- \vd \Phi : \ty{Prop} \times  \ty{Exp} \rightarrow \ty{Spec}$ and $- \vd P_{0} : \ty{Prop}$ 
\item
  $- \mid - \vd \forall \mathit{ops}, P.~\Phi(P,\mathit{ops}) \implies \{ P \}~e~\{ \top \}$
\item
  $- \mid - \vd \{ \top \}~I_\mathit{init}~\{ P_{0} \}$
\item
  $- \mid - \vd \Phi(P_{0} * \trace(\varepsilon) * \inv(\mathcal{L}),I_{ops})$
\item
  $(I_\mathit{init}; e[I_\mathit{ops}/\mathit{ops}]), s \stackrel{t}{\rightarrow}{\!}^*\,\, e',s' \not\rightarrow$
\end{itemize}
\end{theorem}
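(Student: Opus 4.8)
The plan is to remain inside the specification logic for as long as possible, assembling the three provability premises into a single Hoare triple for the whole composed program $I_\mathit{init}; e[I_\mathit{ops}/\mathit{ops}]$, and only at the very end to appeal to soundness (adequacy) of the logic in order to transport that triple to the concrete, trace-emitting execution of the fifth premise.

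First I would discharge the client. Instantiating the second premise with $\mathit{ops} := I_\mathit{ops}$ and $P := P_0 * \trace(\varepsilon) * \inv(\mathcal{L})$ produces the implication $\Phi(P_0 * \trace(\varepsilon) * \inv(\mathcal{L}), I_\mathit{ops}) \implies \{P_0 * \trace(\varepsilon) * \inv(\mathcal{L})\}\, e[I_\mathit{ops}/\mathit{ops}]\, \{\top\}$, the substitution arising because $e$ mentions the library variable $\mathit{ops}$; note that $P$ contains no $\mathit{ops}$, so the precondition is unaffected, and the first premise guarantees the instantiation is well-typed. The fourth premise is exactly the antecedent, so modus ponens in the specification logic yields $\{P_0 * \trace(\varepsilon) * \inv(\mathcal{L})\}\, e[I_\mathit{ops}/\mathit{ops}]\, \{\top\}$.

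Next I would prepend the initialiser. Applying \textsc{Frame} to the third premise $\{\top\}\, I_\mathit{init}\, \{P_0\}$ with frame $\trace(\varepsilon) * \inv(\mathcal{L})$, and tidying the precondition $\top * (\trace(\varepsilon) * \inv(\mathcal{L}))$ down to $\trace(\varepsilon) * \inv(\mathcal{L})$ by \textsc{Csq} (using $R \vdash \top * R$), gives $\{\trace(\varepsilon) * \inv(\mathcal{L})\}\, I_\mathit{init}\, \{P_0 * \trace(\varepsilon) * \inv(\mathcal{L})\}$. Its postcondition is syntactically the client triple's precondition, so the sequencing instance of \textsc{Bind} composes the two into the single derivable specification $\{\trace(\varepsilon) * \inv(\mathcal{L})\}\, (I_\mathit{init}; e[I_\mathit{ops}/\mathit{ops}])\, \{\top\}$, with empty term and specification contexts.

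The final, genuinely load-bearing step is soundness: a provable triple $\{\trace(\varepsilon) * \inv(\mathcal{L})\}\, c\, \{\top\}$ must entail that every execution of $c$ starting from a state whose emitted trace is $\varepsilon$ and whose trace invariant is $\mathcal{L}$ keeps the emitted trace within $\mathcal{L}$ at every intermediate configuration, terminal ones included. Taking $c = I_\mathit{init}; e[I_\mathit{ops}/\mathit{ops}]$, the initial state $s$ of the fifth premise satisfies $\trace(\varepsilon) * \inv(\mathcal{L})$ --- these resources constrain only the (initially empty) trace and the chosen invariant, not the heap --- so the run $s \xrightarrow{t}{}^{*} e', s'$ is one of the executions covered by the guarantee; its terminal trace is precisely $t$, whence $t \in \mathcal{L}$. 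I expect this adequacy result to be the main obstacle and to be where \S6 does its real work: it is the place where the syntactic $\trace$/$\inv$ resources are tied to the label-emitting operational semantics of \S2, and where the clause ``every intermediate trace satisfies the invariant'' in the meaning of triples is actually cashed out. Everything before it is routine rule-chasing; the only side condition worth tracking is that $\trace(\varepsilon) * \inv(\mathcal{L})$ be satisfiable, i.e.\ $\varepsilon \in \mathcal{L}$, since otherwise the assembled precondition is unsatisfiable and $s$ could not meet it, making the conclusion vacuous.
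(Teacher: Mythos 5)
Your proposal is correct and is essentially the paper's own proof: the paper discharges this theorem in a single line (``Follows from soundness of the logic'', i.e.\ Corollary~\ref{lem:el-multistep-exec} in \S\ref{sec:semantics}), and the specification-logic assembly you spell out --- instantiating the client premise at $I_\mathit{ops}$ and $P_0 * \trace(\varepsilon) * \inv(\mathcal{L})$, applying modus ponens with the fourth premise, framing the initialiser triple, sequencing via \textsc{Bind}, and finally invoking the adequacy corollary, whose conclusion $(\varepsilon \cdot t), s' \vDash_{w'} m'$ together with $\pi_2(w') = \pi_2(w) = \mathcal{L}$ (worlds cannot change the invariant) gives $t \in \mathcal{L}$ --- is exactly the routine work the paper elides. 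One refinement to your closing remark: the condition $\varepsilon \in \mathcal{L}$ is needed not merely to avoid a vacuous guarantee but for the statement to be true at all, since with $\mathcal{L} = \emptyset$, $\Phi \equiv \lambda(P,\mathit{ops}) \ldotp \valid(\top)$, $P_0 = \top$ and $e = I_\mathit{init} = I_\mathit{ops} = ()$ every premise is derivable yet the execution produces $t = \varepsilon \notin \mathcal{L}$; so $\varepsilon \in \mathcal{L}$ is a genuinely missing hypothesis of the theorem (satisfied by all of the paper's prefix-closed example languages), and it is precisely what lets one construct the initial $w = (\gamma, \mathcal{L})$ and $m = (h, (\topTrace, \varepsilon))$ with $\varepsilon, s \vDash_w m$ that the corollary requires.
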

\begin{proof}
Follows from soundness of the logic (Lemma \ref{lem:el-multistep-exec}
in \S\ref{sec:semantics}).
\qed
\end{proof}
The above theorem requires that we provide a library implementation
that satisfies the abstract specification $\Phi$ and generates
traces in the language $\mathcal{L}$. 
To use this theorem to derive a trace
property from the separation logic specification,\linebreak 
the idea is to define a suitable
wrapper function $\wrap$ for the library in question\linebreak 
and prove that if a library
implementation $M$ satisfies $\Phi$ then so does the wrapped\!\linebreak 
version $\wrap(M)$
\emph{and}, additionally, the wrapped version generates traces in the $\mathcal{L}$\!\linebreak 
language:
  $P_0 : \ty{Prop}, \mathit{ops} : \ty{Exp} \mid \Phi(P_0, \mathit{ops}) \vdel
                                           \Phi(P_0 * \trace(\varepsilon) * \inv(\mathcal{L}), \wrap(\mathit{ops}))$.

\subsection{File Library}\label{ex:file-lib}

To illustrate the idea, we begin by recalling the file library example from the Introduction.
We prove that the associated separation logic specification enforces that clients
verified against the specification only close and read when the file is open. 
To capture this property, we define a wrapper function $\wrapfile$ that 
instruments an implementation of the file library to emit events about calls to
\tty{open}, \tty{close} and \tty{read}. Formally, we take a file library to be
a triple consisting of an \tty{open}, a \tty{close} and a
\tty{read} function.
\begin{multline*}
\wrapfile \eqdef \lambda (\mathit{open}, \mathit{close}, \mathit{read}) \ldotp \\
  \big(  \lambda \_ \ldotp \mathit{open}();\TEMPemit\,\tty{open}, \ \lambda \_ \ldotp \mathit{close}();\TEMPemit\,\tty{close}, \ 
  \lambda \_ \ldotp \mathit{read}();\TEMPemit\,\tty{read} \big)
\end{multline*}
We can now formalize the protocol as constraints on the traces generated by linking
an instrumented file library with a client. In particular, we require that traces
belong to the language $\Lfile$ of all strings $t \in \Sigma^*$ such that $t$ is a
valid file trace, $\validfilep(t)$, where
$\Sigma = \{ \tty{open}, \tty{close}, \tty{read} \}$ and $\validfilep$
is defined as: 
\begin{align*}
\validfilep(t) &\eqdef
  \forall n \ldotp 
    t[n] = \tty{read} \lor t[n] = \tty{close} \implies \isopenp(t, n) \\
\noclosep(t, n, m) &\eqdef
  \forall k \ldotp n < k < m \implies t[k] \neq \tty{close} \\
\isopenp(t, n) &\eqdef
  \exists m < n \ldotp t[m] = \tty{open} \land \noclosep(t, m, n)
\end{align*}
The valid file trace predicate, $\validfilep(t)$, expresses that
the trace $t$ only contains read and close events when the file is open. 
Recall the SL specification of the file library from the Introduction, here
written more formally.\vspace{-.5mm}
\begin{multline*}
\Phifile(P_{0}, (\mathit{open}, \mathit{close}, \mathit{read})) \eqdef \exists \openp, \closedp\! :\! \ty{Prop} \ldotp
  \valid(P_{0} {\implies} \closedp) \land {} \\
  \triple{\closedp}{\mathit{open}()}{\openp} \land 
    \triple{\openp}{\mathit{close}()}{\closedp} \land 
 \triple{\openp}{\mathit{read}()}{\openp}
\end{multline*}
To prove the specification enforces the intended protocol, we proceed by proving 
that the wrapping preserves satisfaction of the specification and generates traces in $\Lfile$.
\begin{lemma}
$P_0, \mathit{ops} \mid \Phifile(P_0, \mathit{ops}) \vdel
\Phifile(P_0 * \trace(\varepsilon) * \inv(\Lfile), \wrapfile(\mathit{ops}))$.
\end{lemma}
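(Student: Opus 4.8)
The plan is to strip the existential from the hypothesis and rebuild it with trace-aware predicates. Unpacking the assumption $\Phifile(P_0, \mathit{ops})$ yields witnesses $\openp, \closedp : \ty{Prop}$ together with the initialization implication $\valid(P_0 \implies \closedp)$ and the three triples $\triple{\closedp}{\mathit{open}()}{\openp}$, $\triple{\openp}{\mathit{close}()}{\closedp}$, $\triple{\openp}{\mathit{read}()}{\openp}$ (which I add to the specification context). I would then discharge the goal by instantiating its existential with the wrapped predicates
\[
\begin{aligned}
  \openp' &\eqdef \openp * \inv(\Lfile) * \exists t \ldotp \trace(t) \land t \in \Lfile \land \isopenp(t, |t|), \\
  \closedp' &\eqdef \closedp * \inv(\Lfile) * \exists t \ldotp \trace(t) \land t \in \Lfile,
\end{aligned}
\]
so that $\openp'$ additionally records that the emitted trace is valid and currently leaves the file open (encoded by $\isopenp(t,|t|)$), while $\closedp'$ records only trace validity. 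These mirror the $\mathsf{stack}'$ construction of the motivating example.

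For the initialization conjunct I would derive $\valid\bigl((P_0 * \trace(\varepsilon) * \inv(\Lfile)) \implies \closedp'\bigr)$ from the given $\valid(P_0 \implies \closedp)$, using validity of $P_0 \implies \closedp$ together with monotonicity of $*$, and witnessing the existential with the empty trace; this needs $\varepsilon \in \Lfile$, which holds since $\validfilep(\varepsilon)$ is vacuous. Each of the three triples I would then discharge by straight-line symbolic execution of the wrapped body. For open: first apply the unwrapped triple $\triple{\closedp}{\mathit{open}()}{\openp}$, framing $\inv(\Lfile)$ and the $\trace$-block with \textsc{Frame}; then fire \textsc{Emit} on $\emit\,\tty{open}$. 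Since $\inv(\Lfile)$ is duplicable (\textsc{PInvDupl}) it survives the emit, and \textsc{Emit} advances $\trace(t)$ to $\trace(t \cdot \tty{open})$, leaving the side condition $t \cdot \tty{open} \in \Lfile$ and the obligation to re-establish $\openp'$ with witness $t \cdot \tty{open}$. The close and read cases are analogous, each consuming the $\isopenp(t,|t|)$ fact carried by $\openp'$ to discharge its \textsc{Emit} side condition, and (for read) re-establishing that fact at the new length.

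The bulk of the remaining work, and the place requiring care, is the trace arithmetic behind the \textsc{Emit} side conditions and the re-establishment of the predicates. Concretely I would prove three small closure facts about $\validfilep$, $\isopenp$ and $\noclosep$: (i) if $t \in \Lfile$ then $t \cdot \tty{open} \in \Lfile$ and $\isopenp(t \cdot \tty{open}, |t|+1)$, where the fresh open at index $|t|$ is the witness and its $\noclosep$ interval is vacuous; (ii) if $t \in \Lfile$ and $\isopenp(t,|t|)$ then $t \cdot \tty{read} \in \Lfile$ and $\isopenp(t \cdot \tty{read}, |t|+1)$; and (iii) if $t \in \Lfile$ and $\isopenp(t,|t|)$ then $t \cdot \tty{close} \in \Lfile$. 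The key observations are that appending one event disturbs no index below $|t|$, so membership and $\isopenp$ facts at earlier positions transfer verbatim, and that the only genuinely new obligation (for read and close) sits at index $|t|$ and is exactly $\isopenp(\cdot, |t|)$, which follows from the $\isopenp(t,|t|)$ conjunct of $\openp'$ because the open witness and its close-free suffix are preserved. These lemmas are routine but index-sensitive; the main obstacle is simply getting the off-by-one bookkeeping in $\isopenp$ and $\noclosep$ to line up precisely with what each \textsc{Emit} demands.
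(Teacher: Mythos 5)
Your proposal matches the paper's proof essentially step for step: the same wrapped predicates ($\closedp$ augmented with $\trace(t) * \inv(\Lfile)$ for some $t \in \Lfile$, and $\openp$ additionally carrying a ``currently open'' fact), the same trivial initialization argument via $\varepsilon \in \Lfile$, and the same per-operation symbolic execution in which the underlying triple is applied with the trace resources framed off and each \textsc{Emit} is discharged by a closure property of $\Lfile$ (open can always be appended; read and close require the carried $\isopenp$ fact). The only difference is cosmetic: you index traces from $0$ and write $\isopenp(t,|t|)$ where the paper writes $\isopenp(t,|t|+1)$ under $1$-based indexing --- both encode the same condition, and your off-by-one bookkeeping is internally consistent.
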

\begin{proof}[sketch]
We first need to define new wrapped versions of the abstract representation predicates, which
relate the abstract resources to the current trace. The idea is that the wrapped $\openp$ resource,
$\openpw$, should express that the current trace $t$ is in $\Lfile$ and that the trace $t$ is open.
Likewise, the wrapped $\closedp$ resource, $\closedpw$ should simply express that the current trace
is in $\Lfile$. 
\begin{align*}
\openpw &\eqdef \openp * \exists t \in \Lfile \ldotp \trace(t) * \inv(\Lfile) * \isopenp(t, |t|+1) \\[0mm]
\closedpw &\eqdef \closedp * \exists t \in \Lfile \ldotp \trace(t) * \inv(\Lfile)
\end{align*}
We need to prove that $P_0 * \trace(\varepsilon) * \inv(\Lfile) \implies \closedpw$ assuming
\hbox{$P_0 \implies \closedp$}, which follows trivially from the definition of $\closedpw$. It remains
to prove that the wrapped methods satisfy their specifications. Below we give a proof outline for
the wrapped \tty{open} method.
\begin{source}
$\{ \closedpw \}$
$\{ \closedp * \exists t \in \Lfile \ldotp \trace(t) * \inv(\Lfile) \}$
open();
$\{ \openp * \exists t \in \Lfile \ldotp \trace(t) * \inv(\Lfile) \}$
emit open;
$\{ \openp * \exists t \in \Lfile \ldotp \trace(t) * \inv(\Lfile) * \isopenp(t, |t|+1) \}$
$\{ \openpw \}$
\end{source}
Here we use the assumed specification of the underlying \tty{open} method 
to verify the call to open and we used the following property of $\Lfile$ and $\isopenp$ to
prove that emitting \tty{open} would result in a trace in $\Lfile$ that was open.
\begin{align*}
\forall t \in \Lfile  \ldotp t \cdot \tty{open} \in \Lfile \land \isopenp(t \cdot \tty{open}, |t \cdot \tty{open}|+1)
\end{align*}
The proof outlines for the \tty{close} and \tty{read} operations are similar, but use
the following property to justify the emits.
\[
  \forall t  \in \Lfile\ldotp \isopenp(t, |t|+1)   \implies
 t \cdot \tty{close} \in \Lfile \land t \cdot \tty{read} \in \Lfile \qedhere
\]
\end{proof}

\subsection{Iterators on Collections}\label{sec:iterator-example}

We consider a collections library that provides methods for modifying a collection as well as iterating over it.
To ensure a well-defined semantics for iterators, we require the following property:
\begin{quote}\em
An iterator over a collection should only be used if the underlying
collection has not been destructively modified since the iterator was created.
\end{quote}
This is a trace property of the interaction between the collections library
and clients.
To capture it formally, we first define a suitable wrapper for the library that produces appropriate trace events.
We take a
collections library to be a tuple consisting of five operations: \tty{size},
\tty{add}, \tty{remove}, \tty{iterator} and \tty{next}. The \tty{size}
operation is non-destructive
and returns the size of the given collection. The \tty{add} and
\tty{remove} operations destructively modify the given collection by adding or
removing an element from the collection. Finally, \tty{iterator} returns a new iterator
for the collection, while \tty{next} returns the next element of a given iterator.

The instrumentation is fairly straightforward and simply emits a suitable event
indicating the operation called and the argument and/or return value of the given
operation, when relevant:
\begin{multline*}
\wrapcoll \eqdef \lambda  (\mathit{size},  \mathit{add}, \mathit{remove}, \mathit{iter}, \mathit{next}) \ldotp \\
\left( \begin{array}{@{}l@{}}
    \lambda y\ldotp\TEMPletin{r = \mathit{size}(y)}{\TEMPemit\,\tty{size};r}, \ 
    \lambda y\ldotp\mathit{add}(y);\TEMPemit\,\tty{add}, \\
    \lambda y\ldotp\mathit{remove}(y);\TEMPemit\,\tty{remove}, \ 
    \lambda \_\ldotp\TEMPletin{r = \mathit{iter}()}{\TEMPemit \pair{\tty{iterator}}{r};r}, \\
    \lambda y\ldotp\TEMPletin{r = \mathit{next}(y)}{\TEMPemit \pair{\tty{next}}{y};r}
  \end{array} \right)
\end{multline*}
The traces ignore the arguments to \tty{add} and \tty{remove} and the return
values of \tty{size} and \tty{next}, as they are irrelevant for the protocol. 

With this instrumentation we can now express the informal protocol as a language
of permissible interaction traces between the client and library.
We let the trace alphabet be the countable set:
\[
\Sigma = \{ \tty{size},\tty{add},\tty{remove}\}\cup\{\pair{\tty{next}}{\ell},\pair{\tty{iterator}}{\ell}\mid \ell\in\mathit{Loc}\}
\]
The language $\Lcoll$ of safe behaviours contains all strings $t\in\Sigma^*$ such that, for all $1\leq i\leq|t|$:
\begin{quote}
if $t[i]=\pair{\tty{next}}{\ell}$ then there is $j<i$ such that $t[j]=\pair{\tty{iterator}}{\ell}$ and, for all $j<k<i$, $t[k]\notin\{\,\tty{add},\tty{remove}\,\}$. 
\end{quote}
That is, every call to the $\tty{next}$ method of an iterator $\ell$ must be preceded by a call 
to $\tty{iterator}()$ which returns $\ell$. In addition, there should be no modification of the collection 
between those two events.\!

To enforce this protocol we use a cut-down version of the iterator specification in \cite{Krishnaswami:2009:dp-sl}
that does not track the contents of the underlying collections.\vspace{-.75mm}
\begin{align*}
  \Phicoll(P_{0}, &(size, add, remove, iterator, next)) \eqdef \\
&\;
  \exists \collp : \ty{Val} \rightarrow \ty{Prop} \ldotp \exists \iterp : \ty{Val} \times \ty{Val} \rightarrow \ty{Prop}.    \valid(P_{0} \implies \exists c : \ty{Val}.~\collp(c))~\land \\
  &\hspace{1.2em}
  \begin{aligned}[t]
  &
    \forall c : \ty{Val} \ldotp \triple{\collp(c)}{size()}{\collp(c)}~\land \\
  &
    \forall c, x : \ty{Val} \ldotp \triple{\collp(c)}{add(x)}{\exists c' : \ty{Val} \ldotp \collp(c')}~\land \\
  &
    \forall c : \ty{Val} \ldotp \triple{\collp(c)}{remove(x)}{\exists c' : \ty{Val} \ldotp \collp(c')}~\land \\
  &
    \forall c : \ty{Val} \ldotp \triple{\collp(c)}{iterator()}{r.~\collp(c) * \iterp(r, c)}~\land \\
  &
    \forall c, x : \ty{Val} \ldotp \triple{\collp(c) * \iterp(x, c)}{next(x)}{\collp(c) * \iterp(x, c)}
  \end{aligned}
\end{align*}
The specification introduces two types of resources, $\collp$ and $\iterp$, to formally capture the
protocol. The $\collp(c)$ resource is indexed by an abstract version number $c$ while the $\iterp(p, c)$
resource expresses that $p$ is an iterator and the version number of the underlying collection was $c$
when the iterator was created. The operations that destructively update the collection consume a 
$\collp(c)$ resource and produce a $\collp(c')$ resource for an existentially quantified version
number $c'$. As a result, after
a destructive update, we should no longer be able to satisfy the precondition of \tty{next}, as it requires 
ownership of a $\collp$ resource and an $\iterp$ resource with a common version number $c$.

To establish this intuition formally we prove that, for an arbitrary library implementation $M$ that 
satisfies the collections specification, the wrapped library implementation $\wrapcoll(M)$ also
satisfies the collections specification \emph{and} the traces generated by the wrapped implementation
are in the language $\Lcoll$. 
\begin{lemma}
$P_{0}, \mathit{ops} \mid \Phicoll(P_{0}, \mathit{ops}) \vdel
\Phicoll(P_{0} * \trace(\varepsilon) * \inv(\Lcoll), \wrapcoll(\mathit{ops}))$.
\end{lemma}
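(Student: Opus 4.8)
The plan is to mirror the file-library lemma: \emph{reinterpret} the two abstract predicates by tupling the underlying resources with trace assertions, and then discharge each wrapped Hoare triple using the corresponding underlying triple together with the \textsc{Emit} rule. Since we are proving the right-hand side of the entailment, we get to \emph{choose} the existential witnesses $\collpw : \ty{Val} \to \ty{Prop}$ and $\iterpw : \ty{Val} \times \ty{Val} \to \ty{Prop}$ for $\Phicoll(P_0 * \trace(\varepsilon) * \inv(\Lcoll), \wrapcoll(\mathit{ops}))$, while the hypothesis $\Phicoll(P_0, \mathit{ops})$ supplies underlying witnesses $\collp, \iterp$ and the five underlying triples. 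I would set, roughly,
\[
  \collpw(c) \eqdef \collp(c) * \inv(\Lcoll) * \exists t \ldotp \trace(t) \land t \in \Lcoll \land \ldots,
\]
so that the (unique, non-duplicable) $\trace$ resource lives inside the collection predicate, and
\[
  \iterpw(p, c) \eqdef \iterp(p, c) * \exists s \ldotp \hist(s) \land s \text{ ends in } \langle\tty{iterator}, p\rangle \land \ldots,
\]
so that the duplicable $\hist$ resource records the creation event of the iterator $p$. The elided conjuncts must encode that $p$ was created \emph{within the current version's era}, i.e.\ after the last $\tty{add}/\tty{remove}$, and getting these right is the delicate part (see below).

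Next I would establish the initialisation obligation $\valid\bigl((P_0 * \trace(\varepsilon) * \inv(\Lcoll)) \implies \exists c \ldotp \collpw(c)\bigr)$, which follows from the assumed $P_0 \implies \exists c \ldotp \collp(c)$ together with $\varepsilon \in \Lcoll$. I would then discharge the five wrapped triples in turn. The $\tty{size}$, $\tty{add}$, $\tty{remove}$ and $\tty{iterator}$ cases are routine: in each I use the underlying triple and then apply \textsc{Emit}, appealing to the fact that appending $\tty{size}$, $\tty{add}$, $\tty{remove}$ or $\langle\tty{iterator}, \ell\rangle$ to any $t \in \Lcoll$ yields a trace still in $\Lcoll$ (none of these events carries a precondition in the definition of $\Lcoll$). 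For $\tty{add}$ and $\tty{remove}$ I additionally re-establish $\collpw$ at the fresh version $c'$ returned by the underlying existential postcondition, recording that the last modification now sits at the end of the trace; for $\tty{iterator}$ I build $\iterpw(r, c)$ by taking, via \textsc{PAllocHist}, a $\hist$ of the current trace extended with the emitted $\langle\tty{iterator}, r\rangle$.

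The crux is the $\tty{next}$ case, whose precondition $\collpw(c) * \iterpw(p, c)$ pairs the \emph{current} collection at version $c$ with an iterator claimed to be from the same version $c$. From $\collpw(c)$ I obtain $\trace(t)$ and from $\iterpw(p, c)$ a $\hist(s)$; \textsc{PUseHist} then gives $s \leqprefix t$, so $t[\,|s|\,] = \langle\tty{iterator}, p\rangle$. To conclude $t \cdot \langle\tty{next}, p\rangle \in \Lcoll$ I must show that no $\tty{add}/\tty{remove}$ occurs in $t$ after position $|s|$. This coupling between the abstract version number and the trace is \textbf{the main obstacle}: the iterator resource persists unchanged across modifications, whereas every $\tty{add}/\tty{remove}$ hands back a \emph{fresh}, existentially quantified version and consumes the old $\collp(c)$, so at the level of the proof possessing $\collpw(c) * \iterpw(p, c)$ with a common $c$ is only derivable when no modification has intervened; but I must \emph{reflect} this resource-level fact into the elided trace conjuncts so that it becomes an ordinary entailment. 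The clean way to do so is to have $\collpw(c)$ carry a birth-marker (a $\hist(b)$ with no $\tty{add}/\tty{remove}$ in $t$ after $b$) and to have $\iterpw(p, c)$ assert that $p$'s creation event lies at or after that same marker, maintaining as an invariant of the construction that a version's birth-marker is pinned down by $c$; combining these yields $|b| \le |s|$ and hence no modification in positions $(|s|, |t|]$. Checking that this invariant is preserved by the $\tty{add}/\tty{remove}$ cases (where the marker advances) and by $\tty{iterator}$ (where it is copied into the new iterator) is the technical heart of the argument, analogous to --- but more subtle than --- the $\isopenp$ bookkeeping in the file-library lemma.
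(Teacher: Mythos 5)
Your overall scaffolding matches the paper's proof: you reinterpret $\collp$ and $\iterp$ by conjoining trace assertions (the exclusive $\trace$ and $\inv$ inside the collection predicate, a duplicable $\hist$ inside the iterator predicate), you discharge initialisation from $P_0 \implies \exists c \ldotp \collp(c)$, you verify the first four operations by the underlying triples plus \textsc{Emit} and \textsc{PAllocHist}, and you isolate $\tty{next}$ as the crux, to be resolved via \textsc{PUseHist}. You also correctly identify the remaining difficulty: tying the iterator's creation point to the collection's current era. But the mechanism you propose for that step fails. You keep the wrapped version parameter equal to the underlying one ($\collpw(c) \eqdef \collp(c) * \cdots$, $\iterpw(p,c) \eqdef \iterp(p,c) * \cdots$) and then ask for ``an invariant of the construction that a version's birth-marker is pinned down by $c$.'' No such invariant is expressible or maintainable: the underlying version numbers are chosen by the library --- they are existentially quantified in the postconditions of $\tty{add}$ and $\tty{remove}$ --- so the library may legitimately reuse them. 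Concretely, an implementation may satisfy $\{\collp(c)\}~\mathit{add}(x)~\{\collp(c)\}$, witnessing $c' = c$. Then: create an iterator $p$ under version $c$, obtaining $\iterpw(p,c)$; call the wrapped $\tty{add}$, after which the only available witness for the wrapped postcondition is again $c$, now with an advanced birth marker; you hold $\collpw(c) * \iterpw(p,c)$ with \emph{matching} parameters while the trace contains an $\tty{add}$ after $\langle\tty{iterator},p\rangle$. The wrapped $\tty{next}$ triple must hold in this state by the specification, yet its body emits $\langle\tty{next},p\rangle$ and $t \cdot \langle\tty{next},p\rangle \notin \Lcoll$, so the triple is not merely unprovable (the \textsc{Emit} premise fails) but semantically false. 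Since the markers inside your two predicates are independently existentially quantified, no entailment can recover your desired $|b| \le |s|$.

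The paper closes exactly this gap by changing what the wrapped predicates are indexed by: the wrapped version parameter is a \emph{pair} $x = (y, n)$, where $y$ is the underlying version and $n$ is a trace index chosen by the wrapper proof (a position after which no $\tty{add}$/$\tty{remove}$ occurs), with $\collpw(x)$ asserting $\tty{add},\tty{remove} \notin t[n+1..]$ for its $\trace(t)$, and $\iterpw(r,x)$ asserting $\langle\tty{iterator},r\rangle \in t'[n+1..]$ for its $\hist(t')$. Because $n$ is a component of the parameter rather than an existential inside the predicates, the specification's requirement that $\tty{next}$'s precondition use a common parameter forces the two indices to coincide; \textsc{PUseHist} then yields $\langle\tty{iterator},r\rangle \in t[n+1..]$ together with the absence of later modifications, which is exactly the key property of $\Lcoll$ needed to justify emitting $\langle\tty{next},r\rangle$. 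Crucially, in the $\tty{add}$/$\tty{remove}$ cases the proof witnesses the new parameter as $(x', |t\cdot\tty{add}|)$, so the index component strictly advances even when the library reuses $y$ --- precisely what defeats the counterexample above. With this repackaging of the marker into the version parameter, the rest of your outline goes through as written.
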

\begin{proof}[sketch]
To prove that the wrapped implementation satisfies the collections specification and produces
traces in $\Lcoll$, we first need to define new wrapped versions of the $\collp$ and $\iterp$
resources that relate the abstract version number to the trace state. 
The idea is that the collection parameter of the wrapped $\collp$ resource will consist of a pair $(c,n)$, where the $c$ component is the parameter of the underlying $\collp$ resource and $n$ is the index in the trace of the last \tty{add} or \tty{remove} event. 
We want the wrapped $\collp((c,n))$ resource to assert that there are no \tty{add} or 
\tty{remove} events in the current trace after the $n$-th element. Likewise, the
wrapped $\iterp(r, (c,n))$ resource should assert that there \emph{is} an \tty{iterator} event for 
iterator $r$ in the current trace after the $n$-th element. Hence, if we own
both $\collp((c,n))$ and $\iterp(r, (c,n))$ then we know that no 
\tty{add} or \tty{remove} events were emitted since the iterator $r$ was created.

Let $\collp$ and $\iterp$ denote the non-wrapped representation
predicates that exist by the $\Phicoll(P_{0}, ops)$ assumption and
define the wrapped representation predicates $\collpw$ and $\iterpw$
as follows:
\begin{align*}
\collpw(x) &\eqdef 
  \begin{aligned}[t]
  &
    \exists y : \ty{Val} \ldotp \exists n : \ty{Nat} \ldotp x = (y, n) * \collp(y)*{} \\[-.75mm]
  &\quad
    \exists t \in \Lcoll \ldotp \trace(t) * \inv(\Lcoll) * \texttt{add}, \texttt{remove} \not\in t[n+1..] * n \leq |t|
  \end{aligned} \displaybreak[1]\\[-.5mm]
\iterpw(r, x) &\eqdef
  \begin{aligned}[t]
  &
    \exists y : \ty{Val} \ldotp \exists n : \ty{Nat} \ldotp x = (y, n) * \iterp(y)~*\\[-.75mm]
  &\quad
    \exists t \in \Lcoll \ldotp \hist(t) * \langle \texttt{iterator}, r \rangle \in t[n+1..]
  \end{aligned}
\end{align*}
We use $t[n..]$ as notation for the subtrace of $t$ starting from the
$n$-th element.

It thus remains to show that the wrapped library satisfies the collections specification. First,
we need to prove that we obtain a wrapped collection resource from the initial resources:
$P_{0} * \inv(\Lcoll) * \trace(\varepsilon) \implies \exists c : \ty{Val} \ldotp \collpw(c)$.
This follows easily from the $P_{0} \implies \exists c' : \ty{Val} \ldotp \collp(c')$ assumption
by taking the second component of $c$ to be 0.

Next, we have to show that each of the wrapped operations satisfies the corresponding Hoare
specification. The \tty{size} method is particularly simple, as any trace $t \in \Lcoll$ can
trivially be extended with a size event $t \cdot \texttt{size} \in \Lcoll$. We will thus skip
the \tty{size} method.
The \texttt{add} method is more interesting, as we have to update the index into the trace
for the last \texttt{add} event. Below is a proof outline for the wrapped \tty{add} method
applied to argument \tty{z}.
\begin{source}
[Context $c, \tty{z} : \ty{Val}$]
$\{ \collpw(c) \}$
$\{ \exists x, n, t \ldotp c = (x, n) * \collp(x) * \trace(t) * \inv(\Lcoll) * t\in\Lcoll$
  $*~\texttt{add}, \texttt{remove} \not\in t[n+1..] * n \leq |t| \}$
add(z);
$\{ \exists x', n, t \ldotp \collp(x') * \trace(t) * \inv(\Lcoll) * t\in\Lcoll
  *\texttt{add}, \texttt{remove} \not\in t[n+1..] * n \leq |t| \}$
emit add;
$\{ \exists x', n, t \ldotp \collp(x') * \trace(t \cdot \tty{add}) * \inv(\Lcoll) * t\in\Lcoll$
  $*~\texttt{add}, \texttt{remove} \not\in t[n+1..] * n \leq |t| \}$
$\{ \exists c' : \ty{Val} \ldotp \collpw(c') \}$
\end{source}
This leaves us with two proof obligations: firstly, we have to show that we are allowed to
emit the \tty{add} event (i.e., that $t \cdot \tty{add} \in \Lcoll$); and secondly for the
last step we have to show that:\vspace{-1mm}
\[
  \forall x', n, t \ldotp \left( \begin{array}{@{}l@{}}
    \collp(x') * \trace(t \cdot \tty{add}) * \inv(\Lcoll) * t\in\Lcoll \\ {} * n \leq |t|
    * \texttt{add}, \texttt{remove} \not\in t[n+1..] \end{array} \right)
\implies \exists c' : \ty{Val} \ldotp \collpw(c')
\]
This follows easily by taking $c'$ to be $(x', |t \cdot \tty{add}|)$, as $t \cdot \tty{add}$ contains
no \tty{add} or \tty{remove} events after the $|t \cdot \tty{add}|$-th element. 

The proof for \tty{remove} follows the same structure as for \tty{add}. 
For the \tty{iterator} method, we emit an \tty{iterator} event and create a new
$\hist$ resource to record the trace at the time of the creation of the iterator.
Below we give a proof outline for the \tty{iterator} method:\vspace{-1mm}
\begin{source}
[Context $c : \ty{Val}$]
$\{ \collpw(c) \}$
$\{ \exists x, n, t \ldotp c = (x, n) * \collp(x) * \trace(t) * \inv(\Lcoll) * t\in\Lcoll$
  $*~\texttt{add}, \texttt{remove} \not\in t[n+1..] * n \leq |t| \}$
let r = iterator() in
$\{ \exists x, n, t \ldotp c = (x, n) * \collp(x) * \iterp(\tty{r}, x) * \trace(t)$
  $*~\inv(\Lcoll) * t\in\Lcoll * \texttt{add}, \texttt{remove} \not\in t[n+1..] * n \leq |t| \}$
emit<<iterator, r>>;
$\{ \exists x, n, t \ldotp c = (x, n) * \collp(x) * \iterp(\tty{r}, x) * \trace(t \cdot \langle \tty{iterator}, \tty{r} \rangle)$
  $*~\inv(\Lcoll) * t\in\Lcoll * \texttt{add}, \texttt{remove} \not\in t[n+1..] * n \leq |t| \}$
$\{ \collpw(c) * \iterpw({\tty r}, c) \}$
r
$\{ r \ldotp \collpw(c) * \iterpw(r, c) \}$
\end{source}\vspace{-.5mm}
As before, we are left with two proof obligations: $t \cdot \langle \tty{iterator}, \tty{r} \rangle \in \Lcoll$ and:\vspace{-1mm}
\begin{align*}
\forall x, n, t \ldotp (
  \begin{aligned}[t]
  &
    c = (x, n) * \collp(x) * \iterp(\tty{r}, x) * \trace(t \cdot \langle \tty{iterator}, \tty{r} \rangle) *\inv(\Lcoll)  \\
  &{}
     * t\in\Lcoll * \texttt{add}, \texttt{remove} \not\in t[n+1..] * n \leq |t|) 
  \implies \collpw(c) * \iterpw(\tty{r}, c)
  \end{aligned}
\end{align*}
To discharge this last proof obligation, we use the \textsc{PAllocHist} rule to introduce a 
history resource $\hist(t \cdot \langle \tty{iterator}, \tty{r} \rangle)$ and since $n \leq |t|$
it follows that $\langle \tty{iterator}, \tty{r} \rangle \in (t \cdot \langle \tty{iterator, \tty{r} \rangle})[n+1..]$,
as required by $\iterpw(\tty{r}, c)$. 

We are left with \tty{next}, which is the most interesting case as it
requires us to prove the iterator we are trying to use is still valid. We give a
proof outline for the \tty{next} method applied to an argument $\tty{x}$:\vspace{-1mm}
\begin{source}
[Context $c,\tty{x} : \ty{Val}$]
$\{ \collpw(c) * \iterpw(\tty{x}, c) \}$
$\{ \exists y, n, t, t' \ldotp c = (y, n) * \collp(y) * \iterp(\tty{x}, c) * \trace(t) * \hist(t') * \inv(\Lcoll) $
  ${} * t,t'\in\Lcoll*\texttt{add}, \texttt{remove} \not\in t[n+1..] * n \leq |t| * \langle \texttt{iterator}, \texttt{x} \rangle \in t'[n+1..] \}$
let r = next(x);
$\{ \exists y, n, t, t' \ldotp c = (y, n) * \collp(y) * \iterp(\tty{x}, c) * \trace(t) * \hist(t') * \inv(\Lcoll) $
  ${} * t,t'\in\Lcoll*\texttt{add}, \texttt{remove} \not\in t[n+1..] * n \leq |t| * \langle \texttt{iterator}, \texttt{x} \rangle \in t'[n+1..] \}$
emit<<next, x>>;
$\{ \exists y, n, t, t' \ldotp c = (y, n) * \collp(y) * \iterp(\tty{x}, c) * \trace(t \cdot \langle \tty{next}, \tty{x} \rangle) * \hist(t') * \inv(\Lcoll) $
  ${} * t,t'\in\Lcoll*\texttt{add}, \texttt{remove} \not\in t[n+1..] * n\leq|t|*\langle \texttt{iterator}, \texttt{x} \rangle \in t'[n+1..] \}$
$\{ \collpw(c) * \iterpw(\tty{x}, c) \}$
r
$\{ \collpw(c) * \iterpw(\tty{x}, c) \}$
\end{source}\vspace{-.45mm}
To verify the emit expression, we further have to prove that $t \cdot \langle \tty{next}, \tty{x} \rangle \in \Lcoll$,
that is, that the iterator \tty{x} is still valid. This relies on the following key property of the $\Lcoll$ language:
\[
  t \in \Lcoll \land \texttt{add}, \texttt{remove} \not\in t[n+1..] 
  \land \langle \texttt{iterator}, \texttt{x} \rangle \in t[n+1..]
 {\implies} t \cdot \langle \texttt{next}, \texttt{x} \rangle \in \Lcoll
\]
To apply this property we use \textsc{PUseHist} to conclude from $\trace(t) * \hist(t')$ that
$t' \leqprefix t$ and thus that $\langle \texttt{iterator}, \texttt{x} \rangle \in t'[n+1..] \implies
\langle \texttt{iterator}, \texttt{x} \rangle \in t[n+1..]$. 
\end{proof}

\subsection{Well-bracketing Protocols}\label{sec:wb-example}

Libraries that allow clients to acquire, access and release resources often impose a
well-bracketing protocol whereby clients are required to acquire resources before 
accessing and releasing them. The file library in \S\ref{ex:file-lib} was a
particularly simple example of such a protocol. In this section we consider a more
advanced and realistic variant thereof for a library
with a higher-order function that takes care of acquiring and releasing the underlying
resource for clients. 

Consider a library with a higher-order method \tty{withRes} for acquiring, accessing
and subsequently releasing some resource (e.g.\ a file) and an operation \tty{op} for
accessing the resource. The \tty{withRes} operation takes as argument a function $f$
provided by the client for accessing the resource and takes care of acquiring the resource
before $f$ is called and subsequently releasing it again. For such a library, we wish
to ensure that 1) clients only access the resource after they have acquired it, and
2) clients do not try to acquire resources they already hold. 

To express this property formally, we first define a wrapping function
that instruments the library to emit events about the interaction between client
and library. Formally, we take a library implementation to be a tuple consisting
of a \tty{withRes} function and an operation \tty{op}. Below we define a wrapping
function for such a library that emits call and return events for all calls where
control passes between client and library.
\begin{align*}
\wrapbrack \eqdef 
  \lambda (\mathit{withRes}, \mathit{op}) \ldotp (
    &\; \lambda f \ldotp
      \begin{aligned}[t]
      &
        \emit \langle \tty{call}, \tty{withRes}, f \rangle; \\
      &
        \mathit{withRes}(\lambda x.~\emit \langle \tty{call}, f \rangle; f(x);  
       \emit \langle \tty{ret}, f \rangle);\\
      &  \emit \langle \tty{ret}, \tty{withRes}, f \rangle, 
      \end{aligned} \\
    &\; \lambda \tty{x} \ldotp \emit \langle \tty{call}, \tty{op} \rangle; \mathit{op}(\tty{x}); \emit \langle \tty{ret}, \tty{op} \rangle\ )
\end{align*}
We can now state the desired property as a well-bracketing property
of the traces generated by the instrumented library. Let us define $\Lbrack\subseteq\VAL^*$ 
as the prefix closure of the language of all strings $s$ that are of the form:
\[
\pair{\tty{call}}{\tty{withRes},f}\cdot\pair{\tty{call}}{f} \cdot s_{\tt op} \cdot
\pair{\tty{ret}}{f} \cdot \pair{\tty{ret}}{\tty{withRes},f} \cdot s'
\] 
for some $f\in\FUN$, $s_{\tt op}\in(\pair{\tty{call}}{\tty{op}} \cdot \pair{\tty{ret}}{\tty{op}})^*$ and $s'\in \Lbrack$.
That is, the strings in $\Lbrack$ are well-bracketed sequences of events formed of subsequences adhering to the pattern $\pair{\tty{call}}{\tty{withRes},f} \cdot \pair{\tty{call}}{f} \cdot
s_{\tt op} \cdot \pair{\tty{ret}}{f} \cdot \pair{\tty{ret}}{\tty{withRes},f}$, and subsequences thereof, where $s_{\tt op}$ a sequence of consecutive calls and returns of $\tt op$.

This trace property is enforced by the following separation logic
specification. 
\begin{align*}
&\Phibrack(P_{0}, (\tty{withRes}, \tty{op})) \eqdef \\
&\;
  \begin{aligned}[t]
  &
    \exists \lockedp : \ty{Prop} \ldotp \exists \unlockedp : \ty{Val} {\rightarrow} \ty{Prop} \ldotp
    \valid(P_{0} {\implies} \lockedp)~\land \\
  &\;\;
    \forall P, Q : \ty{Prop} \ldotp \forall f : \ty{Val} \ldotp \\
  &\quad
      \{ \lockedp * P * S(P, Q, \unlockedp, f) \} 
      \tty{withRes}(f) 
      \{ \lockedp * Q \}  \land{}\\
  &\quad
    \forall x, y : \ty{Val} \ldotp \{ \unlockedp(y) \}~\tty{op}(x)~\{ \unlockedp(y) \}
  \end{aligned} \\
&S(P, Q, \unlockedp, f) \eqdef \forall y,x : \ty{Val} \ldotp \spec(\{ \unlockedp(y) * P \}~f(x)~\{ \unlockedp(y) * Q \})
\end{align*}
This uses two abstract resources, $\unlockedp$ and $\lockedp$ to
capture the well-bracketing aspect of the protocol. In particular, calling
\tty{withRes} requires the client to relinquish ownership of the \lockedp{} resource.
Since the function provided by the client is only given ownership of the
abstract \unlockedp{} resource, it cannot itself call \tty{withRes}. 
Furthermore, to call \tty{op} requires ownership of the \unlockedp{} resource,
thus ensuring that only the callback provided by the client to \tty{withRes}
can call \tty{op}. This specification ensures that \tty{withRes} is forced
to call the function provided by the client \emph{exactly} once, as it is
required to transform the abstract resource $P$ into $Q$ and the only way
it can achieve this is by calling the function provided by the client. 

To prove that the specification enforces the trace property we proceed as usual,
by proving that for any implementation $M$ that satisfies the specification, the
wrapped implementation $\wrapbrack(M)$ also satisfies the specification and the
traces generated by the wrapped implementation are in $\Lbrack$. 
\begin{lemma}
$P_{0}, ops \mid \Phibrack(P_{0}, ops) \vdel
\Phibrack(P_{0} * \trace(\varepsilon) * \inv(\Lbrack), \wrapbrack(ops))$.
\end{lemma}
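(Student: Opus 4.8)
The plan is to follow the same recipe as the file and iterator lemmas: instantiate the abstract predicates $\lockedp$, $\unlockedp$ on the right-hand side of $\Phibrack$ by wrapped versions built from the assumed $\lockedp$, $\unlockedp$ together with trace assertions, and then discharge the three Hoare triples under the wrapped operations $\wrapbrack(\mathit{ops})$. Concretely I would set
\[
  \lockedpw \eqdef \lockedp * \inv(\Lbrack) * \exists t.\ \trace(t) * t \in \Lbrack * \nopendingp(t),
\]
where $\nopendingp(t)$ says $t$ is a concatenation of complete blocks (no $\tty{withRes}$-frame left open), and
\[
  \unlockedpw(y) \eqdef \unlockedp(y) * \inv(\Lbrack) * \exists t.\ \trace(t) * t \in \Lbrack * \lastp(t),
\]
where $\lastp(t)$ records that $t$ has exactly one open $\tty{withRes}$/callback frame and that everything emitted since the enclosing $\pair{\tty{call}}{f}$ is a balanced sequence of $\pair{\tty{call}}{\tty{op}}\cdot\pair{\tty{ret}}{\tty{op}}$ pairs. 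The initial obligation $P_0 * \trace(\varepsilon) * \inv(\Lbrack) \implies \lockedpw$ is then immediate from the assumed $P_0 \implies \lockedp$ together with $\nopendingp(\varepsilon)$.

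The $\tty{op}'$ triple is routine: from $\unlockedpw(y)$ I emit $\pair{\tty{call}}{\tty{op}}$, run the underlying $\tty{op}$ via its assumed triple, and emit $\pair{\tty{ret}}{\tty{op}}$, each emit justified by the closure property that a balanced op-pair may be appended to any trace that is inside a callback; this preserves $\lastp$ and hence $\unlockedpw(y)$. The work is in $\tty{withRes}'$. After unfolding $\lockedpw$ and emitting $\pair{\tty{call}}{\tty{withRes},f}$ (legal by the property that a $\nopendingp$ trace may open a fresh block), I invoke the assumed specification of the underlying $\tty{withRes}$, instantiating its abstract pre/postcondition with $P'' \eqdef P * \inv(\Lbrack) * \trace(t_0)$ and a matching $Q''$ that existentially records the op-pairs emitted during the call, where $t_0$ is the (fixed, since the underlying code does not emit) trace at the point of call. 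The heart of the argument is discharging the nested triple for the wrapper callback $g = \lambda x.\ \emit\pair{\tty{call}}{f};\, f(x);\, \emit\pair{\tty{ret}}{f}$: inside $g$ I emit $\pair{\tty{call}}{f}$, repackage the resources as $\unlockedpw(y') * P$, apply the assumed specification $S(P,Q,\unlockedpw,f)$ for $f$, and then emit $\pair{\tty{ret}}{f}$; emitting $\pair{\tty{ret}}{\tty{withRes},f}$ back in $\tty{withRes}'$ returns the trace to a $\nopendingp$ state and re-establishes $\lockedpw * Q$.

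The main obstacle is precisely the emission of $\pair{\tty{ret}}{f}$ once the client callback returns. For this emit to satisfy the $\Lbrack$ side-condition of \textsc{Emit} I must know that the innermost still-open frame in the current trace is the one this very call opened — that is, that $f$ has not \emph{closed and reopened} at some other frame. This cannot be read off from $\lastp$ alone, so it is the design of $\unlockedpw$ together with the resource discipline that must rule the escaping behaviour out. The key is that $f$ is handed $\unlockedpw(y') * P$ but \emph{not} $\lockedpw$: by the same parametricity in $P,Q$ that forces $\tty{withRes}$ to call $f$ exactly once, $f$ has no way to acquire the right to open a new $\tty{withRes}$ frame, so the only trace extensions under which it can return $\unlockedpw(y')$ are further balanced op-pairs. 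I would make this precise by recording the entry trace with \textsc{PAllocHist}, recovering $t_{\mathit{entry}} \leqprefix t'$ on return via \textsc{PUseHist}, and then invoking the central closure lemma of $\Lbrack$: if $t_{\mathit{entry}}$ opens the $f$-frame over a $\nopendingp$ prefix and the suffix down to $t'$ is balanced op-pairs, then $t' \cdot \pair{\tty{ret}}{f}$ and the following $\pair{\tty{ret}}{\tty{withRes},f}$ remain in $\Lbrack$ and restore $\nopendingp$. Lining up the $\lastp$ predicate, the history resource, and the well-bracketing closure properties of $\Lbrack$ so that the callback provably cannot have re-bracketed at a different frame is the delicate point; by contrast the first-order emits and the $\tty{op}'$ case are mechanical.
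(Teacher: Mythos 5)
Your overall skeleton matches the paper's proof: wrap the abstract predicates with trace assertions, prove the instrumented $\tty{op}$ triple by appending a balanced op-pair, and prove the instrumented $\tty{withRes}$ triple by instantiating the underlying specification so that the nested obligation for the wrapper callback $g$ is discharged by emitting $\pair{\tty{call}}{f}$, repackaging into the wrapped unlocked predicate, applying the client's specification for $f$, and emitting $\pair{\tty{ret}}{f}$. However, there is a genuine gap exactly at the point you flag as delicate, and your proposed patch does not close it. Your $\unlockedpw(y)$ uses a function-name-agnostic predicate $\lastp(t)$, so after $f$ returns you only know: $t'\in\Lbrack$, $\lastp(t')$, and (via \textsc{PAllocHist}/\textsc{PUseHist}) $t_{\mathit{entry}} \leqprefix t'$. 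These facts are all satisfied by
\[
  t' \;=\; t_{\mathit{entry}} \cdot \pair{\tty{ret}}{f} \cdot \pair{\tty{ret}}{\tty{withRes},f} \cdot \pair{\tty{call}}{\tty{withRes},f''} \cdot \pair{\tty{call}}{f''}
\]
for some $f''\neq f$, and for this $t'$ the emission $t'\cdot\pair{\tty{ret}}{f}$ is \emph{not} in $\Lbrack$. So the hypothesis of your ``central closure lemma'' (that the suffix from $t_{\mathit{entry}}$ to $t'$ consists of balanced op-pairs) is not entailed by the resources you hold, and the \textsc{Emit} side condition cannot be discharged. The appeal to parametricity (``$f$ has no way to acquire the right to open a new $\tty{withRes}$ frame'') is not a proof step available in the logic, and it is not even sound in your setup: \textsc{Emit} requires only $\trace$ plus invariant compliance, and the $\trace$ resource sits inside the $\unlockedpw(y')$ you hand to $f$, so nothing in your definitions prevents a callback satisfying your spec from closing the frame and reopening a differently-named one.

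The paper's fix is precisely to make the wrapped unlocked predicate carry the identity of the open frame in its \emph{parameter}: $\unlockedpw(x) \eqdef \exists y,z \ldotp x = (y,z) * \unlockedp(y) * T_2(z)$, where $T_2(z)$ pins the current trace to the shape ``balanced prefix, then $\pair{\tty{call}}{\tty{withRes},z}\cdot\pair{\tty{call}}{z}$, then op-pairs only''. The underlying $\tty{withRes}$ is invoked at precondition $P * T_1(f)$, postcondition $Q * T_3(f)$, and unlocked predicate $\unlockedp$, and inside $g$ one repackages into $\unlockedpw((y,f))$ and applies the client's spec $S(P,Q,\unlockedpw,f)$ \emph{at the index $(y,f)$}. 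Because that spec requires $f$ to return the same $\unlockedpw((y,f))$, the trace after $f$ returns is again in $T_2(f)$ form, i.e.\ the innermost open frame is an $f$-frame, which is exactly the fact needed to emit $\pair{\tty{ret}}{f}$. No history resources and no meta-level parametricity argument are needed; the universal quantification over the predicate's index in $S$ does all the work. To repair your proof, replace $\lastp$ by a name-indexed predicate and thread the name through $\unlockedpw$'s argument as the paper does; as written, the proof fails.
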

\begin{proof}[Proof sketch]
To prove that the wrapped version satisfies the specification and produces traces in
$\Lbrack$, we first need to define wrapped versions of the abstract representation
predicates. The idea is to let the wrapped $\lockedp$ resource, $\lockedpw$, express 
that the current trace $t$ is in $\Lbrack$ and the $\langle \tty{call}, \tty{withRes}, f \rangle$
and $\langle \tty{ret}, \tty{withRes}, f \rangle$ events in $t$ are well-balanced and well-bracketed. 
For the wrapped $\unlockedp$ resource, $\unlockedpw(x)$, the idea is to use the
argument $x$ to track the name $f$ of the last unbalanced $\langle \tty{call}, \tty{withRes}, f \rangle$
event in $t$.

To simplify the definitions and subsequent proofs, we first introduce a number of auxiliary
resources, $T_0, T_1, T_2$ and $T_3$. $T_0$ expresses that the current trace is well-balanced. 
We set ${\cal O}=(\langle \tty{call}, \tty{op} \rangle \cdot \langle \tty{ret}, \tty{op} \rangle)^*$.
$T_1(f)$ expresses that the current trace $t$ has the form
$s \cdot \langle \tty{call}, \tty{withRes}, f \rangle$ where $s$ is well-balanced. $T_2(f)$
expresses that the current trace $t$ has the form 
$s \cdot \langle \tty{call}, \tty{withRes}, f \rangle \cdot \langle \tty{call}, f \rangle \cdot s'$
where $s$ is well-balanced and $s' \in \cal O$.
Finally, $T_3(f)$ expresses that the current trace $t$ has the form 
$s \cdot \langle \tty{call}, \tty{withRes}, f \rangle \cdot \langle \tty{call}, f \rangle \cdot s' \cdot \langle \tty{ret}, f \rangle$
where $s$ is well-balanced and $s' \in \cal O$.\vspace{-3mm}

\begin{align*}
T_0 &= 
  \exists t \in \Lbrack \ldotp \trace(t) * \inv(\Lbrack) 
  * (|t| > 0 \implies \exists f \ldotp t[|t|] = \langle \tty{ret}, \tty{withRes}, f \rangle) \\
T_1(f) &=
  \exists t \in \Lbrack \ldotp \trace(t) * \inv(\Lbrack)
  * t[|t|] = \langle \tty{call}, \tty{withRes}, f \rangle \\
T_2(f) &=
  \begin{aligned}[t]
  &
    \exists t \in \Lbrack \ldotp \trace(t) * \inv(\Lbrack) * \exists n < |t| \ldotp \
    t[n] = \langle \tty{call}, \tty{withRes}, f \rangle \\
&\qquad\qquad\quad\qquad\qquad\quad\qquad\qquad
\land t[n+1] = \langle \tty{call}, f \rangle\land
    t[n+2..] \in \cal O
  \end{aligned} 
\end{align*}
\begin{align*}
T_3(f) &=
  \begin{aligned}[t]
  &
    \exists t \in \Lbrack \ldotp \trace(t) * \inv(\Lbrack) * \exists n < |t| \ldotp \
    t[n] = \langle \tty{call}, \tty{withRes}, f \rangle \\
&
\land t[n+1] = \langle \tty{call}, f \rangle\land 
    t[n+2..(|t|-1)] \in {\cal O}\land 
    t[|t|] = \langle \tty{ret}, f \rangle
  \end{aligned}
\end{align*}
With these resources, we can now define $\unlockedpw$ and $\lockedpw$:
\begin{align*}
\unlockedpw(x) &\eqdef \exists y, z : \ty{Val} \ldotp x = (y, z) * \unlockedp(y) * T_2(z) \\
\lockedpw &\eqdef \lockedp * T_0
\end{align*}
It follows easily that 
$P_{0} * \trace(\varepsilon) * \inv(\Lbrack) \implies \lockedpw$
from $\trace(\varepsilon) * \inv(\Lbrack) \implies T_0$ and the assumption $P_{0} \implies \lockedp$.3

It remains to show that the two instrumented operations satisfy their
specifications. We begin by showing that the instrumented \tty{withRes} operation
satisfies its specification:
\begin{align*}
&
  \forall P, Q : \ty{Prop} \ldotp \forall f : \ty{Val} \ldotp \\
&\{ \lockedpw * P * S(P, Q, \unlockedpw, f) \} 
    \pi_1(\wrapbrack(\tty{withRes}, \tty{op}))(f) \;\
    \{ \lockedpw * Q \}
\end{align*}
assuming \tty{withRes} satisfies its specification:
\[
  \forall P, Q : \ty{Prop} \ldotp \forall f : \ty{Val} \ldotp \\
    \{ \lockedp * P * S(P, Q, \unlockedp, f) \}~
    \tty{withRes}(f)~
    \{ \lockedp * Q \}
\]
We give a proof outline for the instrumented \tty{withRes} operation:
\begin{source}
[Context $P, Q : \ty{Prop}, f : \ty{Val}$]
$\{ \lockedpw * P * S(P, Q, \unlockedpw, f) \}$
$\{ \lockedp * T_0 * P * S(P, Q, \unlockedpw, f) \}$
  emit<<call, withRes, f>>;
$\{ \lockedp * T_1(f) * P * S(P, Q, \unlockedpw, f) \}$
  let g = lambda x. emit<<call, f>>; f(x); emit<<ret, f>>  in
$\{ \lockedp * T_1(f) * P * S(P * T_1(f), Q * T_3(f), \unlockedp, g) \}$
  withRes(g);
$\{ \lockedp * T_3(f) * Q \}$
  emit<<ret, withRes, f>>
$\{ \lockedp * T_0 * Q \}$
$\{ \lockedpw * Q \}$
\end{source}
The interesting step is showing that from the assumed specification of $f$ we can
derive the desired specification for the instrumented version of $f$:
\begin{align*}
  \forall P, Q :  & \ty{Prop}\ldotp \forall f : \ty{Val} \ldotp 
    S(P, Q, \unlockedpw, f)\\ 
&\hspace{-1em} \implies 
  \begin{aligned}[t]
    S(&P * T_1(f), Q * T_3(f), \unlockedp, 
    \lambda x.~\emit \langle \tty{call}, f \rangle; f(x); \emit \langle \tty{ret}, f \rangle) 
  \end{aligned}\\[-5mm]
\end{align*}
This follows from the following proof outline
\begin{source}
[Context $P, Q : \ty{Prop}$ and $f, x,y : \ty{Val}$]
$\{ \unlockedp(y) * P * T_1(f) * S(P, Q, \unlockedpw, f) \}$
  emit<<call, f>>;
$\{ \unlockedp(y) * P * T_2(f) * S(P, Q, \unlockedpw, f) \}$
$\{ \unlockedpw((y, f)) * P * S(P, Q, \unlockedpw, f) \}$
  f(x);
$\{ \unlockedpw((y, f)) * Q \}$
$\{ \unlockedp(y) * Q * T_2(f) \}$
  emit<<ret, f>>
$\{ \unlockedp(y) * Q * T_3(f) \}$
\end{source}
Lastly, we need to show that the instrumented \tty{op} function satisfies its specification.
Below we give a proof outline for the instrumented \tty{op} function applied to an argument
$x$:
\begin{source}
[Context $x, y : \ty{Val}$]
$\{ \unlockedpw(y) \}$
$\{ \exists a, f \ldotp y = (a, f) * \unlockedp(a) * T_2(f) \}$
  emit<<call, op>>;
$\{ \exists a, f \ldotp y = (a, f) * \unlockedp(a) * \exists t \in \Lbrack \ldotp \trace(t) *~\inv(\Lbrack) * p(t,f) \}$
  op(x);
$\{ \exists a, f \ldotp y = (a, f) * \unlockedp(a) * \exists t \in \Lbrack \ldotp \trace(t) *~\inv(\Lbrack) * p(t,f) \}$
  emit<<ret, op>>
$\{ \exists a, f \ldotp y = (a, f) * \unlockedp(a) * T_2(f) \}$
$\{ \unlockedpw(y) \}$
\end{source}
where $p(t,f)=
\exists n\,{<}\,|t| \ldotp\, 
    t[n] = \langle \tty{call}, \tty{withRes}, f \rangle \land t[n+1] = \langle \tty{call}, f \rangle\land{}$ 
$    t[n+2..] \in (\langle \tty{call}, \tty{op} \rangle \cdot \langle \tty{ret}, \tty{op} \rangle)^* \cdot \langle \tty{call}, \tty{op} \rangle$. 
\qed
\end{proof}

\subsection{Traversable stack example}\label{stackmap-example}

\newcommand{\stacktre}{\ty{stk_{tr}}}
\newcommand{\stackp}{\ty{stack}}
\newcommand{\stackpw}{\ty{stack_w}}
\newcommand{\wrapstack}{\wrap_{\textsf{stack}}}

To further demonstrate that our approach can express and enforce strong trace
properties, recall the stack
example from \S\ref{sec:mot-ex}. We have a stack with a \texttt{push} and a
\texttt{pop} method, and a \texttt{foreach} method that takes a function argument and applies
the given function to every element of the stack, in order, starting from the top-most
element. 
Here the protocol on the interaction between
client and library imposes restrictions on both the client and the library. In
particular, we wish to ensure that the function provided by the client 
cannot call back into the stack-library and potentially modify the underlying
stack during the iteration of the stack. We also wish to ensure that the
library calls the function provided by the client with every element currently
on the stack and in the right order. 

To express this protocol, we first define a suitable library wrapper that
tracks all calls to \texttt{push} and \texttt{pop} and all calls to the function argument
provided by the client when calling \texttt{foreach}. \vspace{-1mm}
\begin{align*}
\wrap&_{\textsf{stack}}(\mathit{push}, \mathit{pop}, \mathit{foreach})\eqdef\\
  (\lambda a.~&
    \emit \langle \tty{call}, \texttt{push}, a \rangle; \mathit{push}(a); \emit \langle \tty{ret}, \texttt{push} \rangle, \\[-3pt] 
  \lambda \_.~&
    \emit \langle \tty{call}, \texttt{pop} \rangle; \TEMPletin{x = \mathit{pop}()}{\emit \langle \tty{ret}, \texttt{pop}, x \rangle; x}, \\[-3pt]
  \lambda f.~&
    \emit \langle \tty{call}, \texttt{foreach}, f \rangle;
    \mathit{foreach}\bigl(\lambda a.~\emit \langle \tty{call}, f, a \rangle; f(a);  \\[-3pt]
  &
    \emit \langle \tty{ret}, f \rangle\bigr);  \emit \langle \tty{ret}, \texttt{foreach} \rangle
  )\\[-5mm]
\end{align*}
Let $\Sigma_{\sf st}$ be the stack alphabet.
We can formalize the intended protocol as the language ${\cal L}_{\sf stack}$ defined as the prefix closure of the language of all traces $t\in\mathit{Val}^*$ such that $\stacktre(t,\varepsilon)$ holds, where, given $\alpha\in\Sigma_{\sf st}^*$,  we define $\stacktre(t,\alpha)$ by:
\begin{align*}
&\stacktre(t,\alpha)\, \eqdef\ (t=\alpha=\varepsilon)\lor{}
\\
&\; ( t= t'\!\cdot\! \langle\tty{call}, \texttt{push}, a \rangle \cdot\! \langle \tty{ret}, \texttt{push} \rangle
\land \alpha\,{=}\,a\,{::}\,\alpha'\!\land\stacktre(t'\!,\alpha'))\lor\\[-3pt]
&\;( t= t'\cdot
\langle \tty{call}, \texttt{pop} \rangle \cdot \langle \tty{ret}, \texttt{pop}, () \rangle\land \alpha=\varepsilon\land\stacktre(t',\varepsilon))\lor{}
\\
&\;( t= t'\cdot
\langle \tty{call}, \texttt{pop} \rangle \cdot \langle \tty{ret}, \texttt{pop}, a \rangle\land\stacktre(t',a::\alpha))\lor{}
\\
&\;( t=t' \cdot \langle \tty{call}, \texttt{foreach}, f \rangle \cdot t'' \cdot \langle \tty{ret}, \texttt{foreach} \rangle\land\stacktre(t',\alpha)
\land \textsf{trav}(t'',\alpha,f))
\\
&\textsf{trav}(t,\alpha,f)\, \eqdef\ (t=\alpha=\varepsilon)\lor{}
\\
&\quad(t=\langle \tty{call}, f,a \rangle \cdot \langle \tty{ret}, f \rangle \cdot t'\land\alpha=a::\alpha'\land\textsf{trav}(t',\alpha',f))
\end{align*}
%
%
%
A higher-order separation logic specification for
such a stack data structure is the following.\vspace{-1mm}
\begin{align*}
&
  \Phi(P_{init}, (\mathit{push}, \mathit{pop}, \mathit{foreach})) \eqdef \\
&\;
  \begin{aligned}[t]
  &
    \exists \stackp : \ty{Val~seq} \rightarrow \ty{Prop} \ldotp \valid(P_{init} \implies \stackp(\varepsilon))~\land \\
  &\;
    \forall \alpha, a \ldotp \{ \stackp(\alpha) \land a \neq () \}~\mathit{push}(a)~\{ \stackp(a::\alpha) \}~\land \\
  &\;
    \forall \alpha \ldotp
    \{ \stackp(\alpha) \}~\mathit{pop}()~\{ r.~
      \begin{aligned}[t]
      &
        (r = () \land \stackp(\alpha) \land \alpha = \varepsilon) \\
      &\hspace{-2em}
        \lor~(\exists \alpha' \ldotp \alpha = r :: \alpha' \land \stackp(\alpha'))
      \} ~\land
      \end{aligned} \\
    &\;
      \forall \alpha, f, I \ldotp
      \begin{aligned}[t]
      &
        \{ \stackp(\alpha) * I(\varepsilon) * \forall \beta, a \ldotp \spec(\{ I(\beta) \}~f(a)~\{ I(a::\beta) \}) \} \\
      &\quad
        \mathit{foreach}(f) \{ \stackp(\alpha) * I(\mathit{rev}(\alpha)) \}
      \end{aligned}
  \end{aligned}
\end{align*}
 It asserts existence of an
abstract stack representation predicate $\stackp(\alpha)$ that tracks the 
exact sequence of elements currently on the stack using the mathematical
sequence $\alpha$. The specification for \texttt{push} and \texttt{pop} is straightforward:
pushing and popping elements pushes or pops elements from this mathematical 
sequence, with a few special cases for pushing $()$ or popping from an empty
stack. The specification for \texttt{foreach} is more interesting. It is parametrised
by a predicate $I$, to be chosen by the client. This predicate is indexed
by a sequence $\alpha$ and $I(\alpha)$ is intended to capture the client's
state after the function provided by the client has been called on each 
element of $\alpha$, in reverse order. This accounts for the $I(\mathit{rev}(\alpha))$
in the post-condition, where $\mathit{rev}$ is the reverse operator on sequences. 

\begin{lemma}
$P_{0}, \mathit{ops} \mid \Phi(P_{0}, \mathit{ops}) \vdel
\Phi(P_{0} * \inv(\mathcal{L}_{\sf stack}) * \trace(\varepsilon), \wrapstack(\mathit{ops}))$.
\end{lemma}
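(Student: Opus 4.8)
The plan is to follow the same recipe as the file, iterator and well-bracketing lemmas: from the hypothesis $\Phi(P_{0},\mathit{ops})$ I extract the abstract predicate $\stackp$ and define a trace-aware wrapped predicate $\stackpw$ that couples the abstract contents $\alpha$ with the emitted trace,
\[
  \stackpw(\alpha) \eqdef \stackp(\alpha) * \inv(\mathcal{L}_{\sf stack}) * \exists t \ldotp \trace(t) \land \stacktre(t,\alpha),
\]
so that $\stackpw(\alpha)$ records that the current trace $t$ is a complete (quiescent) stack trace consistent with contents $\alpha$. Note that $\stacktre(t,\alpha)$ already forces $t \in \mathcal{L}_{\sf stack}$, since any such $t$ can be completed to a trace satisfying $\stacktre(\cdot,\varepsilon)$ by appending pops. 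First I would discharge the initialisation obligation $P_{0} * \trace(\varepsilon) * \inv(\mathcal{L}_{\sf stack}) \implies \stackpw(\varepsilon)$, which is immediate from the assumed $P_{0} \implies \stackp(\varepsilon)$ together with $\stacktre(\varepsilon,\varepsilon)$.

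For \texttt{push} and \texttt{pop} the argument is essentially identical to the earlier libraries: each wrapped operation brackets the underlying call with a $\langle\tty{call},\dots\rangle$ and a $\langle\tty{ret},\dots\rangle$ emit, and I would verify it by (i) using the assumed specification of the underlying operation to update $\stackp$, and (ii) discharging the two \textsc{Emit} side conditions. The required inputs are closure lemmas about $\mathcal{L}_{\sf stack}$ and $\stacktre$ read straight off the recursive clauses defining $\stacktre$: if $\stacktre(t,\alpha)$ then $t \cdot \langle\tty{call},\texttt{push},a\rangle \in \mathcal{L}_{\sf stack}$ (by prefix closure) and $\stacktre(t \cdot \langle\tty{call},\texttt{push},a\rangle \cdot \langle\tty{ret},\texttt{push}\rangle,\,a::\alpha)$, with the symmetric empty/non-empty facts for \texttt{pop} matching the two disjuncts of its postcondition.

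The real work is \texttt{foreach}. The wrapped operation emits $\langle\tty{call},\texttt{foreach},f\rangle$, runs the underlying $\mathit{foreach}$ on the instrumented callback $g = \lambda a.~\emit\langle\tty{call},f,a\rangle; f(a); \emit\langle\tty{ret},f\rangle$, and emits $\langle\tty{ret},\texttt{foreach}\rangle$. To verify the middle call I would instantiate the underlying \texttt{foreach} specification with a loop invariant $I'$ that conjoins the client's (abstract) invariant $I$ with a condition recording how far the traversal has progressed, namely
\[
  I'(\beta) \eqdef I(\beta) * \inv(\mathcal{L}_{\sf stack}) * \exists t_0, t'' \ldotp \trace(t_0 \cdot \langle\tty{call},\texttt{foreach},f\rangle \cdot t'') \land \stacktre(t_0,\alpha) \land \textsf{trav}(t'', \mathit{rev}(\beta), f).
\]
From $\stackpw(\alpha)$ and the client's $I(\varepsilon)$ I would establish $\stackp(\alpha) * I'(\varepsilon)$ (emitting the opening $\langle\tty{call},\texttt{foreach},f\rangle$ and taking $t''=\varepsilon$), apply the underlying specification to reach $\stackp(\alpha) * I'(\mathit{rev}(\alpha))$, then emit $\langle\tty{ret},\texttt{foreach}\rangle$ and repackage into $\stackpw(\alpha) * I(\mathit{rev}(\alpha))$, using that a full $\textsf{trav}$ of $\alpha$ closes the foreach bracket back into a $\stacktre(\cdot,\alpha)$ trace.

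\textbf{The main obstacle} is proving the derived callback specification $\forall \beta, a \ldotp \spec(\{I'(\beta)\}~g(a)~\{I'(a::\beta)\})$ that the underlying \texttt{foreach} demands, because its universal quantifier ranges over \emph{all} arguments $a$, whereas the very first emit in $g$, namely $\emit\langle\tty{call},f,a\rangle$, is licensed by \textsc{Emit} only when $t \cdot \langle\tty{call},f,a\rangle \in \mathcal{L}_{\sf stack}$, and $\textsf{trav}$ pins the argument of each call to the corresponding stack element; an ``off-path'' $a$ produces a trace outside $\mathcal{L}_{\sf stack}$. Operationally this never arises — a correct $\mathit{foreach}$ calls $g$ only on the actual elements in order — so the difficulty is purely one of reconciling this with a specification quantified over every $a$, and this is exactly where the specification's \emph{parametricity in $I$} must be exploited. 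The delicate step is to design the trace component of $I'$ so that (a) for any $\beta$ that is not the reverse of a genuine prefix of $\alpha$ the precondition $I'(\beta)$ is unsatisfiable, rendering those instances of the callback triple vacuous, and (b) for an on-path $\beta$ the trace in $I'(\beta)$ determines the unique admissible next element and thereby licenses the emit. I expect most of the effort to go into establishing the closure property
\[
  \stacktre(t_0,\alpha) \land \textsf{trav}(t'', \mathit{rev}(\beta), f) \land \mathit{rev}(\beta) \cdot a \leqprefix \alpha \implies \bigl( t_0 \cdot \langle\tty{call},\texttt{foreach},f\rangle \cdot t'' \cdot \langle\tty{call},f,a\rangle \bigr) \in \mathcal{L}_{\sf stack}
\]
together with its converse, which is what ties the permissible arguments of $g$ to $\beta$ tightly enough that every instance of the callback triple becomes provable.
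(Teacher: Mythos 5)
Your overall skeleton is exactly the paper's: the same wrapped predicate $\stackpw(\alpha) \eqdef \stackp(\alpha) * \exists t\ldotp \stacktre(t,\alpha) * \trace(t) * \inv(\mathcal{L}_{\sf stack})$, the same initialisation step via $\stacktre(\varepsilon,\varepsilon)$, the same routine closure lemmas for \texttt{push}/\texttt{pop}, and for \texttt{foreach} the same instrumented invariant (your $I'$ is the paper's $I_{\sf w}$), the same $\textsf{trav}$-extension property for repackaging, and the same closing emit justified from $I'(\mathit{rev}(\alpha))$. Where you depart from the paper is in flagging the quantifier problem for the derived callback triples, and your diagnosis there is correct -- indeed the paper's own outline simply performs $\emit\langle\tty{call},f,a\rangle$ for an arbitrary $a$ and never discharges the \textsc{Emit} side condition $t\cdot\langle\tty{call},f,a\rangle\in\mathcal{L}_{\sf stack}$.

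The gap is that your proposed resolution, items (a) and (b), does not close the hole you identified. The problematic instances are neither the off-path $\beta$'s (your (a) does render those vacuous, which is already an improvement on the paper) nor on-path $\beta$ paired with the admissible next element (your (b)); they are on-path $\beta$ paired with a \emph{wrong} argument: take $\beta=\varepsilon$ and any $a$ different from the head of $\alpha$. Then $I'(\varepsilon)$ is not vacuous -- it is satisfied by the actual state right after the opening $\langle\tty{call},\texttt{foreach},f\rangle$ emit, which is precisely how you establish it in order to invoke the underlying specification -- yet by the very ``converse'' closure property you state, the emit $\langle\tty{call},f,a\rangle$ is \emph{not} licensed: since $\textsf{trav}$ pins the calls inside the open bracket to the elements of $\alpha$ in order, the extended trace admits no completion to a trace satisfying $\stacktre(\cdot,\varepsilon)$, so it lies outside $\mathcal{L}_{\sf stack}$. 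Hence that instance of $\{I'(\beta)\}\,g(a)\,\{I'(a::\beta)\}$ is unprovable (and in the paper's semantics it is false whenever $I(\varepsilon)$ is satisfiable, e.g.\ for $I=\lambda\beta\ldotp\emp$, because every step must keep the trace inside the pinned invariant); since the underlying \texttt{foreach} specification demands $\spec$ of \emph{all} instances $\forall\beta,a$, the instantiation with $I'$ can never be completed. No redesign of $I'$ alone can repair this, because a precondition cannot mention, let alone constrain, the argument $a$, and ``parametricity in $I$'' cannot be exploited at this point: the offending emit occurs before any resource accounting could rule it out. (To be fair, the paper's sketch has exactly this same unacknowledged hole.) A genuine fix must move the enforcement point into the language itself: define $\mathcal{L}_{\sf stack}$ so that a trace with a still-open \texttt{foreach} bracket is unconstrained in its $\langle\tty{call},f,\cdot\rangle$ events, and trav-correctness is checked only when $\langle\tty{ret},\texttt{foreach}\rangle$ is emitted; at that point $I'(\mathit{rev}(\alpha))$, obtained from the underlying postcondition, supplies exactly the fact needed to license the closing emit, and every callback-triple instance becomes provable.
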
\vspace{-3mm}
\begin{proof}[Proof sketch]
We proceed by defining a wrapped version of the \stackp{} predicate
that asserts that the sequence of elements $\alpha$ matches the expected
contents of the stack as per the current trace $t$.
\begin{align*}
\stackpw(\alpha) &\eqdef
  \stackp(\alpha) * \exists t\ldotp \ty \stacktre(t,\alpha)* \trace(t) * \inv(\mathcal{L}_{\stackp})
\end{align*}
Clearly we have that
$P_{init} * \inv(\mathcal{L}_{\stackp}) * \trace(\varepsilon) \implies \stackpw(\varepsilon)$
as $\stacktre(\varepsilon,\varepsilon)$ and $P_{init} \implies \stackp(\varepsilon)$.

It remains to prove the wrapped library methods satisfy the specification
instantiated with the wrapped \stackp{} predicate. The proofs for \texttt{push} 
and \texttt{pop} are straightforward and have been omitted. For \texttt{foreach}
we are given a predicate $I$ from the client and need to prove the following
triple:
\begin{align*}
&
  \{ \stackpw(\alpha) * I(\varepsilon) * \forall \beta, a \ldotp \spec(\{ I(\beta) \}~f(a)~\{ I(a :: \beta) \}) \} \\
&\quad
  \pi_3(\wrapstack(\mathit{ops}))(f) \{ \stackpw(\alpha) * I(rev(\alpha)) \}
\end{align*}
In the call to the underlying \texttt{foreach} method, we can pick a suitably
wrapped version of the $I$ predicate, $I_{\sf w}(\beta)$. The idea is that it should
assert $I(\beta)$ and that we have emitted an opening \texttt{foreach} call
and called the function argument on all the elements of $\beta$ so far. 
\begin{align*}
I_{\sf w}(\beta) \eqdef
  I(\beta)  * \exists t_1,t_2\ldotp\,& \stacktre(t_1,\alpha)\land\mathsf{trav}(t_2,rev(\beta),f) 
    {}\land \trace(t_1 \cdot \langle \tty{call}, \texttt{foreach}, f \rangle \cdot t_2)
\end{align*}
We need to prove that the wrapped function argument updates the
wrapped $I$ predicate appropriately. This follows from:\vspace{-1mm}
\begin{source}
[Context $\alpha,\beta,f : \ty{Val}$]
$\{ I_{\sf w}(\beta) \}$
$\{ I(\beta) * \exists t_1,t_2\ldotp \stacktre(t_1,\alpha) * \mathsf{trav}(t_2,rev(\beta), f)
 * \trace(t_1 \cdot \langle \tty{call}, \texttt{foreach}, f \rangle \cdot t_2) \}$
emit <<call, $f$, $a$>>;
$\{ I(\beta) * \exists t_1,t_2\ldotp\stacktre(t_1,\alpha) * \mathsf{trav}(t_2,rev(\beta), f)$
$\qquad\qquad{} * \trace(t_1 \cdot \langle \tty{call}, \texttt{foreach}, f \rangle \cdot t_2 \cdot \langle \tty{call}, f, a \rangle) \}$
f(a);
$\{ I(a :: \beta) * \exists t_1,t_2\ldotp\stacktre(t_1,\alpha) * \mathsf{trav}(t_2,rev(\beta), f)$
$\qquad\qquad {}* \trace(t_1 \cdot \langle \tty{call}, \texttt{foreach}, f \rangle \cdot t_2 \cdot \langle \tty{call}, f, a \rangle) \}$
emit <<ret, $f$>>;
$\{ I(a :: \beta) * \exists t_1,t_2\ldotp\stacktre(t_1,\alpha) * \mathsf{trav}(t_2,rev(\beta), f)$
$ \qquad {} * \trace(t_1 \cdot \langle \tty{call}, \texttt{foreach}, f \rangle \cdot t_2 \cdot \langle \tty{call}, f, a \rangle \cdot \langle \tty{ret}, f \rangle) \}$
$\{ I(a :: \beta) * \exists t_1,t_2\ldotp\stacktre(t_1,\alpha) * \mathsf{trav}(t_2,rev(a::\beta), f) * \trace(t_1 \cdot \langle \tty{call}, \texttt{foreach}, f \rangle \cdot t_2) \}$
$\{ I_{\sf w}(a :: \beta) \}$
\end{source}\vspace{-1mm}
The second to last step follows from the following property:
$$\forall \alpha , a , t\ldotp 
\textsf{trav}(t,\alpha,f)\implies
\textsf{trav}(t \cdot \langle \tty{call}, f, a \rangle \cdot \langle\tty{ret}, f \rangle,\alpha \cdot a, f)$$
Now the rest of the proof of \texttt{foreach} is just an application of the
specification of the underlying \texttt{foreach} method and the \textsc{Emit}
rule for the emission of the \texttt{foreach} call and return events. 
\end{proof}

\section{Semantics}
\label{sec:semantics}

We give a denotational semantics for the logic introduced previously 
and establish soundness of the logic.
The semantics is based on an interpretation of
resources as members of a suitable resource monoid $\sf M$. In fact,
$\sf M$ is a partial commutative monoid
$(|{\sf M}|,\bullet,1)$ whereby the multiplication operator acts as
the semantic analogue of separating conjunction. We shall also use
the partial order relation yielded by monoid multiplication:
$m_1\leq m_2 \iff \exists m\in|{\sf M}|.~m_1\bullet m=m_2$
for all $m_1,m_2\in|{\sf M}|$. 
In the sequel we shall frequently abuse
notation and write $|\sf M|$ simply as $\sf M$. 
The semantics of resources is parametrised on \emph{worlds}, 
that is, partially ordered sets $\ty{W}$ specifying functional environments and trace invariants.

The monoid models heap resources and trace resources.
It is constructed as the product of two partial commutative monoids: \
$\MEL\, \eqdef\, \HEAP \times \mathit{Trace}$,\, where:
\begin{align*}
\HEAP\, &\eqdef\, (\HEAP, \uplus, \emptyset) &&&
\mathit{Trace}\, &\eqdef\, (\{\botHist,\topTrace\} \times \VAL^*, {\bullet}\,, (\botHist, \varepsilon))
\end{align*}
The monoid multiplication on $\HEAP$ is disjoint union ($\uplus$), which is only defined between heaps with disjoint domains.
The empty heap $\emptyset$ is the unit.

The monoid multiplication for the trace monoid is defined by:
\begin{align*}
  (\botHist, t_1) \bullet (\botHist, t_2) &\eqdef \begin{cases}
            (\botHist, t_1) & \text{if } t_2 \leqprefix t_1 \\
            (\botHist, t_2) & \text{if } t_1 \leqprefix t_2 \\
            \text{undefined} & \text{otherwise} \end{cases} \\
  (\topTrace, t_1) \bullet (\botHist, t_2) &\eqdef (\botHist, t_2) \bullet (\topTrace, t_1) \eqdef \begin{cases}
            (\topTrace, t_1) & \text{if } t_2 \leqprefix t_1 \\
            \text{undefined} & \text{otherwise} \end{cases} \\
  (\topTrace, t_1) \bullet (\topTrace, t_2) & \quad \text{undefined}
\end{align*}
where $\leqprefix$ is prefix ordering on finite sequences.
The idea is that $\trace(t)$ is modelled by $(\topTrace, t)$, while $\hist(t)$ is modelled by $(\botHist, t)$.
The monoid multiplication ensures that $(\topTrace, t)$ is a unique resource, which grants the right to extend the trace.
(It does not permit arbitrary changes to the trace: updates must preserve all frames, and hence all prefixes of the trace.)
The resource $(\botHist, t)$ is duplicable, and only ensures that $t$ is a prefix of the trace.
It is easy to see that $(\botHist, \varepsilon)$ (where $\varepsilon$ denotes the empty sequence) is the unit of this monoid.

Worlds model the information contained in assertions that does not behave like a resource.
Worlds include the function environment and a trace invariant. We thus define
  $\WEL \eqdef \FENV \times \mathcal{P}(\VAL^*)$
with
$  (\gamma_1, I_1) \leq (\gamma_2, I_2)$ iff
$\gamma_1 \subseteq \gamma_2$ and $I_1 = I_2\,.
$
The ordering on worlds describes how they may change with time.
This ordering allows new functions to be named, but enforces that the trace invariant does not change with time.

Assertions are interpreted as monotone functions from worlds to
upwards closed sets of resources: $\semel{\ty{Prop}} \eqdef\, \WEL \monarrow \mathcal{P}^{\uparrow}(\MEL)$
where  $\mathcal{P}^{\uparrow}({\MEL}) \eqdef\,\{\,p\subseteq \MEL\mid\forall m\in p.\,\forall m'\in \MEL.~m\leq m'\implies m'\in p\,\}$.
The rest of the types are interpreted 
as shown below, where the ordering on $\{ \bot, \top \}$
is $\bot < \top$. 
\begin{align*}
\semel{1} &\eqdef \{*\} &
\semel{\ty{Bool}} &\eqdef \{\ty{true},\ty{false}\}&
\semel{\ty{Loc}} &\eqdef \LOC &
\semel{\tau\to\sigma} &\eqdef \semel{\tau}\to\semel{\sigma} 
\\
\semel{\ty{Nat}} &\eqdef \mathbb{N} &
\semel{\ty{Val}} &\eqdef \VAL &
\semel{\ty{Exp}} &\eqdef \EXP &
\semel{\tau\times\sigma} &\eqdef \semel{\tau}\times\semel{\sigma} &
\\
&&
\semel{\ty{seq}\,\tau} &\eqdef (\semel{\tau})^* &&&
\semel{\ty{Spec}} &\eqdef \WEL \monarrow \{\bot,\top\} 
\end{align*}

\newcommand\qwe{\;}
\newcommand\qwee{\hspace{-.3em}}

\begin{figure*}[t]
\[
\begin{array}{r@{}l}
  \semel{\Gamma \vdash P \Rightarrow Q : \ty{Prop}}_\rho(w) & \eqdef 
\Setb{m}{\begin{array}{@{}l} \forall m'\geq m \ldotp \forall w' \geq w \ldotp m'\in \semel{\Gamma \vdash P : \ty{Prop}}_\rho(w') \\
\implies\, m' \in \semel{\Gamma \vdash Q : \ty{Prop}}_\rho(w')\end{array} } 
\\
  \semel{\Gamma \vdash P * Q : \ty{Prop}}_\rho(w) &
\multicolumn{1}{l}{\eqdef  
\Setb{m}{
\begin{array}{@{}l} 
\exists m_1, m_2 \ldotp
m_1 \in \semel{\Gamma \vdash P : \ty{Prop}}_\rho(w)
\\ \land m_2 \in \semel{\Gamma \vdash Q : \ty{Prop}}_\rho(w) 
\land m=m_1 \bullet m_2 
\end{array}} 
}
\\
  \semel{\Gamma \vdash M \mapsto N : \ty{Prop}}_\rho(w) &
\multicolumn{1}{l}{\eqdef  
    \{\,m\in\MEL\mid m(\semel{\Gamma \vdash M : \ty{Loc}}_\rho) = \semel{\Gamma \vdash N : \ty{Val}}_\rho \}
}
\\
  \semel{\Gamma \vdash \spec(M) : \ty{Prop}}_\rho(w) &
\multicolumn{1}{l}{\eqdef  
    \{\,m\in\MEL\mid\semel{\Gamma \vdash M : \ty{Spec}}_\rho(w) = \top\,\}
}
\\
  \semel{\Gamma \vdash \valid(P) : \ty{Spec}}_\rho(w) &
\multicolumn{1}{l}{\eqdef  
    (\,\semel{\Gamma \vdash P : \ty{Prop}}_\rho(w) = \MEL\,)
}
\\
  \semel{\Gamma\vdash \trace(t) : \ty{Prop}}_\rho(w) &
\multicolumn{1}{l}{\eqdef  
\Setb{ (h,\tau) }{ \tau \geq (\topTrace,\semel{\Gamma \vdash t : \ty{seq}~\ty{Val}}_\rho) }
} 
\\
  \semel{\Gamma\vdash \hist(t) : \ty{Prop}}_\rho(w) &
\multicolumn{1}{l}{\eqdef  
\Setb{ (h,\tau) }{ \tau \geq (\botHist,\semel{\Gamma \vdash t : \ty{seq}~\ty{Val}}_\rho) } 
}
\\
  \semel{\Gamma\vdash \inv(I) : \ty{Prop}}_\rho(w) &
\multicolumn{1}{l}{\eqdef  
\Setb{ m\! }{ \!\pi_2(w) = \Setb{t\!}{\!\semel{\Gamma \vdash I : \ty{seq}~\ty{Val} \to \ty{Bool}}_\rho (t) = \ty{true}} \!} 
}
\\
  \semel{\Gamma \vdash \triple{P}{e}{Q} : \ty{Spec}}_\rho(w) &
\multicolumn{1}{l}{\eqdef  \begin{array}{@{}l@{}}
\forall w' \geq w \ldotp \semel{\Gamma \vdash P : \ty{Prop}}_\rho(w') \\[-.5mm]
\hspace{4em} \subseteq \mathsf{wp}(\semel{\Gamma \vdash Q : \ty{Val} \to \ty{Prop}}_\rho)(\rho(e))(w') \end{array}
}\\
  \semel{\Gamma \vdash M=_\tau N : \ty{Spec}}_\rho(w) &
\multicolumn{1}{l}{\eqdef  
    (\,\semel{\Gamma \vdash M : \tau}_\rho = \semel{\Gamma \vdash N : \tau}_\rho\,)
}
\\[-1mm]
\end{array}
\]
\[
  \mathsf{wp}(Q) \eqdef \nu \mathit{wp}' \ldotp \lambda e, w \ldotp
    \Setb{m}{ \begin{array}{@{}l@{}}
        \forall r, t, s \ldotp~ t, s \vDash_w m \bullet r \implies \\[-2pt]
                \quad ( e,s \nrightarrow {} \implies e \in \VAL \land m \in Q(e)(w)) \\[-2pt]
                {} \land \forall a, e', s' \ldotp~ e, s \xrightarrow{a} e', s' 
                \implies \\[-2pt] \qquad \exists w' \geq w, m' \ldotp (t \cdot a), s' \vDash_{w'} m' \bullet r
                \land m' \in \mathit{wp}'(e')(w')
              \end{array}}
\]
\begin{align*}\\[-8.5mm]
t, (h, \gamma) \vDash_{(\gamma', I)} (h', (\topTrace, t')) &\eqdef (t = t' \land h = h' \land \gamma = \gamma' \land t \in I) \\[-2mm]
  t, (h, \gamma) \vDash_{(\gamma', I)} (h', (\botHist, t')) &\eqdef (t' \leqprefix t \land h = h' \land \gamma = \gamma' \land t \in I)
\end{align*}\vspace{-7.25mm}
\caption{Semantics of terms (selected cases).}\label{fig:semEL}\vspace{-3mm}
\end{figure*}

The semantics of a term $\Gamma \vd M : \tau$ is defined inductively as in Figure~\ref{fig:semEL}.
(We give selected cases; for full details see Appendix~\ref{sec:elsound}.) The semantics is
defined in terms of a variable environment
$\rho \in \semel{\Gamma}$ that maps variables of type $\tau$ to elements of $\semel{\tau}$:
$\semel{\Gamma} =
  \Setb{ \rho : \dom{\Gamma}\rightarrow \bigcup\nolimits_{\tau} \semel{\tau} }{ \forall (x : \tau) \in \Gamma.~\rho(x) \in \semel{\tau} }$

The definition of the weakest precondition $\mathsf{wp}(Q)$, used to define the semantics of Hoare triples,
enforces the invariant on traces generated by the considered terms.
It is defined as a greatest fixed-point, which establishes that the updates to the concrete state (the trace $t$, heap $h$, and function context $\gamma$) simulate updates to the abstract state (the resource $m$ and world $w$), with respect to the erasure relation ($\vDash$).
Updates to the concrete state are according to the operational semantics, while updates to the abstract state must preserve frames ($r$) and increase the world.
When a terminal configuration is reached, the abstract state must satisfy the postcondition.

The semantics of entailment in the logic is defined as:
\begin{align*}
&\Gamma \mid \Theta \models S \eqdef
  \forall w\in\WEL.\forall \rho \in \semel{\Gamma}.~\semel{\Gamma \vd \Theta}_\rho(w) \leq \semel{\Gamma \vd S : \ty{Spec}}_\rho(w)
\\
&\Gamma \mid \Theta \mid P \modelsel Q \eqdef
  \forall w\in\WEL.\forall \rho \in \semel{\Gamma}.~\semel{\Gamma \vd \Theta}_\rho(w) = \top\\
&\qquad\qquad\qquad\qquad \implies
    \semel{\Gamma \vd P : \ty{Prop}}_\rho(w) \subseteq \semel{\Gamma \vd Q : \ty{Prop}}_\rho(w)\!
\end{align*}
where $\semel{\Gamma \vd \Theta}_\rho(w) = \bigwedge_{T \in \Theta} \semel{\Gamma \vd T : \ty{Spec}}_\rho(w)$ (here $\bigwedge$ is lub  in $\{\bot,\top\}$).

Soundness of the logic is proved by induction on the structure of derivations (cf.\ Appendix~\ref{sec:elsound}). Using soundness, we can relate a proof of a triple $\triple{P}{e}{Q}$
to the store and trace obtained from the reduction of $e$.

\begin{theorem}[Soundness]\label{thm:soundel}
  If $\Gamma \mid \Theta \vd S$ then $\Gamma \mid \Theta \models S$.
\end{theorem}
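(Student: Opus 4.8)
The plan is to prove both entailment judgements simultaneously by induction on the structure of the derivation. The two forms are mutually dependent---\textsc{Csq} invokes assertion entailments $\Gamma \mid \Theta \mid P \vd Q$ in its premises, while \textsc{PSpec} and \textsc{PValid} let assertion entailments consult the specification context---so a single induction covering both the specification rules of Figure~\ref{fig:specrules} and the assertion rules is needed. For each rule I would fix an arbitrary world $w \in \WEL$ and environment $\rho \in \semel{\Gamma}$, assume the semantic interpretation of every premise, and derive the interpretation of the conclusion. Two structural facts are used pervasively: that $\semel{\cdot : \ty{Prop}}_\rho$ and $\semel{\cdot : \ty{Spec}}_\rho$ are monotone in the world order of $\WEL$ (new function names may appear, but the invariant $\pi_2(w)$ is fixed), and that Prop-denotations are upward closed in the resource monoid $\MEL$.

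The purely logical rules---the higher-order and propositional fragment, and the standard BI structural rules---are routine, following from the product structure of $\MEL$, the definition of $*$ via $\bullet$, and the Kripke reading of implication and the quantifiers. The trace-primitive entailments are discharged directly from the structure of the trace component $(\{\botHist,\topTrace\}\times\VAL^*,\bullet,(\botHist,\varepsilon))$. Concretely: \textsc{PInvDupl} holds because $\inv(I)$ constrains only the world $\pi_2(w)$, which is shared by both conjuncts; \textsc{PHistDupl} holds because $(\botHist,t)\bullet(\botHist,t)=(\botHist,t)$ is idempotent; \textsc{PAllocHist} holds because $(\topTrace,t)=(\topTrace,t)\bullet(\botHist,t)$ (as $t \leqprefix t$), which carves a duplicable history out of the unique trace token; and \textsc{PUseHist} is precisely the definedness side-condition of $(\topTrace,t_1)\bullet(\botHist,t_2)$, which forces $t_2 \leqprefix t_1$. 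For these cases the semantics of $\trace(t)$ as exactly $(\topTrace,t)$ and of $\hist(t)$ as the up-closure of $(\botHist,t)$ does all the work.

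The substantive cases are the Hoare-triple rules, which must be read through the greatest fixed point $\mathsf{wp}(Q) = \nu\, \mathit{wp}'\ldotp \dots$; to show a set of resources is contained in $\mathsf{wp}$ I would use Park coinduction, exhibiting that set as a post-fixed point of the defining functional. \textsc{Ret} and the atomic rules (\textsc{Read}, \textsc{Write}, \textsc{Alloc}, and \textsc{Emit}) unfold $\mathsf{wp}$ once and check the single reduction step directly; for \textsc{Emit} the crucial point is that the forcing relation $\vDash$ requires $t \cdot v \in I$, which is supplied by the premise $\Gamma \mid \Theta \vd I(t\cdot v)$ together with the fact that $\inv(I)$ pins the world's invariant. \textsc{Csq} follows from monotonicity of $\mathsf{wp}$ in its pre- and postconditions plus the inductive hypotheses for the assertion entailments. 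I expect the main obstacles to be threefold. First, \textsc{Frame} requires a frame-closure lemma for $\mathsf{wp}$; although the defining functional already universally quantifies over a frame $r$, establishing that an external $R$ can be absorbed into $r$ across all unfoldings needs a coinductive argument. Second, and hardest, is \textsc{Bind}: soundness demands a simulation lemma showing that reductions of $K[e]$ decompose into reductions of $e$ inside $K$ until $e$ reaches a value, after which execution continues as $K[v]$; proving this relationship against the coinductive $\mathsf{wp}$ is the technical core of the whole theorem. Third, \textsc{Abs} produces a nested triple in its postcondition, so its soundness must account for the operational step that names the $\lambda$-abstraction as a fresh $f$, using world monotonicity to validate the new name and then the inductive hypothesis to show the named function meets the triple for every argument. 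Once these wp-level lemmas are in place, the remaining specification rules (\textsc{Hyp}, \textsc{Valid}, \textsc{SpecOut}) follow from the definitions of $\semel{\valid(P)}$ and $\semel{\spec(M)}$ and the handling of the specification context.
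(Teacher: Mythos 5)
Your proposal is correct and follows essentially the same route as the paper's appendix proof: rule-by-rule semantic validation organised as an induction over derivations, with coinductive (Park-style) lemmas for monotonicity, framing, and the \textsc{Bind} decomposition of $K[e]$-reductions, single unfoldings of $\mathsf{wp}$ for the atomic rules, and world extension plus monotonicity for \textsc{Abs}. The only difference is one of completeness in your favour: the paper's appendix stops short of verifying the assertion-entailment axioms (\textsc{PInvDupl}, \textsc{PHistDupl}, \textsc{PAllocHist}, \textsc{PUseHist}), whereas your monoid-level justifications for them (idempotence of $(\botHist,t)$, the absorption $(\topTrace,t)=(\topTrace,t)\bullet(\botHist,t)$, and the definedness side-condition forcing $t_2\leqprefix t_1$) are exactly what a complete proof requires.
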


\begin{corollary}\label{lem:el-multistep-exec}
Suppose $\Gamma \mid - \vd \triple{P}{e}{Q}$
and let $w \in \WEL, \rho \in \semel{\Gamma}$ and $m \in \semel{\Gamma \vdash P}_\rho(w)$.
Then, for all $r, t, s$ such that $t, s \vDash_w m$ and 
$\rho(e),s \xrightarrow{t'}{\!}^*\,\,  e',s' \not\rightarrow$, for some  $s',t'$, we have that
$e' \in \VAL$ and $\exists w' \geq w.\,\exists m' \in \semel{\Gamma \vdash Q}_\rho(e')(w').\,(t \cdot t'), s' \vDash_{w'} m'$.
\end{corollary}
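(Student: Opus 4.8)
The plan is to derive the statement from Soundness (Theorem~\ref{thm:soundel}) and then propagate the single-configuration guarantee encoded by $\mathsf{wp}$ along the entire reduction, by induction on its length, crucially exploiting that $\mathsf{wp}(Q)$ is defined as a greatest fixed point.

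First I would apply Theorem~\ref{thm:soundel} to the assumed derivation $\Gamma \mid - \vd \triple{P}{e}{Q}$, obtaining the semantic judgement $\Gamma \mid - \models \triple{P}{e}{Q}$. Unfolding the definition of specification entailment with empty $\Theta$ (so that $\semel{\Gamma \vd \Theta}_\rho(w) = \top$, the empty meet) gives $\semel{\Gamma \vdash \triple{P}{e}{Q}}_\rho(w) = \top$ for every $w$ and $\rho$. Unfolding the semantics of the triple from Figure~\ref{fig:semEL} at the given $w$ and $\rho$ then yields $\semel{\Gamma \vdash P}_\rho(w) \subseteq \mathsf{wp}(\semel{\Gamma \vdash Q}_\rho)(\rho(e))(w)$, whence $m \in \semel{\Gamma \vdash P}_\rho(w)$ gives $m \in \mathsf{wp}(\semel{\Gamma \vdash Q}_\rho)(\rho(e))(w)$.

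The heart of the argument is the following generalised claim, proved by induction on the number $n$ of reduction steps; write $Q$ for $\semel{\Gamma \vdash Q}_\rho$. For all $e_0, w_0, m_0, r, t_0, s_0$, if $m_0 \in \mathsf{wp}(Q)(e_0)(w_0)$, $t_0, s_0 \vDash_{w_0} m_0 \bullet r$, and $e_0, s_0 \xrightarrow{t''}{\!}^* e', s' \not\rightarrow$ in $n$ steps, then $e' \in \VAL$ and there are $w' \geq w_0$ and $m' \in Q(e')(w')$ with $(t_0 \cdot t''), s' \vDash_{w'} m' \bullet r$. The key tool is that $\mathsf{wp}(Q)$, being a greatest fixed point, equals its own one-step unfolding (Knaster--Tarski), so every $m_0 \in \mathsf{wp}(Q)(e_0)(w_0)$ satisfies the body of the $\mathsf{wp}$ definition with $\mathit{wp}'$ instantiated to $\mathsf{wp}(Q)$ itself. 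In the base case $n = 0$ we have $e_0 = e'$, $s_0 = s'$ and $t'' = \varepsilon$, so $e_0, s_0 \not\rightarrow$; the termination clause of the unfolded body then gives $e' \in \VAL$ and $m_0 \in Q(e')(w_0)$, and the hypothesis $t_0, s_0 \vDash_{w_0} m_0 \bullet r$ supplies the required relation with $w' = w_0$ and $m' = m_0$. For the inductive step, write the reduction as $e_0, s_0 \xrightarrow{a} e_1, s_1 \xrightarrow{t'''}{\!}^* e', s'$; the step clause of the unfolded body applied to $e_0, s_0 \xrightarrow{a} e_1, s_1$ (with frame $r$) produces $w_1 \geq w_0$ and $m_1 \in \mathsf{wp}(Q)(e_1)(w_1)$ with $(t_0 \cdot a), s_1 \vDash_{w_1} m_1 \bullet r$, and applying the induction hypothesis to the remaining $n-1$ steps from $(e_1, s_1)$ closes the case, using transitivity of $\geq$ on worlds and the identity $(t_0 \cdot a) \cdot t''' = t_0 \cdot t''$ (where $a \cdot t''' = t''$, with a silent $\epsilon$-label contributing nothing).

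The corollary is this claim instantiated at $e_0 = \rho(e)$, $w_0 = w$, $m_0 = m$, $t_0 = t$ and $s_0 = s$, carrying the same frame $r$ as in the statement; taking $r$ to be the monoid unit $(\emptyset, (\botHist, \varepsilon))$ recovers the displayed relation without an explicit frame, since then $m \bullet r = m$. I expect the main obstacle to be bookkeeping rather than depth: one must thread the frame $r$ faithfully through each one-step unfolding (preserving the frame is exactly what the $\mathsf{wp}$ body guarantees), compose the world extensions $w_0 \leq w_1 \leq \dots \leq w'$, and match the labelling so that an $\epsilon$-step leaves the emitted trace unchanged while a $v$-labelled step appends $v$. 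No coinductive subtlety arises beyond the finite induction on $n$, since unfolding a greatest fixed point once is valid simply because it is a fixed point of its (monotone) defining operator.
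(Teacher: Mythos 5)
Your proposal is correct and takes essentially the same route as the paper's proof: both apply Theorem~\ref{thm:soundel} to obtain $m \in \mathsf{wp}(\semel{\Gamma \vdash Q}_\rho)(\rho(e))(w)$ and then proceed by induction on the length of the reduction, using the fixed-point property of $\mathsf{wp}$ to unfold it one step at a time (base case via the termination clause, inductive case via the step clause plus transitivity of world extension). The only cosmetic difference is that you thread an arbitrary frame $r$ through the induction and instantiate it to the monoid unit at the end, whereas the paper implicitly works with the unit frame (i.e.\ the relation $t, s \vDash_w m$) throughout.
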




\section{Conclusions} \label{sec:conclusions}

In this paper we demonstrated a formal approach for relating library specifications, expressed in 
separation logic, with the trace properties they enforce on the interaction between clients verified
against the specified library and the library itself. 
The distinctive strength of our technique is that it is based purely on the abstract library specification
and is independent of both client and library implementations. 
As such, it differs from the standard
verification approach where one verifies a program against a given specification  expressing the desired
property. 
%
Since our main goal has been to establish a theoretical foundation relating  specifications and trace 
properties, we focused on expressiveness rather than automation.


\subsection{Related Work}

Several lines of work have targeted static verification of safety trace properties of object-oriented and higher-order programs. 

A particularly influential approach has been Typestates~\cite{Strom:1986:typestate,Deline:2004:typestate-obj}. 
These can be seen as specifying trace
properties using pre/post-conditions. 
They have been used
to check safety temporal properties of programs, by associating abstract states to objects, 
then specifying which methods can be called at each state and how they make the state evolve.
In~\cite{Bierhoff:2007:modtc-aliasedobj,Bierhoff:2009:apicheck-ap} they are combined with aliasing information to give a sound and modular analysis of API usage protocols, and to check a specification for the iterator module similar to ours, yet based on a single object.
Multi-object properties, like the iterator one, can be captured by an extension of typestates using tracematches to specify intensional properties~\cite{naeem2008extending}.

Static analyses based on type systems have been widely used to check resource usage, like our file module example.
Linear types are used in~\cite{DeLine:2001:hlprot-llsoft} to develop Vault, a programming language used to design device drivers,
where resource management protocols can be specified explicitly using annotations in the source code.
An automatic analysis has then been developed in~\cite{Igarashi:2002:rua}.  

Type and effect systems have been used in~\cite{Skalka:2008:tteff-ho} to infer resource usage, represented by an LTS, and
combined with model checking to verify trace properties of programs.
Such systems have been applied to Featherweight Java in~\cite{Skalka:2008:tteff-oo}, where challenges coming from object orientation
like inheritance and dynamic dispatch are tackled.

Higher-order model checking has been used to provide a sound and complete resource usage analysis,
using higher-order recursion scheme model checking for a fragment of the $\mu$-calculus~\cite{Kobayashi:2009:hors-verif}.

Those approaches have been designed to support automated static verification and thus
trade off expressiveness for automation. Here we have made the opposite trade-off and focused on being able to capture expressive trace properties. 
As seen, we can specify 
non-regular properties~(\ref{sec:wb-example}, the language is visibly pushdown~\cite{AlurM04}), others that rely on tracking an unbounded number of 
objects (\ref{sec:iterator-example} and Appendix~\ref{sanitize-example}, where we need to track all valid iterators and strings respectively), and we can even go beyond 
context-free languages (\ref{stackmap-example}, the language requires an order-2 pushdown automaton~\cite{maslov1976multilevel}). On an orthogonal direction,  working with a logic with quantification we can specify traces from infinite alphabets of trace events~(\ref{sec:iterator-example}, \ref{sec:wb-example}, \ref{stackmap-example} and Appendix~\ref{sanitize-example}).

Our key contribution is a technique for formally relating separation logic
  specifications with the temporal properties they enforce,
through wrapping abstract resources with assertions about traces. 
On the other hand, a very active line of work has targeted the verification of fine-grained concurrent data structures using
program logics with histories~\cite{gotsman-rcu,feng} and separation logics~\cite{fcsl,tada}. Fine-grained concurrent
algorithms use locking at the level of individual memory operations and their correctness often relies on subtle
temporal properties about internal interactions within libraries. These approaches include primitives for reasoning explicitly about traces~\cite{gotsman-rcu,feng} or assign to (fine-grained) separation logic primitives history-oriented interpretations~\cite{fcsl}. 
Temporal reasoning is thus achieved in a different way than herein, namely by specifying and verifying
  the underlying libraries themselves, whereas 
we derive temporal properties from specifications
  that are not themselves temporal.

The F7 and F$^{*}$ programming languages provide another technique for reasoning about trace properties~\cite{f-star,f-seven}. 
The technique is primarily
aimed at verifying cryptographic primitives, but has also been applied to access control policies about
interactions between a client and resources managed through libraries~\cite{triples-hoare}. 
It is based on extending the base programming
language with a primitive for assuming that a given formula holds and an assert primitive that fails if a given formula
does not follow from all previously assumed formulas. 
Access control policies are encoded by inserting appropriate 
assume and assert statements and proving that no assert can fail. 
In comparison to our work, the approach
does not establish a formal connection between the inserted assume/assert statements and the property enforced
on the execution. It is also non-local in that any assume statement can introduce a contradiction and
break adequacy.

\subsection{Further Directions}

We demonstrated the power of even a basic higher-order separation logic 
for enforcing elaborate trace properties. A more expressive logic with better
support for shared-resource  reasoning would allow us to enforce even more elaborate
protocols. In the future we therefore intend to apply our technique to a fully 
featured concurrent higher-order separation logic~\cite{Iris}.

In the concurrent setting, linearisability~\cite{Linearizability} is a trace property that is commonly used to specify that operations of a library behave as if they were atomic.
Several concurrent separation logics adopt a different approach to specifying atomicity, due to Jacobs and Piessens~\cite{JacobsPOPL2011}.
Although there is a strong intuitive argument that this approach implies linearisability, a formal connection has not been made.
Our technique could be used to formalise such a connection.

Another direction concerns the description of the captured protocols via formal
language tools, such as automata. 
Note that such a step is not to be taken lightly, as one would need to establish a formal link between the 
description and the logical specification. A related area for future work would be a formal 
(and, ideally, automated) procedure for deriving separation logic specifications for enforcing a given 
protocol from a formal definition of the protocol.


\newpage

\bibliographystyle{splncs03}
\bibliography{references}

\newpage
\appendix
\section{String Sanitisation}\label{sanitize-example}

Another interesting example is string sanitisation~\cite{Grigore:2013:runtimeverif-regaut}. Suppose strings are fetched from a web form,
processed internally and passed as part of an SQL query to a database server. To avoid injection attacks,
all inputs from the web form should be sanitized before being passed to the database server. We can express
this as a \emph{taint protocol}. Strings received from the user are considered \emph{tainted} initially.
Taint remains with a string and is also passed to any string produced by processing some tainted string. 
The only way to remove taint from a string is by {\em sanitising} it. 
A protocol we may require is:
\begin{quote}\em
No tainted string can reach the database server.
\end{quote}
To express this trace property formally, we first define a suitable wrapping around a string library, 
that emits events describing the interaction between library and client. Suppose that a string library consists
of five methods: an \tty{input} method for obtaining strings from the user via a web form, a \tty{constant}
method for declaring string constants in the code, a \tty{sanitize} method, a concatenation method, \tty{concat},
and a sink method, \tty{sink}, for sending a given string to the database server. Formally, we represent the
library as a 5-tuple of these operations. We can now define a wrapper that takes a string library and returns
an instrumented string library. The instrumentation is fairly straightforward and simply emits a suitable event
indicating the operation called and the argument and/or return value of the given operation, when relevant:
\begin{align*}
&\wrapstr \eqdef\ \lambda (\mathit{input}, \mathit{constant}, \mathit{sanitize}, \mathit{concat}, \mathit{sink}).~(\\
  &\quad
  \begin{aligned}[t]
  &
    \lambda \_.~\TEMPletin{r = \mathit{input}()}{\TEMPemit \pair{\tty{input}}{r};\,r}, \\
  &
    \lambda y.~\TEMPletin{r = \mathit{constant}(y)}{\TEMPemit \pair{\tty{constant}}{r};\,r}, \\
  &
    \lambda y.~\mathit{sanitize}(y);\,\TEMPemit \pair{\tty{sanitize}}{y}, \\
  &
    \lambda (y_1, y_2).~\TEMPletin{r = \mathit{concat}(y_1, y_2)}{\TEMPemit \pair{\tty{concat}}{r, y_1, y_2};\,r}, \\
  &
    \lambda y.~\mathit{sink}(y);\,\TEMPemit \pair{\tty{sink}}{y}\,)
  \end{aligned}
\end{align*}
We can now define the trace property as a constraint on the traces generated by linking a client
with an instrumented library. 
We let the alphabet be the following countable set:
\begin{align*}
\Sigma &= \{\pair{e}{s} \mid e \in \{\tty{constant},\tty{input},\tty{sanitize},\tty{sink}\}, s \in \mathit{Loc}\} \\
&\quad\,\cup\{\pair{\tty{concat}}{s,s_1,s_2}\mid s,s_1,s_2\in\mathit{Loc}\}
\end{align*}
Intuitively, we want to constrain traces such that for any 
$\langle \tty{sink}, s \rangle$ event in the trace, all the strings $s'$ used to construct $s$ are
safe to emit, meaning that the trace contains a corresponding \tty{constant} event or \tty{sanitize}
event for such $s'$. We first define, by induction on $t$, a predicate $\esafep(s, t)$ to express that the string $s$
is safe to emit given that the current trace is $t$. We let
$\esafep(s, \varepsilon) \eqdef \bot$ and:
\[\begin{array}{rl}
\esafep(s, h::t) \eqdef & \esafep(s, t) \lor h = \langle \tty{constant}, s \rangle \lor h = \langle \tty{sanitize}, s \rangle~\lor \\
  & (\exists s_1, s_2 \ldotp h = \langle \tty{concat}, s, s_1, s_2 \rangle \land \esafep(s_1, t) \land \esafep(s_2, t))
  \end{array}\]
Note such inductive predicates are definable in higher-order logic.

We wish to ensure that once a string has been sanitised it can never become tainted again or,
stated in terms of $\esafep$, once a string is safe for $t$ then it is also safe for all future
histories $t'$ such that $t \leqprefix t'$. To ensure this we require the library to ensure
that \tty{input}, \tty{constant} and \tty{concat} always return fresh string pointers that
have never been used before. We thus define the language of valid traces, $\Lstr$, as the set of all
strings $t \in \Sigma^*$ such that $\validsanip(t)$, where:
\begin{align*}
\validsanip(t) \eqdef
    (&\forall n, s \ldotp t[n] = \langle \tty{sink}, s \rangle \implies \esafep(s, t)) \lor \notfreshp(t)
 \\
\isallocp(s, t, n) &\eqdef t[n] = \langle \tty{constant}, s \rangle \lor t[n] = \langle \tty{input}, s \rangle \lor{}\\
&\quad\;\  \exists s_1, s_2 \ldotp t[n] = \langle \tty{concat}, s, s_1, s_2 \rangle \\
\notfreshp(t) &\eqdef
  \exists n, m, s \ldotp n \neq m \land \isallocp(s, t, n) \land \isallocp(s, t, m)
\end{align*}
The $\validsanip(t)$ predicate expresses that either every sink event in $t$ uses a safe
string $s$ or the library does not ensure sufficient freshness of string pointers. In the latter case
 we do not constrain
the client at all. We thus have the following crucial property, which ensures that if a string
$s$ is safe for the current trace $t$ then it is also safe for any future trace $t'$ with
sufficiently fresh string pointers:
\begin{align*}
\forall s, t_1, t_2 \ldotp
  t_1 \leqprefix t_2 \land \esafep(s, t_1) &\implies \esafep(s, t_2) \lor \notfreshp(t_2)
\end{align*}

Below we define a separation logic specification for the string sanitisation library that
ensures clients only call \tty{sink} with safe strings. The specification uses a resource
$\Rp$ for the local state of the sanitisation library, a string resource $\strp(x)$ for
each string and a safe resource $\safep(x)$ which expresses that the string $x$ is safe.
As expected, \tty{sanitize} turns a string into a safe string and \tty{concat} applied to
safe strings yields a safe string.  \vspace{-1mm}
\begin{align*}
&
  \Phistr(P_{0}, (\mathit{input}, \mathit{constant}, \mathit{sanitize}, \mathit{concat}, \mathit{sink})) \eqdef \\
&\;\;
  \begin{aligned}[t]
  &
    \exists \Rp : \ty{Prop} \ldotp \exists \strp, \safep : \ty{Val} \rightarrow \ty{Prop} \ldotp  \\
  &\quad
    \valid(P_{0}\! \implies\! \Rp) \land 
    \valid(\forall s \ldotp \safep(s)\! \implies\! \safep(s) * \safep(s))~\land \\
  &\quad
    \triple{\Rp}{\mathit{input}()}{r \ldotp \Rp * \strp(r)}~\land 
    \forall s \ldotp 
    \triple{\Rp}{\mathit{constant}(s)}{ r \ldotp \Rp * \strp(r) * \safep(r)}~\land \\
  &\quad
    \forall s \ldotp 
    \triple{\Rp * \strp(r)}{\mathit{sanitize}(s)}{\Rp * \strp(r) * \safep(s)}~\land \\
  &\quad
    \forall s_1, s_2 \ldotp 
    \begin{aligned}[t]
    & \{\Rp * \strp(s_1) * \strp(s_2)\} 
     \quad {\mathit{concat}(s_1, s_2)} 
    \{r \ldotp \Rp * \strp(s_1) * \strp(s_2) * \strp(r)\}~\land 
    \end{aligned} \\
  &\quad
    \forall s_1, s_2 \ldotp 
    \begin{aligned}[t]
    & \{ \Rp * \strp(s_1) * \strp(s_2) * \safep(s_1) * \safep(s_2) \} \\
    &\quad
      \mathit{concat}(s_1, s_2) \{ x \ldotp \Rp * \strp(s_1) * \strp(s_2) * \strp(x) * \safep(x) \}~\land
    \end{aligned} \\
  &\quad
    \forall s \ldotp 
    \triple{\Rp * \strp(s) * \safep(s)}{\mathit{sink}(s)}{\Rp * \strp(s)}
  \end{aligned}
\end{align*}
As usual, we proceed by proving that the instrumentation preserves the library specification
and generates traces in $\Lstr$.
\begin{lemma}
$P_0, ops \mid \Phistr(P_0, ops) \vdel \Phistr(P_0 * \trace(\varepsilon) * \inv(\Lstr), \wrapstr(ops))$.
\end{lemma}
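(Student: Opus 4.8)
The plan is to follow the same recipe as the previous lemmas: reinterpret the three abstract predicates $\Rp$, $\strp$ and $\safep$ guaranteed by $\Phistr(P_0,\mathit{ops})$ as trace-aware predicates $\Rpw$, $\strpw$ and $\safepw$, and then discharge each conjunct of $\Phistr(P_0*\trace(\varepsilon)*\inv(\Lstr),\wrapstr(\mathit{ops}))$ by a Hoare-proof outline combining the assumed underlying specification with the \textsc{Emit}, \textsc{PAllocHist} and \textsc{PUseHist} rules. Concretely I would take
\begin{align*}
\Rpw &\eqdef \Rp * \exists t\in\Lstr\ldotp\trace(t)*\inv(\Lstr), \\
\strpw(x) &\eqdef \strp(x), \\
\safepw(s) &\eqdef \safep(s) * \exists t'\ldotp\hist(t')\land(\esafep(s,t')\lor\notfreshp(t')).
\end{align*}
Here $\Rpw$ owns the unique $\trace$ resource and the invariant and is threaded through every operation, while the duplicable $\safepw(s)$ records, via a $\hist$ snapshot $t'$, a past trace for which $s$ was already safe (or for which freshness had already been broken). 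The wrapped initial obligation $\valid(P_0*\trace(\varepsilon)*\inv(\Lstr)\implies\Rpw)$ follows from the assumed $P_0\implies\Rp$ together with $\validsanip(\varepsilon)$ (an empty trace has no $\tty{sink}$ events, so the first disjunct of $\validsanip$ holds vacuously), giving $\varepsilon\in\Lstr$. The required duplication axiom $\valid(\forall s\ldotp\safepw(s)\implies\safepw(s)*\safepw(s))$ follows from duplicability of $\safep$ (assumed), of $\hist$ (\textsc{PHistDupl}), and of the pure disjunct $\esafep(s,t')\lor\notfreshp(t')$.

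For the six operation triples I would first establish a handful of facts about $\Lstr$ and $\esafep$ and then read off each outline. The key facts are: (i) $\esafep$ is monotone under trace extension — this is exactly the crucial property $t_1\leqprefix t_2\land\esafep(s,t_1)\implies\esafep(s,t_2)\lor\notfreshp(t_2)$, together with the fact that $\notfreshp$ is preserved under extension; (ii) appending any non-$\tty{sink}$ event to a trace in $\Lstr$ keeps it in $\Lstr$, since the set of $\tty{sink}$ events is unchanged and each remains safe by (i), while $\notfreshp$, if it held, persists; and (iii) the defining clauses of $\esafep$ directly give $\esafep(s,t\cdot\langle\tty{constant},s\rangle)$, $\esafep(s,t\cdot\langle\tty{sanitize},s\rangle)$, and $\esafep(r,t\cdot\langle\tty{concat},r,s_1,s_2\rangle)$ whenever $\esafep(s_1,t)\land\esafep(s_2,t)$. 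With (ii) the emits in $\mathit{input}$, $\mathit{constant}$, $\mathit{sanitize}$ and both $\mathit{concat}$ cases are immediately justified; with (iii) the postconditions $\safepw(r)$ of $\mathit{constant}$ and of the safe $\mathit{concat}$ are produced by allocating a fresh $\hist$ via \textsc{PAllocHist} on the extended trace. For the safe $\mathit{concat}$ I would first use \textsc{PUseHist} to turn the two snapshots in $\safepw(s_1)$ and $\safepw(s_2)$ into prefixes of the current trace $t$ owned by $\Rpw$, then apply (i) to lift safety of $s_1,s_2$ up to $t$, and finally invoke clause (iii).

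The main obstacle — and the only genuinely delicate case — is $\mathit{sink}$, where the obligation is $t\cdot\langle\tty{sink},s\rangle\in\Lstr$, and here the $\notfreshp$ escape in both $\validsanip$ and $\safepw$ is essential and must be handled consistently. Since the separation-logic specification does not itself force $\mathit{input}$, $\mathit{constant}$ or $\mathit{concat}$ to return genuinely fresh pointers, a correctly $\safep$-typed string $s$ might in principle have been re-allocated after sanitisation, so we cannot argue unconditionally that $s$ is semantically untainted. Using \textsc{PUseHist} on the $\hist(t')$ in $\safepw(s)$ and the $\trace(t)$ in $\Rpw$ gives $t'\leqprefix t$, and then (i) yields $\esafep(s,t)\lor\notfreshp(t)$: in the $\notfreshp(t)$ branch the extended trace still violates freshness and so lies in $\Lstr$ via the second disjunct of $\validsanip$; in the $\esafep(s,t)$ branch $t$ must satisfy the first disjunct, so every existing $\tty{sink}$ stays safe by (i) and the newly emitted $\tty{sink}$ for $s$ is safe too. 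Arranging $\safepw$ to carry precisely the disjunction $\esafep(s,t')\lor\notfreshp(t')$ — rather than $\esafep$ alone — is exactly what makes both the producing side (the safe $\mathit{concat}$ postcondition) and this consuming side go through uniformly; this is the single design choice requiring care, the remaining outlines being routine given facts (i)–(iii).
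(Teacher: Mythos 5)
Your proposal is correct and takes essentially the same approach as the paper's proof: your wrapped predicates $\Rpw$, $\strpw(s)$ and $\safepw(s)$ coincide with the paper's definitions (up to the immaterial omission of the constraint $t'\in\Lstr$ inside $\safepw$), and your facts (i)--(iii) are exactly the properties of $\esafep$ and $\Lstr$ the paper invokes, including the prefix-monotonicity property $t_1\leqprefix t_2\land\esafep(s,t_1)\implies\esafep(s,t_2)\lor\notfreshp(t_2)$ for the $\tty{sink}$ case and the concatenation clause for the safe $\mathit{concat}$ case, discharged via \textsc{PUseHist} and \textsc{PAllocHist} just as in the paper's outlines. You additionally spell out the initial-resource and $\safepw$-duplicability obligations that the paper's sketch leaves implicit, which is a welcome bit of extra rigour rather than a deviation.
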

\begin{proof}[Proof sketch]
To prove that the instrumentation preserves the library specification we first have
to define instrumented versions of the abstract representation predicates. We let the
local state resource $\Rpw$ take ownership of the $\trace$ resource and let $\safepw$
assert the existence of some trace prefix in which the given string is safe or for
which the library failed to ensure sufficiently fresh string pointers:
\begin{align*}
\Rpw &\eqdef \exists t \in \Lstr \ldotp \Rp * \trace(t) * \inv(\Lstr),\qquad\quad \strpw(s) \eqdef \strp(s) \\
\safepw(s) &\eqdef \exists t \in \Lstr \ldotp \safep(s) * \hist(t) * (\esafep(s, t) \lor \notfreshp(t))
\end{align*}
It remains to show that the instrumentation preserves the specifications of each of the library
methods. 
The most interesting cases are \tty{sink}, and \tty{concat} when on
two safe strings. For the latter we have:\vspace{-1mm}
\begin{source}
[Context $s_1, s_2 : \ty{Val}$]
$\{ \Rpw * \strpw(s_1) * \strpw(s_2) * \safepw(s_1) * \safepw(s_2) \}$
$\{ \Rp * \strp(s_1) * \strp(s_2) * \safep(s_1) * \safep(s_2) * \exists t, t_1, t_2 \in \Lstr \ldotp \trace(t) * \inv(\Lstr)$ 
  $* \hist(t_1) * \hist(t_2) * (\esafep(s_1, t_1) \lor \notfreshp(t_1)) * (\esafep(s_2, t_2) \lor \notfreshp(t_2)) \}$
let $s$ = concat($s_1$, $s_2$) in
$\{ \Rp * \strp(s_1) * \strp(s_2) * \strp(s) * \safep(s) * \exists t, t_1, t_2 \in \Lstr \ldotp \trace(t) * \inv(\Lstr)$
  $*\hist(t_1) * \hist(t_2) * (\esafep(s_1, t_1) \lor \notfreshp(t_1)) * (\esafep(s_2, t_2) \lor \notfreshp(t_2)) \}$
emit<<concat, $s$, $s_1$, $s_2$>>;
$\{ \Rp * \strp(s_1) * \strp(s_2) * \safep(s) * \exists t, t' \in \Lstr \ldotp$
  $\quad  \trace(t) * \inv(\Lstr) * \hist(t') *
   (\esafep(s, t')  \lor \notfreshp(t')) \}$
$\{ \Rpw * \strpw(s_1) * \strpw(s_2) * \strpw(s) * \safepw(s) \}$
\end{source}\vspace{-.5mm}
here we use the following properties of $\esafep$ to verify the emit:
\begin{align*}
\forall s, s_1, s_2, t \ldotp 
  \esafep(s_1, t) \land{}& \esafep(s_2, t) \implies \esafep(s, t \cdot \langle \tty{concat}, s, s_1, s_2 \rangle) 
\\
\forall s, s_1, s_2, t \ldotp
  \esafep(s_1, t) \land{}& \esafep(s_2, t)\implies t \cdot \langle \tty{concat}, s, s_1, s_2 \rangle \in \Lstr\\[-7mm]
\end{align*}
For the sink case, we have:\vspace{-1mm}
\begin{source}
[Context $s : \ty{Val}$]
$\{ \Rpw * \strpw(s) * \safepw(s) \}$
$\{ \Rp * \strp(s) * \safep(s) *\exists t, t' \in \Lstr \ldotp \trace(t) * \inv(\Lstr) * \safep(s) * \hist(t') * \esafep(s, t') \}$
sink($s$);
$\{ \Rp * \strp(s) * \exists t, t' \in \Lstr \ldotp \trace(t) * \inv(\Lstr) *\safep(s) * \hist(t') * (\esafep(s, t') \lor \notfreshp(t')) \}$
emit<<sink, $s$>>;
$\{ \Rp * \strp(s) * \exists t \in \Lstr \ldotp \trace(t \cdot \langle \tty{sink}, s \rangle) * \inv(\Lstr) \}$
$\{ \Rpw * \strpw(s) \}$
\end{source}
\end{proof}



\section{Soundness}
\label{sec:elsound}

\begin{figure*}
\begin{align*}
  \semel{\Gamma \vdash x : \tau}_\rho &\eqdef  \rho(x) \\
  \semel{\Gamma \vdash \bot : \ty{Prop}}_\rho(w) &\eqdef  \emptyset \\
  \semel{\Gamma \vdash \top : \ty{Prop}}_\rho(w) &\eqdef  \semel{\Gamma \vdash \emp : \ty{Prop}}_\rho(w) \eqdef  \MEL \\
  \semel{\Gamma \vdash \lambda x : \tau \ldotp M : \tau \rightarrow \sigma}_\rho &\eqdef 
    \lambda v \in \semel{\tau} \ldotp \semel{\Gamma, x : \tau \vdash M : \sigma}_{\rho[x \mapsto v]} \\
  \semel{\Gamma \vdash M~N : \sigma}_\rho &\eqdef 
    \semel{\Gamma \vdash M : \tau \rightarrow \sigma}_\rho(\semel{\Gamma \vdash N : \tau}_\rho) \\
  \semel{\Gamma \vdash \forall x : \tau \ldotp P : \ty{Prop}}_\rho(w) &\;\ \eqdef  \bigcap\nolimits_{v \in \semel{\tau}} \semel{\Gamma, x : \tau \vdash P}_{\rho{[x \mapsto v]}}(w) 
\\
  \semel{\Gamma \vdash P \land Q : \ty{Prop}}_\rho(w)  &
\eqdef  
\semel{\Gamma \vdash P:\ty{Prop}}_\rho(w)\cap\semel{\Gamma\vdash  Q : \ty{Prop}}_\rho(w)
\\
  \semel{\Gamma \vdash P \Rightarrow Q : \ty{Prop}}_\rho(w) &
\eqdef  
\Setb{m}{\begin{array}{@{}l@{}} \forall m'\geq m \ldotp \forall w' \geq w \ldotp
 m'\in \semel{\Gamma \vdash P : \ty{Prop}}_\rho(w') \\ {} \hfill \implies\, m' \in \semel{\Gamma \vdash Q : \ty{Prop}}_\rho(w') \end{array}} 
\\
  \semel{\Gamma \vdash P * Q : \ty{Prop}}_\rho(w) &
\eqdef  
\Setb{m}{
\begin{array}{@{}l} 
\exists m_1, m_2 \ldotp
m_1 \in \semel{\Gamma \vdash P : \ty{Prop}}_\rho(w) \\ {}
\land m_2 \in \semel{\Gamma \vdash Q : \ty{Prop}}_\rho(w) 
\land m=m_1 \bullet m_2 
\end{array}} 
\\
  \semel{\Gamma \vdash P \wand Q : \ty{Prop}}_\rho(w) &
\eqdef  
\Setb{m}{%
    \begin{array}{@{}l}
        \forall w' \geq w \ldotp \forall m' \geq m \ldotp \forall m'' \in \MEL \ldotp \\
        m' \bullet m''~\text{defined}\land m'' \in \semel{\Gamma \vdash P : \ty{Prop}}_\rho(w') \\
{}\hfill \implies  m' \bullet m'' \in \semel{\Gamma \vdash Q : \ty{Prop}}_\rho(w')
    \end{array}}
\\
  \semel{\Gamma \vdash M \mapsto N : \ty{Prop}}_\rho(w) &
\eqdef  
    \{\,m\in\MEL\mid m(\semel{\Gamma \vdash M : \ty{Loc}}_\rho) = \semel{\Gamma \vdash N : \ty{Val}}_\rho \}
\\
  \semel{\Gamma \vdash \spec(M) : \ty{Prop}}_\rho(w) &
\eqdef  
    \{\,m\in\MEL\mid\semel{\Gamma \vdash M : \ty{Spec}}_\rho(w) = \top\,\}
\\
  \semel{\Gamma \vdash M=_\tau N : \ty{Spec}}_\rho(w) &
\eqdef  
    (\,\semel{\Gamma \vdash M : \tau}_\rho = \semel{\Gamma \vdash N : \tau}_\rho\,)
\\
  \semel{\Gamma \vdash \valid(P) : \ty{Spec}}_\rho(w) &
\eqdef  
    (\,\semel{\Gamma \vdash P : \ty{Prop}}_\rho(w) = \MEL\,)
\\
  \semel{\Gamma\vdash \trace(t) : \ty{Prop}}_\rho(w) &
\eqdef  
\Setb{ (h,\tau) }{ \tau \geq (\topTrace,\semel{\Gamma \vdash t : \ty{seq}~\ty{Val}}_\rho) }
\\
  \semel{\Gamma\vdash \hist(t) : \ty{Prop}}_\rho(w) &
\eqdef  
\Setb{ (h,\tau) }{ \tau \geq (\botHist,\semel{\Gamma \vdash t : \ty{seq}~\ty{Val}}_\rho) } 
\\
  \semel{\Gamma\vdash \inv(I) : \ty{Prop}}_\rho(w) &
\eqdef  
\Setb{ m\! }{ \!\pi_2(w) = \Setb{t\!}{\!\semel{\Gamma \vdash I : \ty{seq}~\ty{Val} \to \ty{Bool}}_\rho (t) = \ty{true}} \!} 
\\
  \semel{\Gamma \vdash \triple{P}{e}{Q} : \ty{Spec}}_\rho(w) &
\eqdef  \begin{array}{@{}l@{}}
\forall w' \geq w \ldotp \semel{\Gamma \vdash P : \ty{Prop}}_\rho(w') \\
{} \hspace{4em} \subseteq \mathsf{wp}(\semel{\Gamma \vdash Q : \ty{Val} \to \ty{Prop}}_\rho)(\rho(e))(w')
\end{array}
\end{align*}
\[
  \mathsf{wp}(Q) \eqdef \nu \mathit{wp}' \ldotp \lambda e, w \ldotp
    \Setb{m}{ \begin{array}{@{}l@{}}
        \forall r, t, s \ldotp~ t, s \vDash_w m \bullet r \implies  \\
                \ ( e,s \nrightarrow {} \implies e \in \VAL \land m \in Q(e)(w)) \\
                \  {} \land \forall a, e', s' \ldotp~ e, s \xrightarrow{a} e', s' 
                \implies \\ \ \ \ \exists w' \geq w, m' \ldotp (t \cdot a), s' \vDash_{w'} m' \bullet r \land m' \in \mathit{wp}'(e')(w')
              \end{array}}
\]
\begin{align*}
\text{where}\qquad  t, (h, \gamma) \vDash_{(\gamma', I)} (h', (\topTrace, t')) &\iff t = t' \land h = h' \land \gamma = \gamma' \land t \in I \\
  t, (h, \gamma) \vDash_{(\gamma', I)} (h', (\botHist, t')) &\iff t' \leqprefix t \land h = h' \land \gamma = \gamma' \land t \in I
\end{align*}
\caption{Semantics of terms (all cases).}\label{fig:semELfull}
\end{figure*}

The semantics of all terms is given in Figure~\ref{fig:semELfull}.

\begingroup
\def\thetheorem{\ref{thm:soundel}}
\begin{theorem}[Soundess]
  If $\Gamma \mid \Theta \vd S$ then $\Gamma \mid \Theta \modelsel S$.
\end{theorem}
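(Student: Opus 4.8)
The plan is to prove soundness by induction on the derivation of the judgment, carried out simultaneously for the two judgment forms $\Gamma \mid \Theta \vd S$ (specification) and $\Gamma \mid \Theta \mid P \vd Q$ (assertion), since each can invoke the other (e.g.\ \textsc{Csq} appeals to assertion entailments, and \textsc{PSpec}/\textsc{PValid} move between the two). For each inference rule I would unfold the relevant clause of the denotation in Figure~\ref{fig:semELfull} and verify that the semantic side holds whenever the premises do. The purely logical rules (higher-order logic, equality, the $\lambda$-calculus fragment) and the context-management rules (\textsc{Hyp}, \textsc{SpecOut}, \textsc{Valid}, \textsc{PSpec}, \textsc{PValid}) are routine, following directly from the set-theoretic reading of $\semel{-}$ and the definition of $\semel{\Gamma \vd \Theta}$.

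The technical core is the soundness of the Hoare-triple rules, all of which reduce to properties of the weakest precondition $\mathsf{wp}(Q)$. Since $\mathsf{wp}$ is defined as a greatest fixed point, membership claims $m \in \mathsf{wp}(Q)(e)(w)$ are established coinductively, by exhibiting a candidate family of resources closed under the single-step unfolding in the definition. I would first isolate two reusable lemmas. A \emph{frame lemma}: if $m \in \mathsf{wp}(Q)(e)(w)$ and $m \bullet m_R$ is defined with $m_R \in \semel{R}(w)$, then $m \bullet m_R \in \mathsf{wp}(\lambda r.\,Q(r) * R)(e)(w)$ --- this is immediate from the universal quantification over frames $r$ in the $\mathsf{wp}$ clause, absorbing $m_R$ into the frame, and discharges \textsc{Frame}. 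A \emph{bind lemma}: if $m \in \mathsf{wp}(Q)(e)(w)$ and for every $w' \geq w$, value $x$ and $m' \in \semel{Q(x)}(w')$ we have $m' \in \mathsf{wp}(R)(K[x])(w')$, then $m \in \mathsf{wp}(R)(K[e])(w)$; this discharges \textsc{Bind}. With these in hand, \textsc{Ret} holds because a value does not reduce and satisfies its own postcondition; \textsc{Csq} follows from the induction hypotheses on the pre/post entailments together with upward closure and monotonicity of $\semel{-}$; and the heap rules \textsc{Read}, \textsc{Write}, \textsc{Alloc} and the function rule \textsc{Abs} are checked by matching a single operational step against the heap/function-environment part of $\vDash$.

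For the trace machinery I would verify each rule directly against the trace monoid and the erasure relation $\vDash_w$. The \textsc{Emit} rule is the key case: an $\emit\,v$ step carries label $v$ and extends the concrete trace from $t$ to $t\cdot v$, so taking the new abstract resource to be $(\topTrace, t\cdot v)$ (modelling $\trace(t\cdot v)$) re-establishes $\vDash_{w}$, where the side condition $t\cdot v \in I$ demanded by the $\topTrace$ clause of $\vDash$ is exactly supplied by the premise $I(t\cdot v)$; note the frame here is necessarily a $\botHist$ resource, since $\topTrace\bullet\topTrace$ is undefined, so framing is unaffected. The assertion-logic trace rules are then read off the monoid: \textsc{PHistDupl} and \textsc{PInvDupl} from idempotency of $(\botHist,t)$ and the world-only nature of $\inv(I)$; \textsc{PAllocHist} from the splitting $(\topTrace,t)=(\topTrace,t)\bullet(\botHist,t)$; and \textsc{PUseHist} from the fact that $(\topTrace,t_1)\bullet(\botHist,t_2)$ is defined only when $t_2 \leqprefix t_1$.

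I expect the main obstacle to be the bind lemma, and with it the \textsc{Bind} rule. Its proof is a coinduction that must track how reductions of $K[e]$ factor through reductions of $e$: while $e$ is not a value, every step of $K[e]$ is a step of $e$ inside the context, so the coinductive hypothesis for $\mathsf{wp}(Q)(e)$ applies; once $e$ reaches a value $x$ one must splice in the continuation hypothesis for $K[x]$, and here the accumulated world increase $w' \geq w$ and the threading of the frame $r$ through the monoid must be handled carefully so that the postcondition $R$ is obtained at the correct world. The remaining bookkeeping throughout --- upward closure of the $\ty{Prop}$-denotations and monotonicity in the world $w$ --- is where most of the routine but error-prone detail lies, but it is forced by the definitions rather than genuinely difficult.
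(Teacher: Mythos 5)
Your plan coincides with the paper's own proof (Appendix~\ref{sec:elsound}): soundness is established rule by rule against the denotational semantics, with the Hoare-triple rules reduced to monotonicity of $\mathsf{wp}$ (Lemma~\ref{lem:monoEL}), a coinductive bind lemma (Lemma~\ref{lem:bindhelpEL}, phrased there as $\mathsf{wp}(\lambda v \ldotp \mathsf{wp}(q(v))(K[v]))(e)(w) \subseteq \mathsf{wp}(\bigsqcup q)(K[e])(w)$ rather than in your hypothesis-style form, but proved by exactly the case split you describe: steps of $K[e]$ are steps of $e$ inside $K$ until $e$ is a value, at which point the continuation assumption is spliced in), and a coinductive framing lemma (Lemma~\ref{lem:framingEL}); the atomic rules (\textsc{Ret}, \textsc{Abs}, \textsc{SpecOut}, \textsc{Alloc}, \textsc{Read}, \textsc{Write}, \textsc{Emit}) are then verified directly, and your remark that in \textsc{Emit} the frame's trace component is forced to be a $\botHist$ resource is precisely the unstated justification for the paper's step asserting that the extended trace still erases to $m' \bullet r$.

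The one claim in your write-up that does not survive scrutiny is that the frame lemma is ``immediate from the universal quantification over frames $r$, absorbing $m_R$ into the frame.'' Instantiating the frame quantifier only carries you through a single step: after $e, s \xrightarrow{a} e', s'$, the hypothesis $m \in \mathsf{wp}(Q)(e)(w)$ yields some $w' \geq w$ and $m'$ with $m' \in \mathsf{wp}(Q)(e')(w')$, and you must then convert this into $m' \bullet m_R \in \mathsf{wp}(\lambda r \ldotp Q(r) * R)(e')(w')$ --- which is the frame lemma itself at $e'$ and $w'$. Because $\mathsf{wp}$ is a greatest fixed point, that recursive appeal is only licensed by coinduction; the paper accordingly proves Lemma~\ref{lem:framingEL} coinductively, with ``absorb the framed resource into $r$'' as the closure check, together with world-monotonicity of $\semel{R}$ to transport $m_R$ from $w$ to $w'$. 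So framing is not easier in kind than bind: both are coinductions of the same shape, and both become routine exactly in the way you predict once set up as such.
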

\addtocounter{theorem}{-1}
\endgroup

Soundness is established by showing that each of the proof rules is semantically valid.

As a notational convenience, given $f, g : \EXP \to \semel{\ty{Prop}}$ we write $f \sqsubseteq g$ just if $\forall e, w \ldotp f(e)(w) \subseteq g(e)(w)$. We moreover define $f\sqcup g$ as $\lambda e,w.\,f(e)(w)\cup g(e)(w)$, and extend this notation to arbitrary unions $\bigsqcup S$ for $S\subseteq \EXP \to \semel{\ty{Prop}}$.

Moreover, for each $p,q\in\semel{\ty{Prop}}$, we let $p*q$ stand for $\lambda w.\,p(w)* q(w)$; whereby,
for each $X,Y\subseteq\MEL$, $X*Y=\{m\in\MEL\mid \exists m_1,m_2.~m=m_1\bullet m_2\land m_1\in X\land m_2\in Y\}$.

\begin{lemma}[\textsc{Hyp} rule]
  \[
    \Gamma \mid \Theta, S \modelsel S
  \]
\end{lemma}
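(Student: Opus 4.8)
The plan is to unfold the definition of semantic entailment in the specification logic and reduce the goal to a triviality about the meet operation in the two-element lattice $\{\bot,\top\}$. First I would fix an arbitrary world $w \in \WEL$ and environment $\rho \in \semel{\Gamma}$, so that by the definition of $\modelsel$ it suffices to establish the single pointwise inequality $\semel{\Gamma \vd \Theta, S}_\rho(w) \leq \semel{\Gamma \vd S : \ty{Spec}}_\rho(w)$ in $\{\bot,\top\}$.

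Next I would expand the left-hand side using the interpretation of specification contexts, namely $\semel{\Gamma \vd \Theta}_\rho(w) = \bigwedge_{T \in \Theta} \semel{\Gamma \vd T : \ty{Spec}}_\rho(w)$, where $\bigwedge$ is the greatest lower bound in $\{\bot,\top\}$. Since the extended context $\Theta, S$ contributes the additional conjunct for $S$, this yields
\[
\semel{\Gamma \vd \Theta, S}_\rho(w) = \Bigl(\bigwedge_{T \in \Theta} \semel{\Gamma \vd T : \ty{Spec}}_\rho(w)\Bigr) \wedge \semel{\Gamma \vd S : \ty{Spec}}_\rho(w).
\]
Finally, because a meet is a lower bound of each of its conjuncts (concretely, $a \wedge b \leq b$ in $\{\bot,\top\}$ taking $b = \semel{\Gamma \vd S : \ty{Spec}}_\rho(w)$), the right-hand conjunct bounds the whole expression from above, giving exactly the desired inequality. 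As $w$ and $\rho$ were arbitrary, $\Gamma \mid \Theta, S \modelsel S$ follows.

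There is essentially no obstacle here: soundness of \textsc{Hyp} is immediate once the context is interpreted as a meet, since adding $S$ to $\Theta$ can only decrease (or preserve) the interpretation of the context, and that interpretation is by construction bounded above by the interpretation of $S$ alone. The only point requiring care is that the specification context is modelled by the lattice meet in $\{\bot,\top\}$ rather than by set intersection, so the relevant order in the entailment is the lattice order $\bot < \top$; the argument is purely order-theoretic and does not touch the interpretation of $S$ itself, which is treated as an opaque element of $\{\bot,\top\}$.
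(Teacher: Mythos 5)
Your proposal is correct and follows essentially the same argument as the paper: fix $\rho$ (and $w$), interpret the extended context $\Theta, S$ as the meet $\bigwedge_{T \in (\Theta,S)} \semel{\Gamma \vd T : \ty{Spec}}_\rho(w)$, and conclude from the fact that a meet is bounded above by each of its conjuncts, in particular by $\semel{\Gamma \vd S : \ty{Spec}}_\rho(w)$. The paper's proof is just a terser statement of exactly this order-theoretic observation.
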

\begin{proof}
  Let $\rho \in \semel{\Gamma}$.
  Since $\forall w\in\WSL.\bigwedge_{T \in (\Theta, S)} \semel{\Gamma \vdash T : \ty{Spec}}_\rho(w) \leq \semel{\Gamma \vdash S : \ty{Spec}}_\rho(w)$, we have $\Gamma \mid \Theta, S \modelsel S$ by definition.
\end{proof}

\begin{lemma}[\textsc{Ret} rule]
  \[
    \Gamma, v : \ty{Val} \mid \Theta \modelsel \triple{\top}{v}{r \ldotp r = v}
  \]
\end{lemma}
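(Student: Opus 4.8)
By the definition of specification entailment, $\Gamma, v : \ty{Val} \mid \Theta \modelsel \triple{\top}{v}{r.~r=v}$ unfolds to: for every $w \in \WEL$ and $\rho \in \semel{\Gamma, v : \ty{Val}}$, $\semel{\Theta}_\rho(w) \leq \semel{\triple{\top}{v}{r.~r=v}}_\rho(w)$ in the two-point lattice $\{\bot,\top\}$. Since the right-hand side is a $\ty{Spec}$, it suffices to show it is \emph{always} $\top$, i.e.\ the triple is valid in every world, whence the inequality holds trivially irrespective of $\Theta$. The plan is therefore to discard $\Theta$ and prove the triple semantics directly. Unfolding the interpretation of Hoare triples, this amounts to establishing, for all $w' \geq w$, that $\semel{\top}_\rho(w') = \MEL \subseteq \mathsf{wp}(Q)(\rho(v))(w')$, where $Q = \semel{\lambda r : \ty{Val} \ldotp r = v}_\rho$.

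\textbf{Key steps.}
First I would observe that $\rho(v) \in \VAL$, and that a value is irreducible: inspecting the operational rules, every reduction fires on a redex sitting in an evaluation context, and a bare value matches no redex, so $\rho(v), s \nrightarrow$ for every state $s$. Next I would use that $\mathsf{wp}(Q)$ is a fixed point of the defining functional and unfold it \emph{once} at the argument $\rho(v)$; crucially, no genuine coinduction is needed precisely because $\rho(v)$ has no transitions. For a fixed $m \in \MEL$, the membership condition then splits into the two conjuncts of the $\mathsf{wp}$ body, instantiated at $e = \rho(v)$: the second conjunct, quantifying over transitions $\rho(v), s \xrightarrow{a} e', s'$, is vacuously true since no such transition exists; the first conjunct, guarded by $\rho(v), s \nrightarrow$ (which always holds), reduces to the requirement $\rho(v) \in \VAL$ (immediate) together with $m \in Q(\rho(v))(w')$.

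\textbf{Discharging the postcondition.}
It remains to compute $Q(\rho(v))(w') = \semel{r = v}_{\rho[r \mapsto \rho(v)]}(w')$. Interpreting value-equality at type $\ty{Prop}$ as the diagonal predicate — the full resource set $\MEL$ when the two sides denote the same value and $\emptyset$ otherwise — we have $\rho[r\mapsto\rho(v)](r) = \rho(v) = \rho[r\mapsto\rho(v)](v)$, so the equality holds and $Q(\rho(v))(w') = \MEL$. Hence the condition $m \in Q(\rho(v))(w')$ is satisfied for \emph{every} $m \in \MEL$, the implication governing the first conjunct holds regardless of whether its hypothesis $t, s \vDash_{w'} m \bullet r$ is ever met, and therefore $\MEL \subseteq \mathsf{wp}(Q)(\rho(v))(w')$ as required.

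\textbf{Main obstacle.}
The only delicate point is the greatest-fixed-point presentation of $\mathsf{wp}$: one must justify that a single unfolding (via the fixed-point equation $\mathsf{wp}(Q) = F(\mathsf{wp}(Q))$) suffices rather than a coinductive argument. This is legitimate here solely because $\rho(v)$ is terminal, so the recursive occurrence $\mathit{wp}'(e')$ is never reached. Beyond this, the remaining verifications — irreducibility of values and reflexivity of equality collapsing the postcondition to $\MEL$ — are routine.
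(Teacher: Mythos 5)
Your proof is correct and follows essentially the same route as the paper's: discard $\Theta$, unfold the triple semantics, use irreducibility of the value $\rho(v)$ to collapse $\mathsf{wp}$ to the postcondition check, and observe that $\semel{r = v}$ at $r \mapsto \rho(v)$ denotes all of $\MEL$. Your treatment is merely more explicit about the single unfolding of the greatest fixed point and the vacuity of the transition conjunct, which the paper leaves implicit.
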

\begin{proof}
  Let $\rho \in \semel{\Gamma, v : \ty{Val}}$ and $w \in \WEL = \FENV \times \mathit{Trace}$.
  \begin{align*}
&    \mathsf{wp}(\semel{\Gamma, v : \ty{Val} \vd \lambda r \ldotp r = v : \ty{Val} \to \ty{Prop}}_\rho)(\rho(v))(w)  \\
              &= \Setb{ m \in \MEL }{m \in \semel{\Gamma, v : \ty{Val} \vd \lambda r \ldotp r = v : \ty{Val} \to \ty{Prop}}_\rho(w)(\rho(v)) } \\
              &= \MEL \\
              &\supseteq \semel{\Gamma, v : \ty{Val} \vd \top : \ty{Prop}}_\rho(w)
  \end{align*}
  Hence $\Gamma \mid \Theta \modelsel \triple{\top}{v}{r \ldotp r = v}$.
\end{proof}

\begin{lemma}[Monotonicity of $\mathsf{wp}$] \label{lem:monoEL}
  For all $p_1, p_2 : \VAL \to \semel{\ty{Prop}}$ and $w\in\WEL$, if $\forall v,w'\geq w.\,p_1(v)(w') \subseteq p_2(v)(w')$ then 
$\forall e,w'\geq w.\,\mathsf{wp}(p_1)(e)(w') \subseteq \mathsf{wp}(p_2)(e)(w')$.
\end{lemma}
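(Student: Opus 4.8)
The plan is to prove the inclusion by coinduction on the definition of $\mathsf{wp}$ as a greatest fixed point. Write $F_Q$ for the monotone operator of which $\mathsf{wp}(Q)$ is the greatest fixed point, i.e.\ the body of the $\nu$-binder in the definition of $\mathsf{wp}$, so that $\mathsf{wp}(Q) = \nu F_Q$. The recursive occurrence $wp'$ appears only positively in $F_Q$ (it occurs solely in the succedent $m' \in wp'(e')(w')$ of the step clause), so $F_Q$ is $\sqsubseteq$-monotone on $\EXP \to \semel{\ty{Prop}}$; hence the greatest fixed point exists by Knaster--Tarski and the coinduction principle is available: any post-fixed point $X \sqsubseteq F_{p_2}(X)$ satisfies $X \sqsubseteq \mathsf{wp}(p_2)$. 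The whole argument reduces to producing one suitable post-fixed point of $F_{p_2}$.

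First I would fix $w$ together with the hypothesis $\forall v.\,\forall w'\geq w.\ p_1(v)(w') \subseteq p_2(v)(w')$, and define the candidate $X : \EXP \to \semel{\ty{Prop}}$ by
\[
  X(e)(u) \eqdef
    \begin{cases}
      \mathsf{wp}(p_1)(e)(u) & \text{if } u \geq w, \\
      \emptyset & \text{otherwise.}
    \end{cases}
\]
This $X$ is a legitimate element of the lattice: each $X(e)(u)$ is upward closed, and $X(e)$ is monotone in $u$ because any $u' \geq u \geq w$ is itself $\geq w$ (and the off-region value is $\emptyset$). The goal then becomes the single claim $X \sqsubseteq F_{p_2}(X)$, from which $X \sqsubseteq \mathsf{wp}(p_2)$ follows by coinduction, and restricting to worlds $u \geq w$ gives exactly $\mathsf{wp}(p_1)(e)(u) = X(e)(u) \subseteq \mathsf{wp}(p_2)(e)(u)$ as required.

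To verify $X \sqsubseteq F_{p_2}(X)$, the case $u \not\geq w$ is immediate since $X(e)(u) = \emptyset$. For $u \geq w$ I would take $m \in X(e)(u) = \mathsf{wp}(p_1)(e)(u) = F_{p_1}(\mathsf{wp}(p_1))(e)(u)$, unfold this, and discharge the two obligations of $m \in F_{p_2}(X)(e)(u)$ for arbitrary $r,t,s$ with $t,s \vDash_u m \bullet r$. In the \emph{terminal} clause, unfolding gives $e \in \VAL$ and $m \in p_1(e)(u)$; since $u \geq w$ the hypothesis yields $p_1(e)(u) \subseteq p_2(e)(u)$, hence $m \in p_2(e)(u)$, which is what $F_{p_2}$ demands. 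In the \emph{step} clause, for each transition $e,s \xrightarrow{a} e',s'$ the membership $m \in F_{p_1}(\mathsf{wp}(p_1))(e)(u)$ supplies a witness world $w'' \geq u$ and resource $m'$ with $(t\cdot a),s' \vDash_{w''} m' \bullet r$ and $m' \in \mathsf{wp}(p_1)(e')(w'')$; the same witnesses work for $F_{p_2}(X)$, because $w'' \geq u \geq w$ forces $X(e')(w'') = \mathsf{wp}(p_1)(e')(w'')$, so $m' \in X(e')(w'')$. The one point requiring care---and the only real subtlety---is precisely this interaction between the two-region definition of $X$ and the existentially quantified world in the step clause: because that world is always $\geq u \geq w$, it never escapes the region on which $X$ coincides with $\mathsf{wp}(p_1)$, which is exactly what makes the localized coinduction go through. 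The terminal clause is where the hypothesis is consumed, and everything else is bookkeeping.
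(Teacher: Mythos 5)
Your proof is correct and follows essentially the same route as the paper, which dismisses the lemma as ``straightforward, by co-induction'' --- the authors' own (unpublished) sketch uses exactly your two-region candidate, namely the function agreeing with $\mathsf{wp}(p_1)$ on worlds above $w$ and empty elsewhere, shown to be a post-fixed point of the operator defining $\mathsf{wp}(p_2)$. Your write-up merely fills in the details they omit, including the key observation that the existentially quantified world in the step clause stays above $w$, so nothing further is needed.
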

\begin{proof}
  Straightforward, by co-induction.
\end{proof}

\nt{This is straightforward even in the updated version. In case anyone asks, it sts that $f\sqsubseteq \ty{wp}(p_2)$, where
\[
f(e)(w')\eqdef\begin{cases}
\ty{wp}(p_1)(e)(w') & \text{if }w'\geq w\\ \bot & \text{otherwise}
\end{cases}\]
and for the latter it sts that $f\sqsubseteq H(f)$, where $H$ the endofunction from $\ty{wp}(p_2)=\ty{gfp}(H)$. And this follows from the hypothesis.
}

\begin{lemma} \label{lem:bindhelpEL}
  For all $q : \VAL\to\VAL \to \semel{\ty{Prop}}$, $e \in \EXP$, evaluation
  contexts $K$, and $w \in \WEL$,
  \[
    \mathsf{wp}(\lambda v \ldotp \mathsf{wp}(q(v))(K[v]))(e)(w) \subseteq 
    \mathsf{wp}(\bigsqcup q)(K[e])(w)
  \]
where $\bigsqcup q\eqdef \lambda u \ldotp \lambda w \ldotp \cup_{v \in \semsl{\ty{Val}}} q(v)(u)(w)$.
\end{lemma}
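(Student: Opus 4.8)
The plan is to argue by coinduction on the greatest fixed point defining the right-hand side. Write $H$ for the (evidently monotone) functional on $\EXP \to \semel{\ty{Prop}}$ whose greatest fixed point is $\mathsf{wp}(\bigsqcup q)$; coinduction then reduces the claim to exhibiting a post-fixed point $f \sqsubseteq H(f)$ that already contains the left-hand side at $K[e]$. I would take
\[
f \eqdef \Bigl(\lambda e'' \ldotp \lambda w' \ldotp \bigcup\nolimits_{K[e_0] = e''} \mathsf{wp}(\lambda v \ldotp \mathsf{wp}(q(v))(K[v]))(e_0)(w')\Bigr) \sqcup \mathsf{wp}(\bigsqcup q),
\]
the union ranging over all decompositions of $e''$ as $K[e_0]$ for the fixed context $K$. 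By construction $\mathsf{wp}(\lambda v \ldotp \mathsf{wp}(q(v))(K[v]))(e)(w) \subseteq f(K[e])(w)$ (take the decomposition $K[e]$), so once $f \sqsubseteq \mathsf{wp}(\bigsqcup q)$ is established the lemma follows immediately.

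The heart of the argument is verifying $f \sqsubseteq H(f)$. For the summand $\mathsf{wp}(\bigsqcup q)$ this is immediate: it is an $H$-fixed point and $\mathsf{wp}(\bigsqcup q) \sqsubseteq f$, so monotonicity of $H$ gives the inclusion. For a resource $m$ in the other summand, coming from some decomposition $K[e_0]$, I fix an arbitrary frame $r$ and a trace/state pair $t,s$ with $t,s \vDash_{w'} m \bullet r$ and check the two clauses of $H(f)(K[e_0])(w')$ using the hypothesis $m \in \mathsf{wp}(\lambda v \ldotp \mathsf{wp}(q(v))(K[v]))(e_0)(w')$ instantiated at this same frame. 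The case split is on whether $e_0$ is a value.

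When $e_0 \notin \VAL$, the standard congruence and decomposition properties of evaluation contexts apply: if $K[e_0],s$ were stuck then $e_0,s$ would be stuck, forcing $e_0 \in \VAL$ by the hypothesis, a contradiction, so the termination clause is vacuous; and any step $K[e_0],s \xrightarrow{a} e''',s'$ must be a step $e_0,s \xrightarrow{a} e_0',s'$ inside the hole with $e''' = K[e_0']$, for which the hypothesis supplies a matching $w'' \geq w'$ and $m'$ with $m' \in \mathsf{wp}(\lambda v \ldotp \mathsf{wp}(q(v))(K[v]))(e_0')(w'') \subseteq f(K[e_0'])(w'')$, as required. When $e_0 = v_0 \in \VAL$, the value is stuck, so the termination clause of the hypothesis (at the chosen frame) yields $m \in \mathsf{wp}(q(v_0))(K[v_0])(w')$; peeling this $\mathsf{wp}$ once more at the same frame discharges both clauses of $H(f)(K[v_0])(w')$, using $q(v_0) \sqsubseteq \bigsqcup q$ together with monotonicity of $\mathsf{wp}$ (Lemma~\ref{lem:monoEL}) to land the successor resources in $\mathsf{wp}(\bigsqcup q) \sqsubseteq f$ and the terminal resource in $(\bigsqcup q)(K[v_0])(w')$.

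The main obstacle is organising this single unfolding of $\mathsf{wp}$ so that the frame is threaded consistently: the hypothesis must be instantiated at exactly the frame produced by unfolding $H(f)$, which is what rules out the spurious vacuity that would arise were one to reason about whether any frame satisfies the erasure relation $\vDash_{w'}$. The only facts about the language used are congruence of reduction under $K$ and the decomposition of a reduction of $K[e_0]$ (for non-value $e_0$) into a reduction of $e_0$; I would isolate these as small auxiliary observations about evaluation contexts before entering the coinduction.
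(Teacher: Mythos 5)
Your proof is correct and follows essentially the same route as the paper's: a coinductive argument that case-splits on whether the plugged expression is a value, uses the decomposition of reductions of $K[e_0]$ into reductions of $e_0$ for non-values, and in the value case peels the inner $\mathsf{wp}$ once and appeals to $q(v_0) \sqsubseteq \bigsqcup q$ with Lemma~\ref{lem:monoEL}. The only difference is presentational: you make explicit the coinduction candidate $f$ (the union over decompositions joined with $\mathsf{wp}(\bigsqcup q)$) that the paper's terser ``by the co-inductive assumption'' leaves implicit, which is a more rigorous rendering of the same argument.
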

\begin{proof}
  By co-induction.
  Assume $m \in \mathsf{wp}(\lambda v \ldotp \mathsf{wp}(q(v))(K[v]))(e)(w)$ and that $r, t, s$ are such that $t, s \vDash_w m \bullet r$.
  If $K[e], s \nrightarrow$ then $e, s \nrightarrow$ and thus $e \in \VAL$ and $m \in \mathsf{wp}(q(e))(K[e])(w)$ as required, using continuity of $\ty{wp}$.
  Suppose that $K[e], s \xrightarrow{a} e', s'$.
  If $e \in \VAL$ then, as above, $m \in \mathsf{wp}(q(e))(K[e])(w)$.
  Otherwise, there exists $e''$ such that $e' = K[e'']$ and $e, s \xrightarrow{a} e'', s'$.
  It then follows from the assumption that there exist $w' \geq w$ and $m'$ with $(t \cdot a), s' \vDash_{w'} m' \bullet r$ and $m' \in \mathsf{wp}(\lambda v \ldotp \mathsf{wp}(q(v))(K[v]))(e'')(w')$.
  By the co-inductive assumption, $m' \in \mathsf{wp}(\bigsqcup q)(K[e''])(w')$ and so $m \in \mathsf{wp}(\bigsqcup q)(K[e])(w)$ as required.
\end{proof}

\begin{corollary}[\textsc{Bind} rule]
  If
  \begin{gather*}
    \Gamma \mid \Theta \modelsel \triple{P}{e}{x \ldotp Q} \\
    \Gamma, x : \ty{Val} \mid \Theta \modelsel \triple{Q}{K[x]}{r \ldotp R}
  \end{gather*}
  where $x \notin FV(\Theta)$, then
  \[
    \Gamma \mid \Theta \modelsel \triple{P}{ K{[e]} }{ r \ldotp \exists x : \ty{Val} \ldotp R}
  \]
\end{corollary}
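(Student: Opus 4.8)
The plan is to derive this corollary directly from the helper Lemma~\ref{lem:bindhelpEL}, together with monotonicity of $\mathsf{wp}$ (Lemma~\ref{lem:monoEL}), by instantiating the family $q$ in the helper lemma with the denotation of the postcondition $\lambda r. R$ parametrised by the value bound to $x$. First I would unfold the semantics of specification entailment: fix $\rho \in \semel{\Gamma}$ and $w \in \WEL$ and assume $\semel{\Gamma \vd \Theta}_\rho(w) = \top$, so that it suffices to show, for every $w' \geq w$, that $\semel{P}_\rho(w') \subseteq \mathsf{wp}(\semel{\lambda r. \exists x. R}_\rho)(\rho(K[e]))(w')$. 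I then define $q : \VAL \to \VAL \to \semel{\ty{Prop}}$ by $q(v) \eqdef \semel{\Gamma, x{:}\ty{Val} \vd \lambda r. R}_{\rho[x\mapsto v]}$, so that $q(v)(u)(w'') = \semel{R}_{\rho[x\mapsto v, r\mapsto u]}(w'')$; this is exactly the shape of family expected by Lemma~\ref{lem:bindhelpEL}.

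Next I would feed in the two premises. The first premise, applied at $\rho, w$ (using $\semel{\Theta}_\rho(w)=\top$), gives $\semel{P}_\rho(w') \subseteq \mathsf{wp}(\lambda v. \semel{Q}_{\rho[x\mapsto v]})(\rho(e))(w')$ for all $w' \geq w$. For the second premise I instantiate the extended context with $\rho[x\mapsto v]$ for each $v \in \VAL$; here the side condition $x \notin FV(\Theta)$ is precisely what guarantees $\semel{\Theta}_{\rho[x\mapsto v]}(w) = \semel{\Theta}_\rho(w) = \top$, so the premise applies uniformly in $v$ and yields, for all $w' \geq w$, $\semel{Q}_{\rho[x\mapsto v]}(w') \subseteq \mathsf{wp}(q(v))(\rho(K)[v])(w')$, using $\rho[x\mapsto v](K[x]) = \rho(K)[v]$. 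Applying Lemma~\ref{lem:monoEL} with $p_1 = \lambda v. \semel{Q}_{\rho[x\mapsto v]}$ and $p_2 = \lambda v. \mathsf{wp}(q(v))(\rho(K)[v])$ promotes this pointwise inclusion to $\mathsf{wp}(p_1)(\rho(e))(w') \subseteq \mathsf{wp}(p_2)(\rho(e))(w')$, and $p_2$ is exactly the argument of the outer $\mathsf{wp}$ in the helper lemma.

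Chaining the first premise, the monotonicity step, and Lemma~\ref{lem:bindhelpEL} (with $\rho(K)[\rho(e)] = \rho(K[e])$) then delivers $\semel{P}_\rho(w') \subseteq \mathsf{wp}(\bigsqcup q)(\rho(K[e]))(w')$. The last step, which I expect to be the only genuinely fiddly part, is the identification $\bigsqcup q = \semel{\lambda r. \exists x. R}_\rho$. Unfolding, $(\bigsqcup q)(u)(w'') = \bigcup_{v \in \VAL} \semel{R}_{\rho[x\mapsto v, r\mapsto u]}(w'')$, whereas $\semel{\lambda r. \exists x. R}_\rho(u)(w'') = \semel{\exists x. R}_{\rho[r\mapsto u]}(w'') = \bigcup_{v} \semel{R}_{\rho[r\mapsto u, x\mapsto v]}(w'')$; these coincide because $x \neq r$ and the existential is interpreted as the (upward-closed) union over $v$. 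With this equality the displayed inclusion is exactly the unfolding of $\semel{\triple{P}{K[e]}{r. \exists x. R}}_\rho(w) = \top$, and since $\rho$ and $w$ were arbitrary subject to $\semel{\Theta}_\rho(w) = \top$, the entailment $\Gamma \mid \Theta \modelsel \triple{P}{K[e]}{r. \exists x. R}$ follows.

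The main obstacles I anticipate are entirely bookkeeping: tracking the world quantifier $w' \geq w$ consistently through the three chained inclusions, making the uniform instantiation of the second premise in $v$ legitimate (which is where $x \notin FV(\Theta)$ is used), and verifying that the semantic existential commutes with the union defining $\bigsqcup q$. None of these require new ideas beyond the two lemmas already established; the content of the corollary is really just the translation of the operational splitting of $K[e]$ in Lemma~\ref{lem:bindhelpEL} into the language of triples.
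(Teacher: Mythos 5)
Your proposal is correct and takes essentially the same route as the paper: the paper's own proof of this corollary consists of the single line ``Follows from Lemmas~\ref{lem:bindhelpEL} and \ref{lem:monoEL}'', and your argument is exactly the detailed unfolding of that derivation --- instantiating $q(v) = \semel{\Gamma, x{:}\ty{Val} \vd \lambda r \ldotp R}_{\rho[x\mapsto v]}$, using $x \notin FV(\Theta)$ to apply the second premise uniformly in $v$, promoting the pointwise inclusion with Lemma~\ref{lem:monoEL}, composing with Lemma~\ref{lem:bindhelpEL}, and identifying $\bigsqcup q$ with the semantics of the existentially quantified postcondition. No gaps; the bookkeeping steps you flag (world quantifiers, environment instantiation, the union/existential identification) are handled correctly.
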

\begin{proof}
  Follows from Lemmas~\ref{lem:bindhelpEL} and \ref{lem:monoEL}.
\end{proof}

\begin{corollary}[\textsc{Csq} rule]
  If
  \begin{gather*}
    \Gamma \mid \Theta \mid P_1 \modelsel P_2 \\
    \Gamma \mid \Theta \modelsel \triple{P_2}{e}{Q_2} \\
    \Gamma, x : \ty{Val} \mid \Theta \mid Q_2(x) \modelsel Q_1(x)
  \end{gather*}
  then
  \[
    \Gamma \mid \Theta \modelsel \triple{P_1}{e}{Q_1}
  \]
\end{corollary}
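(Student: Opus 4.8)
The plan is to unfold specification entailment and reduce the goal to a short chain of inclusions, one contributed by each premise. Fix $\rho \in \semel{\Gamma}$ and $w \in \WEL$; since $\semel{\Gamma \vd \Theta}_\rho(w) \leq \semel{\Gamma \vd \triple{P_1}{e}{Q_1}}_\rho(w)$ is immediate when $\semel{\Gamma \vd \Theta}_\rho(w) = \bot$, I may assume $\semel{\Gamma \vd \Theta}_\rho(w) = \top$ and must show $\semel{\Gamma \vd \triple{P_1}{e}{Q_1}}_\rho(w) = \top$, i.e.\ that $\semel{\Gamma \vd P_1}_\rho(w') \subseteq \mathsf{wp}(\semel{\Gamma \vd Q_1}_\rho)(\rho(e))(w')$ for every $w' \geq w$. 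A preliminary observation I would record first is that, since each assumed specification is interpreted as a monotone map $\WEL \monarrow \{\bot,\top\}$, their pointwise meet $\semel{\Gamma \vd \Theta}_\rho$ is monotone, so $\semel{\Gamma \vd \Theta}_\rho(w') = \top$ for all $w' \geq w$. This is what licenses invoking each premise at $w'$ rather than only at $w$.

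Fixing such a $w'$, the first two premises supply the two leftmost inclusions of
\[
\semel{\Gamma \vd P_1}_\rho(w') \subseteq \semel{\Gamma \vd P_2}_\rho(w') \subseteq \mathsf{wp}(\semel{\Gamma \vd Q_2}_\rho)(\rho(e))(w') \subseteq \mathsf{wp}(\semel{\Gamma \vd Q_1}_\rho)(\rho(e))(w').
\]
The first is the propositional entailment $\Gamma \mid \Theta \mid P_1 \modelsel P_2$ instantiated at $\rho, w'$, applicable precisely because $\semel{\Gamma \vd \Theta}_\rho(w') = \top$. The second is the meaning of the triple premise $\Gamma \mid \Theta \modelsel \triple{P_2}{e}{Q_2}$ (available because $\Theta$ holds at $w$), which is exactly this $\mathsf{wp}$-inclusion at every $w' \geq w$.

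The only substantive step is the rightmost inclusion, namely weakening of the postcondition under $\mathsf{wp}$, which I would obtain from monotonicity of $\mathsf{wp}$ (Lemma~\ref{lem:monoEL}) applied at base world $w'$ with $p_1 = \semel{\Gamma \vd Q_2}_\rho$ and $p_2 = \semel{\Gamma \vd Q_1}_\rho$. Its hypothesis is the pointwise inclusion $\semel{\Gamma \vd Q_2}_\rho(v)(w'') \subseteq \semel{\Gamma \vd Q_1}_\rho(v)(w'')$ for all $v \in \VAL$ and all $w'' \geq w'$. To discharge this I would instantiate the third premise $\Gamma, x : \ty{Val} \mid \Theta \mid Q_2(x) \modelsel Q_1(x)$ at the environment $\rho[x \mapsto v]$ and world $w''$. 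Because $x$ is fresh — it occurs free in none of $\Theta$, $Q_1$, $Q_2$, all of which are typed in $\Gamma$ — we get $\semel{\Gamma, x : \ty{Val} \vd \Theta}_{\rho[x \mapsto v]}(w'') = \semel{\Gamma \vd \Theta}_\rho(w'') = \top$ (using monotonicity once more, as $w'' \geq w$), so the entailment fires and, by the semantics of application together with $x \notin FV(Q_i)$, rewrites $\semel{\Gamma, x \vd Q_i(x)}_{\rho[x\mapsto v]}(w'')$ to $\semel{\Gamma \vd Q_i}_\rho(v)(w'')$, yielding exactly the required inclusion. Chaining the three inclusions closes the goal for arbitrary $w' \geq w$, establishing $\Gamma \mid \Theta \modelsel \triple{P_1}{e}{Q_1}$.

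I do not anticipate a genuine obstacle: the argument is essentially transitivity of inclusion, with $\mathsf{wp}$-monotonicity packaging the postcondition step. The points demanding care are purely bookkeeping — tracking that the specification context stays $\top$ at larger worlds (so each premise is usable above $w$), and aligning the world quantifiers so that Lemma~\ref{lem:monoEL} is applied at base world $w'$ and the substitution $\rho[x \mapsto v]$ correctly reindexes the two postconditions via the freshness of $x$.
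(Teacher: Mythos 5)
Your proof is correct and follows essentially the same route as the paper's: chain the first premise (precondition strengthening), the second premise (the $\mathsf{wp}$-inclusion for $Q_2$), and then discharge postcondition weakening via monotonicity of $\mathsf{wp}$ (Lemma~\ref{lem:monoEL}) instantiated from the third premise. The only difference is that you spell out the bookkeeping the paper elides --- the $\forall w' \geq w$ quantifier in the triple semantics, monotonicity of $\semel{\Gamma \vd \Theta}_\rho$, and the freshness of $x$ when reindexing $Q_i(x)$ under $\rho[x \mapsto v]$ --- which makes your version slightly more precise but not a different argument.
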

\begin{proof}
  Let $\rho \in \semel{\Gamma}$ and $w \in \WEL$ be such that $\semel{\Gamma \vd \Theta}_\rho (w)$.
  Suppose that $m \in \semel{\Gamma \vd P_1 : \ty{Prop}}_\rho(w)$.
  By the first assumption, it follows that $m \in \semel{\Gamma \vd P_2 : \ty{Prop}}_\rho(w)$.
  By the second assumption, it follows that $m \in \mathsf{wp}(\semel{\Gamma \vd Q_2 : \ty{Val} \to \ty{Prop}}_\rho)(\rho(e))(w)$.
  By the third assumption, we have $\semel{\Gamma \vd Q_2 : \ty{Val} \to \ty{Prop}}_\rho \sqsubseteq \semel{\Gamma \vd Q_1 : \ty{Val} \to \ty{Prop}}_\rho$.
  By Lemma~\ref{lem:monoEL}, it then follows that $m \in \mathsf{wp}(\semel{\Gamma \vd Q_1 : \ty{Val} \to \ty{Prop}}_\rho)(\rho(e))(w)$, as required.
\end{proof}

\begin{lemma}[Framing] \label{lem:framingEL}
  For all $p : \semel{\ty{Prop}}$, $q : \semel{\ty{Val}} \to \semel{\ty{Prop}}$, $e \in \EXP$ and $w \in \WEL$,
  \[
    (p * \mathsf{wp}(q)(e))(w) \subseteq \mathsf{wp}(\lambda v \ldotp p * q(v))(e)(w)
  \]
\end{lemma}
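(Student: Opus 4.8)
The plan is to prove the inclusion by co-induction on the greatest-fixed-point definition of $\mathsf{wp}$, in the same style as the proof of Lemma~\ref{lem:bindhelpEL}. Write $Q' \eqdef \lambda v \ldotp p * q(v)$ for the framed postcondition, and let $F$ denote the monotone operator on $\EXP \to \WEL \to \mathcal{P}(\MEL)$ (ordered pointwise) whose greatest fixed point is $\mathsf{wp}(Q') = \nu F$. Define the candidate family $R \eqdef \lambda e, w \ldotp (p * \mathsf{wp}(q)(e))(w)$. Since $\nu F$ is the greatest post-fixed point, it suffices to establish $R \sqsubseteq F(R)$; Knaster--Tarski then yields $R \sqsubseteq \mathsf{wp}(Q')$, which is exactly the claim once instantiated at the given $e$ and $w$.

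For the co-inductive step I would fix $e, w$ and take $m \in R(e)(w)$, so that by the definition of $*$ on subsets of $\MEL$ we have $m = m_1 \bullet m_2$ with $m_1 \in p(w)$ and $m_2 \in \mathsf{wp}(q)(e)(w)$. To show $m \in F(R)(e)(w)$, fix an arbitrary frame $r$, trace $t$ and state $s$ with $t, s \vDash_w m \bullet r$. The key manoeuvre is to absorb the frame part $m_1$ into $r$: by commutativity and associativity of the partial monoid $m \bullet r = m_2 \bullet (m_1 \bullet r)$, so putting $r' \eqdef m_1 \bullet r$ (defined because $m \bullet r$ is) we obtain $t, s \vDash_w m_2 \bullet r'$. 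Unfolding the fixed point $\mathsf{wp}(q) = F_q(\mathsf{wp}(q))$ once, I may now apply the defining clause of $\mathsf{wp}(q)$ to $m_2$ with this enlarged frame $r'$.

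This splits into the two clauses. If $e, s \nrightarrow$, then $e \in \VAL$ and $m_2 \in q(e)(w)$; together with $m_1 \in p(w)$ this gives $m = m_1 \bullet m_2 \in p(w) * q(e)(w) = Q'(e)(w)$, discharging the terminal clause. If instead $e, s \xrightarrow{a} e', s'$, there are $w' \geq w$ and $m_2'$ with $(t \cdot a), s' \vDash_{w'} m_2' \bullet r'$ and $m_2' \in \mathsf{wp}(q)(e')(w')$. I take $m' \eqdef m_1 \bullet m_2'$; then $m' \bullet r = m_2' \bullet (m_1 \bullet r) = m_2' \bullet r'$, so $(t \cdot a), s' \vDash_{w'} m' \bullet r$, and it remains to see $m' \in R(e')(w')$, i.e.\ $m_1 \bullet m_2' \in p(w') * \mathsf{wp}(q)(e')(w')$. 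The second conjunct is immediate, and the first follows because $p \in \WEL \monarrow \mathcal{P}^{\uparrow}(\MEL)$ is monotone in the world, so from $w' \geq w$ we get $m_1 \in p(w) \subseteq p(w')$. This discharges the step clause and completes the co-induction.

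The only delicate points are bookkeeping rather than conceptual. I must check that $m_1 \bullet r$ is defined whenever $m \bullet r$ is, which is immediate from associativity of the partial commutative monoid, and keep track of the reassociations when moving $m_1$ between the resource and the frame. The one genuinely load-bearing observation, and the step I expect to be the crux, is that re-attaching the frame resource $m_1$ after a reduction is sound even though the world may have grown from $w$ to $w'$: this is precisely where world-monotonicity of the interpretation of $\ty{Prop}$ is used, since $p(w) \subseteq p(w')$ is what guarantees $m_1$ still satisfies $p$ in the evolved world. No other property of $p$ or $q$ is required.
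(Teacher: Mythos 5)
Your proof is correct and takes essentially the same route as the paper's: a co-induction in which $m$ is split as $m_1 \bullet m_2$, the $p$-part $m_1$ is absorbed into the frame, $\mathsf{wp}(q)$ is unfolded once, and the pieces are reassembled in both the terminal and step clauses. You are in fact slightly more careful than the paper, which silently uses the world-monotonicity of $p$ (to get $m_1 \in p(w')$ after a reduction step grows the world) — the very point you flag as the crux.
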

\begin{proof}
  By co-induction.
  It suffices to show that, for all $e, w$, $(p * \mathsf{wp}(q)(e))(w) \subseteq H(\lambda e' \ldotp p * \mathsf{wp}(q)(e'))(e)(w)$ where $H$ is such that $\mathsf{wp}(\lambda v \ldotp p * q(v)) = \nu \mathit{wp}' \ldotp H(\mathit{wp}')$ (as implied in the definition of $\mathsf{wp}$).
  Take any $w$, $m \in (p * \mathsf{wp}(q)(e))(w)$, $r, t, s$ with $s \vDash_w m \bullet r$.
  Then there exist $m_1, m_2$ such that $m' = m_1 \bullet m_2$, $m_1 \in p(w)$ and $m_2 \in \mathsf{wp}(q)(e)(w)$.
  If $e, s \nrightarrow$ then, since $t, s \vDash_w m_2 \bullet (m_1 \bullet r)$, it follows that $e \in \VAL$ and $m_2 \in q(e)(w)$; we can hence conclude that $m \in H(\lambda e' \ldotp p * \mathsf{wp}(q)(e'))(e)(w)$.
  If $e, s \xrightarrow{a} e', s'$ then since $m_2 \in \mathsf{wp}(q)(e)(w)$, it follows that there are $w' \geq w$ and there exists $m_2'$ such that $(t \cdot a), s' \vDash_{w'} m_2' \bullet (r \bullet m_1)$ and $m_2' \in \mathsf{wp}(q)(e')(w')$; hence $m_1 \bullet m_2' \in (p * \mathsf{wp}(q)(e'))(w')$ and we thus have that $m \in H(\lambda e' \ldotp p * \mathsf{wp}(q)(e'))(e)(w)$, as required.
\end{proof}
\begin{corollary}[\textsc{Frame} rule]
  If $\Gamma \mid \Theta \modelsel \triple{P}{e}{Q}$ then $\Gamma \mid \Theta \modelsel \triple{P * R}{e}{\lambda r \ldotp Q(r) * R}$.
\end{corollary}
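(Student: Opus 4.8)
The plan is to obtain the \textsc{Frame} rule as a direct consequence of the Framing lemma (Lemma~\ref{lem:framingEL}), in the same way that the \textsc{Bind} and \textsc{Csq} corollaries were obtained from their respective helper lemmas. First I would unfold the two semantic definitions. Fixing $\rho \in \semel{\Gamma}$ and $w \in \WEL$ with $\semel{\Gamma \vd \Theta}_\rho(w) = \top$, and fixing an arbitrary $w' \geq w$, the definition of $\modelsel$ on specifications together with the Hoare-triple interpretation reduces the goal to the set inclusion
\[
  \semel{\Gamma \vd P * R : \ty{Prop}}_\rho(w') \subseteq \mathsf{wp}(\semel{\Gamma \vd \lambda r \ldotp Q(r) * R}_\rho)(\rho(e))(w').
\]
Unfolding the hypothesis $\Gamma \mid \Theta \modelsel \triple{P}{e}{Q}$ at the same $\rho$, $w$, $w'$ supplies the inclusion $\semel{P}_\rho(w') \subseteq \mathsf{wp}(\semel{Q}_\rho)(\rho(e))(w')$.

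Next I would rewrite the left-hand side via the semantic clause for separating conjunction, $\semel{P * R}_\rho(w') = \semel{P}_\rho(w') * \semel{R}_\rho(w')$, and push the frame through the hypothesis. Since $*$ on subsets of $\MEL$ is monotone in each argument, the hypothesis yields
\[
  \semel{R}_\rho(w') * \semel{P}_\rho(w') \subseteq \bigl(\semel{R}_\rho * \mathsf{wp}(\semel{Q}_\rho)(\rho(e))\bigr)(w').
\]
Applying Lemma~\ref{lem:framingEL} at $w'$ with $p \eqdef \semel{R}_\rho$ and $q \eqdef \semel{Q}_\rho$ bounds the right-hand side above by $\mathsf{wp}(\lambda v \ldotp \semel{R}_\rho * \semel{Q}_\rho(v))(\rho(e))(w')$. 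Composing these inclusions gives the required bound, once the integrand is recognised as the postcondition.

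The only point needing attention --- and the reason this is not a verbatim instance of Lemma~\ref{lem:framingEL} --- is an orientation mismatch: the framing lemma attaches the frame on the \emph{left} of the $\mathsf{wp}$, whereas the postcondition of the \textsc{Frame} rule places $R$ on the \emph{right} of $Q(r)$. I would therefore make explicit the appeal to commutativity of the (partial commutative) monoid multiplication, which gives $\semel{\lambda r \ldotp Q(r) * R}_\rho(v) = \semel{Q}_\rho(v) * \semel{R}_\rho = \semel{R}_\rho * \semel{Q}_\rho(v)$, matching the $\mathsf{wp}$ obtained above. Everything else is routine unfolding of the interpretation of triples and of the entailment judgement, so I expect no genuine obstacle.
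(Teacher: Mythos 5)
Your proposal is correct and takes essentially the same route as the paper's proof: both reduce the claim, via the semantic clause for $*$ and monotonicity of $*$, to a single application of Lemma~\ref{lem:framingEL}. The only cosmetic differences are that you attach the frame on the left of the $\mathsf{wp}$ and invoke commutativity of the monoid explicitly (and you spell out the $w' \geq w$ quantification), whereas the paper writes the frame on the right and leaves these points implicit.
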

\begin{proof}
  Let $\rho \in \semel{\Gamma}$ and $w \in \WEL$ be such that $\semel{\Gamma \vdash \Theta}_\rho (w)$.
  By assumption,
  \[
    \semel{\Gamma \vd P : \ty{Prop}}_\rho (w) \subseteq \mathsf{wp}(\semel{\Gamma \vd Q : \ty{Val} \to \ty{Prop}}_\rho)(\rho(e))(w)
  \]
  It follows by monotonicity of $*$ and Lemma~\ref{lem:framingEL} that
  \begin{align*}
&    \semel{\Gamma \vd P * R : \ty{Prop}}_\rho(w) \\
                                                 &= \semel{\Gamma \vd P : \ty{Prop}}_\rho(w) * \semel{\Gamma \vd R : \ty{Prop}}_\rho(w) \\
                                                 &\subseteq \mathsf{wp}(\semel{\Gamma \vd Q : \ty{Val} \to \ty{Prop}}_\rho)(\rho(e))(w) * \semel{\Gamma \vd R : \ty{Prop}}_\rho(w) \\
                                                 &\subseteq \mathsf{wp}(\lambda v \ldotp \semel{\Gamma \vd Q : \ty{Val} \to \ty{Prop}}_\rho(v) * \semel{\Gamma \vd R : \ty{Prop}}_\rho)(\rho(e))(w)
  \end{align*}
  as required.
\end{proof}

\begin{lemma}[\textsc{Abs} rule]
  If
  \[
    \Gamma, x : \ty{Val} \mid \Theta \modelsel \triple{P}{e}{Q}
  \]
  where $x \notin FV(\Theta)$, then
  \[
    \Gamma \mid \Theta \modelsel \triple{\emp}{\lambda x \ldotp e}{r \ldotp \forall x : \ty{Val} \ldotp \spec(\triple{P}{r x}{Q})}
  \]
\end{lemma}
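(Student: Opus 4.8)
The plan is to prove the semantic validity of \textsc{Abs} directly from the definitions in Figure~\ref{fig:semELfull}, unfolding the weakest precondition for the $\lambda$-abstraction and carefully tracking the fresh function name that its reduction introduces. Fix $\rho \in \semel{\Gamma}$ and $w \in \WEL$ with $\semel{\Gamma \vd \Theta}_\rho(w) = \top$, and abbreviate the postcondition as $Q_{\mathit{post}} = \lambda r.\, \forall x : \ty{Val}.\, \spec(\triple{P}{r\,x}{Q})$. By the semantics of triples it suffices to show, for every $w' \geq w$, that $\semel{\Gamma \vd \emp}_\rho(w') = \MEL \subseteq \mathsf{wp}(\semel{Q_{\mathit{post}}}_\rho)(\rho(\lambda x.\, e))(w')$. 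So I would take an arbitrary $m \in \MEL$ and unfold $\mathsf{wp}$ once.

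Note that $\rho(\lambda x.\, e) = \lambda x.\, \rho(e)$ is \emph{not} a value, so the terminal clause of $\mathsf{wp}$ is vacuous; its only transition is the silent step $\lambda x.\, \rho(e),\, (h,\gamma) \to f,\, (h, \gamma[f \mapsto \lambda x.\, \rho(e)])$ with $f \notin \dom{\gamma}$. Given any $r,t,s=(h,\gamma)$ with $t, s \vDash_{w'} m \bullet r$, the satisfaction relation forces $\gamma = \pi_1(w')$, so I would pick the extended world $w'' = (\pi_1(w')[f \mapsto \lambda x.\, \rho(e)],\, \pi_2(w'))$ together with the same resource $m$. Then $w'' \geq w'$ (the first component only grows, the trace invariant is unchanged), the post-step state still satisfies $t, \cdot \vDash_{w''} m \bullet r$, and it remains to establish $m \in \mathsf{wp}(\semel{Q_{\mathit{post}}}_\rho)(f)(w'')$. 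Since $f$ is a value, this reduces to showing $m \in \semel{Q_{\mathit{post}}}_\rho(f)(w'')$, i.e.\ that the nested triple $\triple{P}{f\,x}{Q}$ holds at $w''$ under every instantiation $x \mapsto v$.

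To discharge that nested triple, fix $v \in \VAL$ and $w_2 \geq w''$. The hypothesis $\Gamma, x : \ty{Val} \mid \Theta \modelsel \triple{P}{e}{Q}$, instantiated at $\rho[x \mapsto v]$ and $w_2$ — legitimate because $x \notin FV(\Theta)$ gives $\semel{\Theta}_{\rho[x \mapsto v]}(w_2) = \semel{\Theta}_\rho(w_2)$, which is $\top$ by monotonicity of specifications from $w \leq w_2$ — yields $\semel{P}_{\rho[x\mapsto v]}(w_2) \subseteq \mathsf{wp}(\semel{Q}_{\rho[x\mapsto v]})(\rho[x\mapsto v](e))(w_2)$. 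The crucial point is transferring this from the body $\rho[x\mapsto v](e)$ to the application $f\,v$: in every state considered at a world above $w''$ the function environment equals that world's first component and hence binds $f \mapsto \lambda x.\, \rho(e)$, so $f\,v$ takes the single silent step to $(\rho(e))\{v/x\} = \rho[x\mapsto v](e)$ without touching heap or trace. Unfolding $\mathsf{wp}$ once for $f\,v$ and reusing $w_2$ and $m$ in the step case then gives $\mathsf{wp}(\semel{Q}_{\rho[x\mapsto v]})(\rho[x\mapsto v](e))(w_2) \subseteq \mathsf{wp}(\semel{Q}_{\rho[x\mapsto v]})(f\,v)(w_2)$. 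Composing the two inclusions is exactly $\semel{\triple{P}{f\,x}{Q}}_{\rho[x\mapsto v]}(w'') = \top$, and ranging over $v$ gives $m \in \semel{Q_{\mathit{post}}}_\rho(f)(w'')$, closing the argument.

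The routine parts are the single $\mathsf{wp}$-unfoldings and the frame/world bookkeeping. I expect the main obstacle to be making precise that extending the world with $f \mapsto \lambda x.\, \rho(e)$ is both \emph{legitimate} — it only enlarges the function environment, so it respects the world order and preserves every satisfaction and framing condition — and \emph{sufficient}, in that it forces every later state to bind $f$ to the stored body, which is precisely what licenses reducing $f\,v$ to the body and reapplying the hypothesis. This hinges on the fact that $\vDash_w$ pins the state's function environment to $\pi_1(w)$ and that the world order only ever grows it.
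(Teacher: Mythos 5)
Your proposal is correct and takes essentially the same route as the paper's own proof: unfold $\mathsf{wp}$ for the $\lambda$-abstraction's single silent step to a fresh name $f$, extend the world's function environment with $f \mapsto \lambda x.\,\rho(e)$, and reduce the goal to the nested specification for $f$, discharged from the hypothesis using $x \notin FV(\Theta)$ and monotonicity. If anything, you are more thorough: the body-to-application transfer (that $f\,v$ silently steps to $\rho[x\mapsto v](e)$ because the erasure relation pins the state's environment to the world's first component, which binds $f$) is exactly the step the paper compresses into a single \emph{Consequently}.
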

\begin{proof}
  Let $\rho \in \semel{\Gamma}$ and $w \in \WEL$ be such that $\semel{\Gamma \vd \Theta}_\rho (w)$.
  Let $m \in \MEL$.
  Suppose that $t, (h, \gamma) \vDash_w m \bullet r$ and $(\lambda x \ldotp \rho(e)), (h, \gamma) \xrightarrow{a} f, (h', \gamma')$.
  Then it must be that $a = \epsilon$, $f \in \FUN \setminus \dom{\gamma}$, $h' = h$ and $\gamma' = \gamma[f \mapsto \lambda x \ldotp \rho(e)]$.
  Let $w' = (\gamma', \pi_2(w)) = (\pi_1(w)[f \mapsto \lambda x \ldotp \rho(e)], \pi_2(w))$, so $w' \geq w$ and $t, (h, \gamma') \vDash_{w'} m \bullet r$.
  Moreover, $\semel{\Gamma \vd \Theta}_\rho (w')$ by monotonicity.
  By assumption, we have, for all $v \in \VAL$,
  $
    \semel{\Gamma, x : \ty{Val} \vd \triple{P}{e}{Q}}_{\rho{[x \mapsto v]}}(w')
  $.
  Consequently,
  \begin{align*}
&    m \in \semel{\Gamma \vd \lambda r \ldotp \forall x : \ty{Val} \ldotp \spec(\triple{P}{r x}{Q}) : \ty{Val} \to \ty{Spec}}_{\rho}(w')(f) \\
        & = \mathsf{wp}(\semel{\Gamma \vd \lambda r \ldotp \forall x : \ty{Val} \ldotp \spec(\triple{P}{r x}{Q}) : \ty{Val} \to \ty{Spec}}_{\rho})(f)(w')
  \end{align*}
  and so $\Gamma \mid \Theta \modelsel \triple{\emp}{\lambda x \ldotp e}{r \ldotp \forall x : \ty{Val} \ldotp \spec(\triple{P}{r x}{Q})}$ as required.
\end{proof}

\begin{lemma}[\textsc{SpecOut} rule]
  If
  \[
    \Gamma \mid \Theta, S \modelsel \triple{P}{e}{R}
  \]
  then
  \[
	  \Gamma \mid \Theta \modelsel \triple{P * \spec(S)}{e}{R}
  \]
\end{lemma}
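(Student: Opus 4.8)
The plan is to unfold the semantics of specification entailment and reduce the goal to a pointwise inclusion of weakest preconditions. Fix $\rho \in \semel{\Gamma}$ and $w \in \WEL$ with $\semel{\Gamma \vd \Theta}_\rho(w) = \top$; I must show $\semel{\Gamma \vd \triple{P * \spec(S)}{e}{R} : \ty{Spec}}_\rho(w) = \top$. By the semantics of Hoare triples this amounts to proving, for every $w' \geq w$, that $\semel{P * \spec(S)}_\rho(w') \subseteq \mathsf{wp}(\semel{R}_\rho)(\rho(e))(w')$ (here $\semel{R}_\rho$ abbreviates $\semel{\Gamma \vd R : \ty{Val} \to \ty{Prop}}_\rho$). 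So I fix such a $w'$ and a resource $m \in \semel{P * \spec(S)}_\rho(w')$.

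The crucial step is to exploit the all-or-nothing behaviour of $\spec$. Decomposing $m = m_1 \bullet m_2$ with $m_1 \in \semel{P}_\rho(w')$ and $m_2 \in \semel{\spec(S)}_\rho(w')$, the clause $\semel{\spec(S)}_\rho(w') = \{\,m \in \MEL \mid \semel{S}_\rho(w') = \top\,\}$ shows that membership of $m_2$ imposes no constraint on $m_2$ itself but forces $\semel{\Gamma \vd S : \ty{Spec}}_\rho(w') = \top$. Separately, since $\ty{Spec}$ is interpreted by world-monotone functions ($\semel{\ty{Spec}} = \WEL \monarrow \{\bot,\top\}$) and $w' \geq w$, the assumption $\semel{\Gamma \vd \Theta}_\rho(w) = \top$ lifts to $\semel{\Gamma \vd \Theta}_\rho(w') = \top$. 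Combining, I obtain $\semel{\Gamma \vd \Theta, S}_\rho(w') = \top$.

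Now I would invoke the hypothesis $\Gamma \mid \Theta, S \modelsel \triple{P}{e}{R}$ instantiated at $w'$ and $\rho$: it yields $\semel{\Gamma \vd \triple{P}{e}{R} : \ty{Spec}}_\rho(w') = \top$, and in particular $\semel{P}_\rho(w') \subseteq \mathsf{wp}(\semel{R}_\rho)(\rho(e))(w')$. Since $m_1 \in \semel{P}_\rho(w')$, this gives $m_1 \in \mathsf{wp}(\semel{R}_\rho)(\rho(e))(w')$. Finally, because $m = m_1 \bullet m_2 \geq m_1$ and $\mathsf{wp}(\semel{R}_\rho)(\rho(e))(w')$ is upward closed, I conclude $m \in \mathsf{wp}(\semel{R}_\rho)(\rho(e))(w')$, as required.

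The only point that needs genuine care, rather than routine unfolding, is the upward closure of $\mathsf{wp}$, which I expect to be the main obstacle. It is, however, a standard structural property of the model: the greatest fixed point defining $\mathsf{wp}(Q)$ is computed inside the complete lattice $\EXP \to \semel{\ty{Prop}}$ whose elements already take upward-closed values, and its frame-preservation clause (the universally quantified frame $r$) is exactly what lets one absorb the extra resource $m_2$ when passing from $m_1$ to $m = m_1 \bullet m_2$. The world-monotonicity of specification denotations used above is likewise immediate from the interpretation of $\ty{Spec}$, so no separate argument is needed for it.
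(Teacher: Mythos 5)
Your proof is correct, and its skeleton matches the paper's: fix $\rho$ and $w$, decompose $m = m_1 \bullet m_2$, note that $m_2 \in \semel{\Gamma \vd \spec(S) : \ty{Prop}}_\rho(w')$ imposes nothing on $m_2$ but forces $\semel{\Gamma \vd S : \ty{Spec}}_\rho(w') = \top$, so the hypothesis $\Gamma \mid \Theta, S \modelsel \triple{P}{e}{R}$ becomes applicable. The genuine difference is where the spare resource $m_2$ gets absorbed. The paper absorbs it on the \emph{precondition} side: since $\semel{\ty{Prop}} = \WEL \monarrow \mathcal{P}^{\uparrow}(\MEL)$, the set $\semel{\Gamma \vd P : \ty{Prop}}_\rho(w')$ is upward closed by the very definition of the model, so $m_1 \leq m$ yields $m \in \semel{\Gamma \vd P : \ty{Prop}}_\rho(w')$ outright and the hypothesis's inclusion is applied to $m$ itself --- no property of $\mathsf{wp}$ is needed at all. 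You absorb it on the \emph{postcondition} side, applying the inclusion to $m_1$ and then invoking upward closure of $\mathsf{wp}(\semel{R}_\rho)(\rho(e))(w')$. That fact is true, but it is not free, and your purely lattice-theoretic remark does not by itself deliver it: an operator that preserves the sublattice of upward-closed-valued functions need not have its greatest fixed point inside that sublattice. What does deliver it is exactly the coinductive use of the frame clause that you gesture at (show the upward closure of $\mathsf{wp}$ is a post-fixed point, folding the extra resource into the frame $r$), or, staying within the paper, the Framing lemma (Lemma~\ref{lem:framingEL}) instantiated with $p = \semel{\top}$, since $\MEL * X$ is precisely the upward closure of $X$ and $\semel{\top} * q(v) = q(v)$ for upward-closed $q(v)$. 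So both routes are sound: the paper's is the more economical because resource-monotonicity is baked into $\semel{\ty{Prop}}$, while yours establishes a reusable structural fact about $\mathsf{wp}$ and is also more scrupulous than the paper's write-up about the $\forall w' \geq w$ quantification in the triple semantics, whose lifting of $\Theta$ to larger worlds via monotonicity the paper leaves implicit.
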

\begin{proof}
  Let $\rho \in \semel{\Gamma}$ and $w \in \WEL$ be such that $\semel{\Gamma \vd \Theta}_\rho (w)$.
  Let $m \in \semel{\Gamma \vd P * \spec(S)}_\rho (w)$.
  Then $\sem{\Gamma \vd S}_\rho (w)$ and $m \in \semel{\Gamma \vd P}_\rho (w)$.
  Hence $\semel{\Gamma \vd \Theta, S}_\rho (w)$, and thus by assumption
  \[
    \semel{\Gamma \vd P : \ty{Prop}}_\rho(w) \subseteq \mathsf{wp}(\semel{\Gamma \vd Q : \ty{Val} \to \ty{Prop}}_\rho)(\rho(e))(w)
  \]
  It follows that $s \in \mathsf{wp}(\semel{\Gamma \vd Q : \ty{Val} \to \ty{Prop}}_\rho)(\rho(e))(w)$, as required.
\end{proof}

\begin{lemma}[\textsc{Alloc} rule]
  \[
    \Gamma, v : \ty{Val} \mid \Theta \modelsel \triple{\emp}{\refer v}{r \ldotp r \mapsto v}
  \]
\end{lemma}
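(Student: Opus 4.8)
The plan is to unfold the semantics of the specification entailment and of the Hoare triple, reducing the goal to a single coinductive membership claim about $\mathsf{wp}$. Fix $\rho \in \semel{\Gamma, v : \ty{Val}}$ and a world $w$, and write $vv = \semel{\Gamma, v : \ty{Val} \vd v}_\rho = \rho(v) \in \VAL$, so that $\rho(\refer v) = \refer vv$ and $\semel{\Gamma, v \vd \lambda r \ldotp r \mapsto v}_\rho(l) = \{\, m \mid \pi_1(m)(l) = vv \,\}$ for each $l \in \LOC$. Since $\semel{\emp}_\rho(w') = \MEL$, and since the triple semantics quantifies over all $w' \geq w$ while the context $\Theta$ can only weaken the goal (the specification will turn out to be $\top$ in every world), it suffices to show $\MEL \subseteq \mathsf{wp}(Q)(\refer vv)(w')$ for every world $w'$, where $Q(l) = \semel{r \mapsto v}_{\rho[r \mapsto l]}$.

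Because $\mathsf{wp}(Q)$ is a greatest fixed point $\nu H$ of the functional $H$ implicit in its definition, I would argue by coinduction. The candidate post-fixed point $S : \EXP \to \semel{\ty{Prop}}$ is given by $S(\refer vv)(w') = \MEL$, by $S(l)(w') = Q(l)(w') = \{\, m \mid \pi_1(m)(l) = vv \,\}$ for $l \in \LOC$, and by $S(e)(w') = \emptyset$ otherwise; each of these is upwards closed, so $S$ is a legitimate element of the function space. The proof then amounts to checking $S \sqsubseteq H(S)$, after which $S \sqsubseteq \nu H = \mathsf{wp}(Q)$ yields the required inclusion at $e = \refer vv$.

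There are two cases. For a location $e = l$ the check is immediate: $l$ is a value, so the only obligation is the terminal one, and $m \in S(l)(w')$ means exactly $\pi_1(m)(l) = vv$, i.e.\ $m \in Q(l)(w')$. The substantive case is $e = \refer vv$. Here I first note that $\refer vv$ is never stuck, since a fresh location always exists (heaps are finite while $\LOC$ is infinite), so the terminal obligation is vacuous. For the transition obligation, take any frame $r$, trace $t$ and state $s = (h,\gamma)$ with $t, s \vDash_{w'} m \bullet r$; by the operational rule the only transition is $\refer vv, (h,\gamma) \xrightarrow{\epsilon} l, (h[l \mapsto vv], \gamma)$ with $l \notin \dom{h}$. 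I would keep the world fixed ($w'' = w'$, valid since allocation names no function and leaves the invariant unchanged) and frame in the new cell by taking $m' = (\pi_1(m) \uplus \{l \mapsto vv\}, \pi_2(m))$. Since $\vDash$ forces $\pi_1(m \bullet r) = h$, freshness of $l$ guarantees that $m' \bullet r$ is defined, with heap $h[l \mapsto vv]$ and unchanged trace component, so $t, (h[l \mapsto vv], \gamma) \vDash_{w'} m' \bullet r$ holds, while $\pi_1(m')(l) = vv$ gives $m' \in S(l)(w')$, as required.

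The main obstacle I anticipate is the bookkeeping around the erasure relation $\vDash$ when framing the freshly allocated cell: one must check that $\pi_1(m) \uplus \{l \mapsto vv\}$ composes with the frame $r$ — using that operational freshness of $l$ is freshness with respect to the \emph{whole} heap $h$, hence with respect to $r$ as well — and that neither the world nor the trace component is altered by the step, so that all three conjuncts of $\vDash$, namely heap agreement, function-environment agreement, and membership of the current trace in the invariant, are preserved. Everything else, including the reduction through the specification-context and the upward-closure of the candidate, is routine.
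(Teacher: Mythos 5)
Your proof is correct and takes essentially the same route as the paper's: it analyses the single allocation step $\refer \rho(v),(h,\gamma)\xrightarrow{\epsilon} l,(h[l\mapsto\rho(v)],\gamma)$, extends the resource with the fresh cell (your $m'$ is exactly the paper's $m\bullet[l\mapsto\rho(v)]$) at an unchanged world, checks preservation of the erasure relation using freshness of $l$ with respect to the whole heap, and lands in the points-to postcondition. The only difference is presentational: you make explicit the coinduction (via the candidate $S$) and the non-stuckness of $\refer v$, both of which the paper leaves implicit in its terse appeal to $\mathsf{wp}$ at the value $l$ coinciding with the postcondition.
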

\begin{proof}
  Let $\rho \in \semel{\Gamma, v : \ty{Val}}$ and $w \in \WEL$ be such that $\semel{\Gamma, v : \ty{Val} \vd \Theta}_\rho (w)$.
  Let $m \in \MEL$.
  Suppose that $t, (h, \gamma) \vDash_w m \bullet r$ and $\refer \rho(v), (h, \gamma) \xrightarrow{a} l, (h', \gamma')$.
  Then it must be that $a = \epsilon$, $l \in \Loc \setminus \dom{h}$, $h' = h[l \mapsto \rho(v)]$ and $\gamma' = \gamma$.
  Consequently, $t, (h', \gamma') \vDash_w (m \bullet [l \mapsto \rho(v)]) \bullet r$.
  Now 
  \begin{align*}
    m \bullet [x \mapsto& \rho(v)] \in \semel{\Gamma, v : \ty{Val} \vd \lambda r \ldotp r \mapsto v : \ty{Val} \to \ty{Prop}}_\rho (l)(w) \\
                                    &\quad = \mathsf{wp}(\semel{\Gamma, v : \ty{Val} \vd \lambda r \ldotp r \mapsto v : \ty{Val} \to \ty{Prop}}_\rho)(l)(w)
  \end{align*}
  as required.
\end{proof}

\begin{lemma}[\textsc{Read} rule]
  \[
    \Gamma, l, v : \ty{Val} \mid \Theta \modelsel \triple{l \mapsto v}{!l}{r \ldotp l \mapsto v * r = v}
  \]
\end{lemma}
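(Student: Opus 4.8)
The plan is to follow exactly the template used above for \textsc{Alloc} and \textsc{Write}: fix $\rho \in \semel{\Gamma, l, v : \ty{Val}}$ and a world $w$ with $\semel{\Gamma, l, v : \ty{Val} \vd \Theta}_\rho(w)$, take an arbitrary $m \in \semel{l \mapsto v}_\rho(w)$, and show directly that $m \in \mathsf{wp}(\semel{\lambda r \ldotp l \mapsto v * r = v}_\rho)(!\rho(l))(w)$. From the points-to clause of the semantics I first read off that the heap component of $m$ sends $\rho(l)$ to $\rho(v)$. Since $\mathsf{wp}$ is a greatest fixed point, it is enough to verify its defining clause once, instantiating the continuation with $\mathsf{wp}$ itself.

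Next I would fix $r, t, s = (h,\gamma)$ with $t, s \vDash_w m \bullet r$ and use the erasure relation to conclude that the heap of $m \bullet r$ is exactly $h$; as heap composition is disjoint union, this gives $\rho(l) \in \dom{h}$ and $h(\rho(l)) = \rho(v)$. Hence the only applicable operational step is the read rule $!\rho(l), s \xrightarrow{\epsilon} \rho(v), s$, so the configuration is not stuck and the ``stuck'' conjunct of the weakest-precondition clause is vacuous. For the ``step'' conjunct I would take $w' = w$ and $m' = m$: since the emitted label is $\epsilon$ and the state is unchanged, $(t \cdot \epsilon), s \vDash_w m \bullet r$ holds by assumption, leaving only the obligation $m \in \mathsf{wp}(\semel{\lambda r \ldotp l \mapsto v * r = v}_\rho)(\rho(v))(w)$.

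Finally, because $\rho(v)$ is a value and admits no transitions, unfolding the fixed point once more reduces this to the terminal postcondition check $m \in \semel{l \mapsto v * r = v}_{\rho[r \mapsto \rho(v)]}(w)$. Here the equality $r = v$ is evaluated with $r$ bound to $\rho(v)$, so it holds and denotes all of $\MEL$; decomposing $m = m \bullet 1$ with the unit $1$ satisfying the $r = v$ conjunct then discharges the separating conjunction, since $m$ already lies in $\semel{l \mapsto v}_\rho(w)$. There is no genuine obstacle here --- the argument is essentially the \textsc{Alloc} proof with reading in place of allocation --- and the only points needing care are the coinductive unfolding of $\mathsf{wp}$ (and recognising that reaching a value forces the recursion to bottom out in the postcondition) together with the convention that an equality appearing in a $\ty{Prop}$ position denotes $\MEL$ exactly when it holds.
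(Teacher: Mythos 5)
Your proposal is correct and follows essentially the same route as the paper's own proof: unfold $\mathsf{wp}$ once, use the erasure relation $t,(h,\gamma) \vDash_w m \bullet r$ to get $h(\rho(l)) = \rho(v)$ (hence non-stuckness), observe the unique step $!\rho(l) \xrightarrow{\epsilon} \rho(v)$ with the state unchanged, take $w' = w$ and $m' = m$, and discharge the postcondition at the value. The extra detail you supply --- the coinductive justification for unfolding the greatest fixed point and the convention that a true equality in $\ty{Prop}$ position denotes all of $\MEL$ --- merely makes explicit what the paper's terse final equality leaves implicit.
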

\begin{proof}
  Let $\rho \in \semel{\Gamma, l, v : \ty{Val}}$ and $w \in \WEL$ be such that $\semel{\Gamma, l, v : \ty{Val} \vd \Theta}_\rho (w)$.
  Let $m \in \semel{l \mapsto v}_\rho(w)$.
  Suppose that $t, (h, \gamma) \vDash_w m \bullet r$ and $!\rho(l), (h, \gamma) \xrightarrow{a} e, (h', \gamma')$.
  (Note that since $h(\rho(l))) = (m \bullet r)(\rho(l)) = \rho(v)$, we do not have $!\rho(l), (h, \gamma) \nrightarrow$.)
  It must be that $a = \epsilon$, $e = \rho(v)$, $h' = h$ and $\gamma' = \gamma$.
  Now 
  \begin{align*}
&    m \in \semel{\Gamma, l, v : \ty{Val} \vd \lambda r \ldotp l \mapsto v * r = v : \ty{Val} \to \ty{Prop}}_\rho (\rho(v))(w) \\
        & = \mathsf{wp}(\semel{\Gamma, l, v : \ty{Val} \vd \lambda r \ldotp l \mapsto v * r = v : \ty{Val} \to \ty{Prop}}_\rho)(\rho(v))(w)
  \end{align*}
  as required.
\end{proof}

\begin{lemma}[\textsc{Write} rule]
  \[
    \Gamma, l, v : \ty{Val} \mid \Theta \modelsel \triple{l \mapsto \_}{l := v}{r \ldotp l \mapsto v * r = ()}
  \]
\end{lemma}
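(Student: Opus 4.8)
The plan is to mirror the \textsc{Read} and \textsc{Alloc} cases, unfolding the weakest precondition and exhibiting the updated resource obtained by overwriting the owned cell. First I would fix $\rho \in \semel{\Gamma, l, v : \ty{Val}}$ and $w \in \WEL$ with $\semel{\Gamma, l, v : \ty{Val} \vd \Theta}_\rho(w) = \top$, and take an arbitrary $m \in \semel{\Gamma, l, v : \ty{Val} \vd l \mapsto \_ : \ty{Prop}}_\rho(w)$. Writing $m = (h_m, \tau_m) \in \HEAP \times \mathit{Trace}$, the interpretation of $l \mapsto \_$ (that is, $\exists x \ldotp l \mapsto x$) gives $\rho(l) \in \dom{h_m}$. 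The goal is then to show that $m \in \mathsf{wp}(\semel{\lambda r \ldotp l \mapsto v * r = ()}_\rho)(\rho(l) := \rho(v))(w)$.

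Unfolding $\mathsf{wp}$, I would take any frame $r$, trace $t$ and state $s = (h, \gamma)$ with $t, s \vDash_w m \bullet r$. By the erasure relation, $h$ is the heap component of $m \bullet r$, so $\dom{h_m} \subseteq \dom{h}$ and hence $\rho(l) \in \dom{h}$; consequently the configuration is not stuck (so the terminal clause of $\mathsf{wp}$ is vacuous), and the only available reduction is $\rho(l) := \rho(v), (h, \gamma) \xrightarrow{\epsilon} (), (h[\rho(l) \mapsto \rho(v)], \gamma)$. I then set $m' = (h_m[\rho(l) \mapsto \rho(v)], \tau_m)$ and take $w' = w$.

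The key verification is that the frame and the trace are undisturbed. Since $m \bullet r$ is defined, the heap of $r$ is disjoint from $h_m$, so in particular $\rho(l) \notin \dom{h_r}$; therefore $m' \bullet r$ is still defined and its heap is $(h_m \uplus h_r)[\rho(l) \mapsto \rho(v)] = h[\rho(l) \mapsto \rho(v)]$, while its trace component (and $w'$, and $t$) are unchanged, so that $t \in \pi_2(w')$ still holds. Hence $t, (h[\rho(l) \mapsto \rho(v)], \gamma) \vDash_{w'} m' \bullet r$, discharging the simulation obligation. Since the reduct $()$ is a value with no further reductions, the remaining obligation $m' \in \mathsf{wp}(\semel{\lambda r \ldotp l \mapsto v * r = ()}_\rho)(())(w')$ reduces to $m' \in \semel{\lambda r \ldotp l \mapsto v * r = ()}_\rho(())(w')$: the heap of $m'$ maps $\rho(l)$ to $\rho(v)$, so $m' \in \semel{l \mapsto v}_\rho(w')$, and since $() =_{\ty{Val}} ()$ is interpreted as $\MEL$, composing with the monoid unit shows $m' \in \semel{l \mapsto v * r = ()}_\rho[r \mapsto ()](w')$, as required.

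I expect no real obstacle here: the only point requiring genuine care is the frame-preservation step, where one must use disjointness of $r$ from the owned cell $\rho(l)$ --- guaranteed by the precondition $l \mapsto \_$ --- to see that overwriting $\rho(l)$ leaves the frame $r$ and, crucially, the entire trace component $\tau_m$ intact. This is what ensures the world, the trace, and hence the trace invariant are preserved, so that the updated resource $m'$ still composes with $r$ under the erasure relation at the same world $w$.
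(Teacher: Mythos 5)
Your proof is correct and takes essentially the same route as the paper's: fix $\rho$ and $w$, unfold $\mathsf{wp}$, use the precondition $l \mapsto \_$ to rule out stuckness, take the single deterministic step, and exhibit $m' = (\pi_1(m)[\rho(l) \mapsto \rho(v)], \pi_2(m))$ at the unchanged world $w' = w$. If anything, you are more careful than the paper at the one delicate point: the paper's proof elides the frame (it writes $t, (h', \gamma) \vDash_w m'$ without $\bullet\, r$ and silently identifies the updated global heap with the updated local heap), whereas you explicitly use disjointness of the frame from the owned cell $\rho(l)$ to conclude that $m' \bullet r$ is defined with heap $h[\rho(l) \mapsto \rho(v)]$ and unchanged trace component, which is exactly what the simulation obligation requires.
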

\begin{proof}
  Let $\rho \in \semel{\Gamma, l, v : \ty{Val}}$ and $w \in \WEL$ be such that $\semel{\Gamma, l, v : \ty{Val} \vd \Theta}_\rho (w)$.
  Let $m \in \semel{l \mapsto \_}_\rho(w)$.
  Suppose that $t, (h, \gamma) \vDash_w m \bullet r$ and $(\rho(l) := \rho(v)), (h, \gamma) \to e, (h', \gamma')$.
  (Note that since $h(\rho(l)) = m(\rho(l)) = v_0$ for some $v_0 \in \VAL$, we do not have $(\rho(l) := \rho(v)), (h,\gamma) \nrightarrow$.)
  It must be that $a = \epsilon$, $e = ()$, $h' = h[\rho(l) \mapsto \rho(v)]$ and $w' = w$.
  Let $m' = (h', \pi_2(m)) = (\pi_1(m)[\rho(l) \mapsto \rho(v)], \pi_2(m))$, so that $t, (h', \gamma) \vDash_w m'$
  Now 
  \begin{align*}
    m' &\in \semel{\Gamma, l, v : \ty{Val} \vd \lambda r \ldotp l \mapsto v * r = () : \ty{Val} \to \ty{Prop}}_\rho (())(w) \\
                                 & = \mathsf{wp}(\semel{\Gamma, l, v : \ty{Val} \vd \lambda r \ldotp l \mapsto v * r = () : \ty{Val} \to \ty{Prop}}_\rho)(())(w)
  \end{align*}
  as required.
\end{proof}

\begin{lemma}[\textsc{Emit} rule]
  Let $\Gamma' = \Gamma, t : \ty{seq}~\ty{Val}, v : \ty{Val}, I : \ty{seq}~\ty{Val} \to \ty{Bool}$.
  If
  \[
    \Gamma' \mid \Theta \modelsel t \cdot v \in I
  \]
  then
  \[
    \Gamma' \mid \Theta \modelsel \triple{\trace(t) * \inv(I)}{\emit~v}{r \ldotp \trace(t \cdot v) * r = ()}
  \]
\end{lemma}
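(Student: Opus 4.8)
The plan is to establish semantic validity of the \textsc{Emit} rule by unfolding the Hoare-triple semantics and the definition of $\mathsf{wp}$, then checking the single emission step directly. Fix $\rho \in \semel{\Gamma'}$ and $w = (\gamma_w, I_w) \in \WEL$ with $\semel{\Gamma' \vd \Theta}_\rho(w) = \top$, and abbreviate $\bar t = \semel{t}_\rho$, $\bar v = \rho(v)$, $\bar I = \semel{I}_\rho$, and $\bar Q = \semel{\lambda r \ldotp \trace(t\cdot v) * r = ()}_\rho$. The hypothesis $\Gamma' \mid \Theta \modelsel t \cdot v \in I$ supplies $\bar I(\bar t \cdot \bar v) = \ty{true}$. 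Since the ordering on $\WEL$ keeps the invariant component fixed, every $w' \geq w$ satisfies $\pi_2(w') = I_w$ and (by monotonicity) $\semel{\Gamma' \vd \Theta}_\rho(w') = \top$; it therefore suffices to show, for each such $w'$, that $\semel{\trace(t) * \inv(I)}_\rho(w') \subseteq \mathsf{wp}(\bar Q)(\emit\,\bar v)(w')$.

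First I would analyse an arbitrary $m \in \semel{\trace(t)*\inv(I)}_\rho(w')$, writing $m = m_1 \bullet m_2$ with $m_1$ satisfying $\trace(t)$ and $m_2$ satisfying $\inv(I)$. The key observation is that in the trace monoid the only element above $(\topTrace, \bar t)$ is $(\topTrace, \bar t)$ itself: multiplying two $\topTrace$-resources is undefined, and multiplying $(\topTrace,\bar t)$ by any $(\botHist, s)$ with $s \leqprefix \bar t$ returns $(\topTrace, \bar t)$. Hence the trace component of $m_1$, and therefore of $m$, is exactly $(\topTrace,\bar t)$; write $m = (h_m, (\topTrace,\bar t))$. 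From $m_2 \in \semel{\inv(I)}_\rho(w')$ we read off $I_w = \{\, s \mid \bar I(s) = \ty{true}\,\}$, so the hypothesis gives $\bar t \cdot \bar v \in I_w$.

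Next I would unfold one step of the greatest fixed point. Take any frame $r$ and any $t_c, s_c$ with $t_c, s_c \vDash_{w'} m \bullet r$. Since the trace part of $m$ is the write-token $(\topTrace,\bar t)$, definedness of $m \bullet r$ forces the trace part of $r$ to be some $(\botHist, s_r)$ with $s_r \leqprefix \bar t$, so $m \bullet r$ again has trace part $(\topTrace,\bar t)$; the $\topTrace$-clause of $\vDash$ then pins down $t_c = \bar t$, fixes $s_c = (h_c, \pi_1(w'))$ with $h_c$ the heap of $m \bullet r$, and reconfirms $\bar t \in I_w$. The no-step conjunct of $\mathsf{wp}$ is vacuous, as $\emit\,\bar v$ always reduces (with $K = \hole$) via $\emit\,\bar v, s_c \xrightarrow{\bar v} (), s_c$, which is also the only available transition. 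For this transition I would take $w'' = w'$ and $m'' = (h_m, (\topTrace, \bar t \cdot \bar v))$; then $m'' \bullet r$ is defined with trace part $(\topTrace, \bar t \cdot \bar v)$ and heap $h_c$, and the $\topTrace$-clause of $\vDash$ yields $(t_c \cdot \bar v), s_c \vDash_{w'} m'' \bullet r$, using precisely $\bar t \cdot \bar v \in I_w$ for the invariant side-condition. Finally, since $()$ is a value that does not step, $m''$ carries trace part $(\topTrace, \bar t\cdot\bar v)$ and the equality $() = ()$ holds, one further unfolding gives $m'' \in \mathsf{wp}(\bar Q)(())(w')$, and hence $m \in \mathsf{wp}(\bar Q)(\emit\,\bar v)(w')$, as required.

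I expect the main obstacle to be the trace-monoid bookkeeping rather than any deep difficulty: the crux is recognising that ownership of $\trace(t)$ forces both $m$ and the frame-extended resource $m \bullet r$ to carry the unique write-token $(\topTrace,\bar t)$, so that the erasure relation determines the current trace to be exactly $\bar t$ and no rival owner can have advanced it. This is what makes the emission sound, and it is the same reason the frame $r$ may contribute only prefix ($\hist$-style) information. The only other point needing care is the bridge from the premise to the invariant side-condition, namely combining the meaning of $\inv(I)$ — which equates $\pi_2(w')$ with $\{\, s \mid \bar I(s) = \ty{true}\,\}$ — with the hypothesis $\bar I(\bar t \cdot \bar v) = \ty{true}$ to obtain $\bar t \cdot \bar v \in I_w$.
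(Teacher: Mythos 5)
Your proposal is correct and follows essentially the same route as the paper's own proof: unfold the triple and $\mathsf{wp}$ semantics, use ownership of $\trace(t)$ to pin the current trace to $\rho(t)$ via the erasure relation, use $\inv(I)$ together with the hypothesis to obtain $\rho(t)\cdot\rho(v) \in \pi_2(w)$, take the single $\emit$ step, and extend the resource's trace component to re-establish $\vDash$ and the postcondition. Your treatment is in fact somewhat more explicit than the paper's on two points it leaves implicit --- the quantification over $w' \geq w$ in the triple semantics (discharged by monotonicity of $\Theta$ and the invariant-preserving world order) and the monoid bookkeeping showing that the frame can only carry $\botHist$-information --- but these are elaborations of the same argument, not a different one.
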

\begin{proof}
  Let $\rho \in \semel{\Gamma'}$ and $w \in \WEL$ be such that $\semel{\Gamma' \vd \Theta}_\rho (w)$.
  Let $m \in \semel{\trace(t) * \inv(I)}_\rho(w)$.
  Suppose that $t_0, s \vDash_w m \bullet r$.
  It follows that $t_0 = \rho(t)$ and, by the assumption, $t_0 \cdot \rho(v) \in \pi_2(w)$.
  Suppose also that $\emit~\rho(v), s \xrightarrow{a} e, s'$.
  (Note that we do not have $\emit~\rho(v), (h, \gamma) \nrightarrow$.)
  It must be that $a = \rho(v)$, $e = ()$ and $s' = s$.
  Let $m' = (\pi_1(m), \pi_2(m) \cdot \rho(v)) = (\pi_1(m), t_0 \cdot a)$.
  It must also be that $t_0 \cdot a \vDash_w m' \bullet r$.
  Now 
  \begin{align*}
    m' &\in \semel{\Gamma' \vd \lambda r \ldotp \trace(t \cdot v) * r = () : \ty{Val} \to \ty{Prop}}_\rho (())(w) \\
       &\quad = \mathsf{wp}(\semel{\Gamma' \vd \lambda r \ldotp \trace(t \cdot v) * r = () : \ty{Val} \to \ty{Prop}}_\rho)(())(w)
  \end{align*}
  as required.
\end{proof}

\td{TODO: the entailment axioms}

\begingroup
\def\thetheorem{\ref{lem:el-multistep-exec}}
\begin{lemma}
Suppose that $\Gamma \mid - \vd \triple{P}{e}{Q}$
and let $w \in \WEL, \rho \in \semel{\Gamma}$ and $m \in \semel{\Gamma \vdash P}_\rho(w)$.
Taking $r, t, s$ such that $t, s \vDash_w m$, suppose that 
there exists $s',t'$ with $\rho(e),s \xrightarrow{t'}^*  e',s' \not\rightarrow$.
Then $e' \in \VAL$ and there exists $w' \geq w$ and $m' \in \semel{\Gamma \vdash Q}_\rho(e')(w')$
such that $(t \cdot t'), s' \vDash_{w'} m'$.
\end{lemma}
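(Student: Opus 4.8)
The plan is to derive the statement from the soundness theorem (Theorem~\ref{thm:soundel}) and then iterate the single-step simulation property encoded in the weakest-precondition greatest fixed point. First I would apply soundness to the hypothesis $\Gamma \mid - \vd \triple{P}{e}{Q}$ to obtain $\Gamma \mid - \modelsel \triple{P}{e}{Q}$. Since the specification context is empty, the premise $\semel{\Gamma \vd -}_\rho(w)$ holds trivially (it equals $\top$), so by the definition of specification entailment and of the semantics of Hoare triples we get, for every $w' \geq w$,
\[
  \semel{\Gamma \vdash P : \ty{Prop}}_\rho(w') \subseteq \mathsf{wp}\bigl(\semel{\Gamma \vdash Q : \ty{Val} \to \ty{Prop}}_\rho\bigr)(\rho(e))(w').
\]
Instantiating at $w' = w$ and using $m \in \semel{\Gamma \vdash P}_\rho(w)$ yields $m \in \mathsf{wp}(\semel{\Gamma \vdash Q}_\rho)(\rho(e))(w)$. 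This reduces the corollary to a statement purely about $\mathsf{wp}$.

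Next I would prove, by induction on the length $k$ of the reduction, the following frame-carrying generalisation: whenever $m \in \mathsf{wp}(\semel{\Gamma \vdash Q}_\rho)(d)(u)$ and $t, s \vDash_u m \bullet r$ and $d, s \xrightarrow{t'}{}^{*} d', s' \not\rightarrow$ in $k$ steps, then $d' \in \VAL$ and there exist $u' \geq u$ and $m' \in \semel{\Gamma \vdash Q}_\rho(d')(u')$ with $(t \cdot t'), s' \vDash_{u'} m' \bullet r$; the corollary is the instance $d = \rho(e)$, $u = w$ with $r$ the monoid unit $1$, for which $m \bullet 1 = m$. The crucial observation is that $\mathsf{wp}(Q) = \nu\,\mathit{wp}'.\,H(\mathit{wp}')$ is a fixed point, so membership in $\mathsf{wp}(Q)(d)(u)$ unfolds once into the body $H(\mathsf{wp}(Q))(d)(u)$. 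In the base case $k = 0$ we have $d, s \not\rightarrow$, and the first clause of the body gives $d \in \VAL$ and $m \in \semel{\Gamma \vdash Q}_\rho(d)(u)$; taking $u' = u$, $m' = m$ and noting $t \cdot \varepsilon = t$ discharges it. In the inductive step the reduction factors as $d, s \xrightarrow{a} d'', s'' \xrightarrow{t''}{}^{*} d', s'$ with $t' = a \cdot t''$; the second clause of the body applied to the first transition (using $t, s \vDash_u m \bullet r$) produces $u'' \geq u$ and $m''$ with $(t \cdot a), s'' \vDash_{u''} m'' \bullet r$ and $m'' \in \mathsf{wp}(Q)(d'')(u'')$. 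I would then apply the induction hypothesis to the remaining $k-1$ steps from $(d'', s'')$ with world $u''$, resource $m''$, frame $r$ and trace $t \cdot a$, and conclude using transitivity of $\leq$ on worlds together with associativity of trace concatenation, $(t \cdot a) \cdot t'' = t \cdot t'$.

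The argument is essentially bookkeeping, and the only point requiring care is the correct handling of the greatest-fixed-point unfolding together with the simultaneous propagation of the frame $r$ and the monotonically increasing world along the reduction. All the mathematical substance has already been expended in establishing soundness of the individual proof rules (in particular the co-inductive framing Lemma~\ref{lem:framingEL} and the design of $\mathsf{wp}$ as a simulation); here one merely composes the single-step guarantee $k$ times.
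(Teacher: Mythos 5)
Your proposal is correct and follows essentially the same route as the paper: apply Theorem~\ref{thm:soundel} to obtain semantic validity of the triple, then induct on the length of the reduction, unfolding the greatest-fixed-point definition of $\mathsf{wp}$ once per step to propagate the resource, world, and trace. The only (cosmetic) difference is that you carry a general frame $r$ through the induction and instantiate it to the monoid unit at the end, whereas the paper works with the unit frame from the outset.
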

\addtocounter{theorem}{-1}
\endgroup

\begin{proof}
From Theorem~\ref{thm:soundel}, we get that
 $\Gamma \mid - \modelsel \triple{P}{e}{Q}$,
 i.e. 
 $\forall w\in\WEL.\forall \rho \in \semel{\Gamma}. \semel{\Gamma \vd \triple{P}{e}{Q} : \ty{Spec}}_\rho(w) = \top$.
 Thus, we get that
 $\semel{\Gamma \vdash P : \ty{Prop}}_\rho(w) \subseteq \mathsf{wp}(\semel{\Gamma \vdash Q : \ty{Val} \to \ty{Prop}}_\rho)(\rho(e))(w)$.
 So taking $m, r, t, s$ such that $m \in \semel{\Gamma \vdash P}_\rho(w)$ and $t, s \vDash_w m$,
 we know that $m \in \mathsf{wp}(\semel{\Gamma \vdash Q : \ty{Val} \to \ty{Prop}}_\rho)(\rho(e))(w)$.
 
 \noindent
 Writing $q$ for $\semel{\Gamma \vdash Q : \ty{Val} \to \ty{Prop}}_\rho$, from
 \begin{itemize}
  \item $m \in \mathsf{wp}(q)(\rho(e))(w)$ 
  \item $t, s \vDash_w m$
  \item $s, e \xrightarrow{t'}^* s', e'$
 \end{itemize}
 we prove by induction on the length of this reduction above that
  $e' \in \VAL$ and there exists $w' \geq w$ and $m' \in q(e')(w')$
such that $(t \cdot t'), s' \vDash_{w'} m'$.

For the base case, $s' = s$, $e' = \rho(e)$ and $t' = \varepsilon$. 
Unfolding the definition of \textsf{wp}, 
since $\rho(e),s \not\rightarrow$ it follows that $\rho(e) \in \VAL$ and
that 
  $m \in q(\rho(e))(w)$,
  so that we take $w' = w$ and $m' = m$.

For the inductive case, assume $\rho(e),s \xrightarrow{a} e'',s'' \xrightarrow{t'}^n e',s' \not\rightarrow$. 
Unfolding the definition of \textsf{wp}, it follows that there exists $w' \geq w$ and $m'$ such that
\begin{align*}
  t \cdot a, s'' \vDash_{w'} m' &&
  m' \in \mathsf{wp}(q)(e'')(w')
\end{align*}
By the induction hypothesis, it follows that there exists $w'' \geq w'$ and $m'' \in q(e')(w'')$
such that $t \cdot a \cdot t', s' \vDash_{w''} m''$.
\end{proof}


\section{Erasure of instrumentation}
\label{sec:erase-emit}

\begin{lemma}
For all $e,e',h,h',\gamma,\gamma'$,
if $\config{e,(h,\gamma)} \rightarrow \config{e',(h',\gamma')}$ (without emitting any events)
then
$\config{\erase{e},(h,\erase{\gamma})} \rightarrow \config{\erase{e'},(h',\erase{\gamma'})}$.
\end{lemma}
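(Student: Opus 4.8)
The plan is to proceed by case analysis on the reduction rule used in the single step $\config{e,(h,\gamma)} \rightarrow \config{e',(h',\gamma')}$. Since the hypothesis excludes emission of events, exactly one of the eight non-$\emit$ rules applies, and in each such case the redex contained in $e$ is not an $\emit$ expression. Before the case analysis I would establish three routine structural facts about the erasure operation $\erase{\cdot}$. First, $\erase{v} = v$ for every value $v$, since values are built from $\unit$, integers, variables, locations, function names and pairs of values, none of which contain $\emit$. Second, erasure commutes with evaluation contexts: $\erase{K}$ is again an evaluation context and $\erase{K[e]} = \erase{K}[\erase{e}]$, which follows by induction on $K$ using the fact that each context production mentions only values and expressions in positions that are themselves erased. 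Third, erasure commutes with value substitution, $\erase{e\{v/x\}} = \erase{e}\{v/x\}$ (using $\erase{v}=v$), and with environment update and lookup, $\erase{\gamma[f \mapsto \lambda x.e]} = \erase{\gamma}[f \mapsto \lambda x.\erase{e}]$ and $\erase{\gamma}(f) = \erase{\gamma(f)}$; moreover $\dom{\erase{\gamma}} = \dom{\gamma}$ and heaps are left untouched by erasure.

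With these in hand each case is immediate. For the function-naming rule $\config{K[\lambda x.e],(h,\gamma)} \rightarrow \config{K[f],(h,\gamma[f\mapsto\lambda x.e])}$, I would rewrite $\erase{K[\lambda x.e]} = \erase{K}[\lambda x.\erase{e}]$, note that $f \notin \dom{\gamma} = \dom{\erase{\gamma}}$, apply the same rule to obtain $\config{\erase{K}[f],(h,\erase{\gamma}[f\mapsto\lambda x.\erase{e}])}$, and observe that this is exactly $\config{\erase{K[f]},(h,\erase{\gamma[f\mapsto\lambda x.e]})}$. For the $\beta$-rule, $\erase{\gamma}(f) = \lambda x.\erase{e}$ lets the same rule fire, and the substitution fact gives $\erase{K[e\{v/x\}]} = \erase{K}[\erase{e}\{v/x\}]$ as the required reduct. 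The remaining cases (projection, the two conditionals, allocation, dereference and assignment) are handled identically: the redex and any values it contains are unchanged by erasure, so the corresponding rule applies to $\erase{e}$ with the same location/value side conditions, the heap update $h[l\mapsto v]$ carries over verbatim since erasure does not modify heaps, and the reduct matches $\erase{e'}$ after pushing erasure through the context.

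The only part requiring real care --- and hence the main obstacle --- is the structural commutation lemma for evaluation contexts, together with the verification that the redex's shape survives erasure. One must check that $\erase{K}$ is genuinely an evaluation context (not merely an expression) so that the decomposition $\erase{e} = \erase{K}[\erase{r}]$, where $r$ is the redex, is the one driving the erased reduction, and that each redex of the eight non-$\emit$ rules erases to a redex of the same rule. Both hold precisely because $\emit$ never occurs at a reduction-controlling position in these redexes and because $\erase{v}=v$ keeps values (and thus side conditions such as $l\notin\dom{h}$ and the operand and guard positions) intact. Once this is secured, every case reduces to a direct rewriting, with no genuinely difficult step.
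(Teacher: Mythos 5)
Your proposal is correct and follows essentially the same route as the paper's own proof, which proceeds by case analysis on the reduction step using the key fact that $\erase{K[e]} = \erase{K}[\erase{e}]$. The paper states this in one line; your version simply makes explicit the auxiliary structural facts (erasure is the identity on values, commutes with substitution and environment update, and preserves heaps and side conditions) that the paper's terse argument implicitly relies on.
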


\begin{proof}
By case analysis on the reduction step,
using the fact that $\erase{K[e]} = \erase{K}[\erase{e}]$.
\end{proof}

\begingroup
\def\thetheorem{\ref{thm:erasure-emit}}
\begin{theorem}
For all $e,e',h,h',\gamma,\gamma'$,
if $\config{\erase{e},(h,\erase{\gamma})} \rightarrow^* \config{e',(h',\gamma')}$ 
then there exists a trace $t$, an expression $e''$ and an environment $\gamma''$ such that
$e' = \erase{e''}, \gamma' = \erase{\gamma''}$ and
$\config{e,(h,\gamma)} \xrightarrow{t}{\!\!}^*\ \config{e'',(h',\gamma'')}$.
\end{theorem}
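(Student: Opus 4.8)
The plan is to prove the statement by induction on the length $n$ of the given reduction $\config{\erase{e},(h,\erase{\gamma})}\to^*\config{e',(h',\gamma')}$, with the inductive step supplied by a \emph{reverse single-step simulation} lemma: whenever $\config{\erase{e},(h,\erase{\gamma})}\to\config{e_1,(h_1,\gamma_1)}$, there exist a trace $t$, an expression $e_*$ and an environment $\gamma_*$ with $\config{e,(h,\gamma)}\xrightarrow{t}{}^{*}\config{e_*,(h_1,\gamma_*)}$, $e_1=\erase{e_*}$ and $\gamma_1=\erase{\gamma_*}$. Given this lemma, the base case $n=0$ is discharged by taking $e''=e$, $\gamma''=\gamma$ and $t=\varepsilon$. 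For $n>0$ I peel off the first transition $\config{\erase{e},(h,\erase{\gamma})}\to\config{e_1,(h_1,\gamma_1)}$, apply the lemma to obtain a matching $e_*$ (so that the configuration $\config{e_1,(h_1,\gamma_1)}$ equals $\config{\erase{e_*},(h_1,\erase{\gamma_*})}$), then apply the induction hypothesis to the remaining $n-1$ steps starting from $\config{e_*,(h_1,\gamma_*)}$, and finally concatenate the two traces.

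To prove the reverse lemma I would first establish the companion \emph{forward} single-step lemma by a routine case analysis on the reduction rule, using $\erase{K[e_0]}=\erase{K}[\erase{e_0}]$: any non-emitting step is mirrored under erasure, whereas an emit step leaves the erasure unchanged, since $\erase{K[\emit\,v]}=\erase{K}[()]=\erase{K[()]}$. The reverse lemma then analyses the unique leftmost-outermost redex of $e$. I repeatedly fire the redex of $e$ as long as it is an $\emit$: each firing $\config{K[\emit\,v],(h,\gamma)}\xrightarrow{v}\config{K[()],(h,\gamma)}$ leaves $h$, $\gamma$ and the erasure untouched while strictly decreasing the finite number of $\emit$ occurrences (an $\emit$ takes a value argument, so firing removes one $\emit$ and creates none), so the process terminates, emitting some trace $t$, at a configuration $\config{e_*,(h,\gamma)}$ with $\erase{e_*}=\erase{e}$ whose redex is not an $\emit$. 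It cannot be a value either, since $\erase{e_*}=\erase{e}$ still steps while the erasure of a value is a value (values contain no $\emit$). Applying the forward lemma to the non-emitting step of $e_*$, and using determinism of the operational semantics up to the choice of fresh names — which I align, using that $\dom{\gamma}=\dom{\erase{\gamma}}$ and the heaps agree, so the same locations and function names are available — the given step $\config{\erase{e_*},(h,\erase{\gamma})}\to\config{e_1,(h_1,\gamma_1)}$ must be the erasure of the non-emitting step $\config{e_*,(h,\gamma)}\to\config{e_*',(h_1,\gamma_*)}$, yielding $e_1=\erase{e_*'}$ and $\gamma_1=\erase{\gamma_*}$.

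The main obstacle is justifying that erasure preserves the leftmost-outermost redex decomposition, i.e.\ that once $e_*=K_*[e_0]$ has a non-emit redex $e_0$, the term $\erase{e_*}=\erase{K_*}[\erase{e_0}]$ has $\erase{e_0}$ as its redex under the context $\erase{K_*}$. This requires noting that $\erase{K_*}$ is again an evaluation context, that $\erase{e_0}$ is a redex of the same syntactic shape as $e_0$, and, crucially, that the residual $\emit$s in $e_*$ all sit in non-evaluation positions (unevaluated conditional branches, $\lambda$-bodies, or argument/component positions not yet under evaluation); were any $\emit$ on the spine leading to the hole of $K_*$, it would itself be the leftmost-outermost redex, contradicting the fact that $e_0$ is. Hence replacing those $\emit\,v$ by $()$ does not expose a new, more-outer redex, so the redex of $\erase{e_*}$ is exactly $\erase{e_0}$. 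The remaining bookkeeping — that emit steps alter neither the heap nor the function environment, and that naming and allocation keep $\gamma$ and $h$ in the erasure relation — is immediate from the shape of the rules.
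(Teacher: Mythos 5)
Your proof is correct, and it rests on the same ingredients as the paper's own argument: a forward lemma saying that non-emitting steps are preserved by erasure, the observation that firing an $\emit$ redex strictly decreases the number of $\emit$s while leaving the heap, the environment and the erasure unchanged, and determinism of reduction (up to fresh names) to match the mirrored step with the given one. The difference is the decomposition. The paper runs a single induction on the lexicographic pair (length of the erased reduction, number of $\emit$s in $e$), treating ``fire a pending $\emit$'' inline as a decrease in the second component; you instead factor that descent into a self-contained reverse single-step simulation lemma (one erased step is matched by flushing the $\emit$s at the redex position and then mirroring the step), proved by termination on the $\emit$ count, and then do a plain induction on the length with trace concatenation. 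The two are interderivable, but your packaging is more modular: the lemma isolates exactly where the trace comes from, and the outer induction becomes routine. You also make explicit two points the paper elides: that ``determinism'' must be read modulo the nondeterministic choice of fresh locations and function names, which you resolve by aligning choices using $\dom{\gamma}=\dom{\erase{\gamma}}$ and equality of heaps; and that erasure preserves the evaluation-context decomposition because any $\emit$ on the spine of $K_*$ would itself be the leftmost-outermost redex. (One cosmetic difference: in the irreducible case the paper additionally flushes trailing $\emit$s of $e$ via its lexicographic measure, whereas your base case simply takes $e''=e$; both discharge the theorem as stated.)
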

\addtocounter{theorem}{-1}
\endgroup

\begin{proof}
 By induction on the (lexicographic order of the) length of the reduction of $\config{\erase{e},(h,\erase{\gamma})} \rightarrow^* \config{e',(h',\gamma')}$
 and the number of $\emit$s in $e$.
 If $\config{\erase{e},(h,\erase{\gamma})}$ is irreducible, then either:
 \begin{itemize}
  \item $\config{e,(h,\gamma)}$ is irreducible,
  so that we can define $t$ as the empty trace, $e''$ as $e$
  and $\gamma''$ as $\gamma$.
  \item or $e=K[\emit \, u]$ so that $\config{e,(h,\gamma)} \xrightarrow{u} \config{K[\unit],(h,\gamma)}$.
  Then we can apply the induction hypothesis since $K[\unit]$ has strictly fewer $\emit$s than $e$, and $\erase{e} = \erase{K[\unit]}$.
 \end{itemize}
Otherwise, there exist $e_1,\gamma_1,h_1$ such that
$\config{\erase{e},(\erase{\gamma},h)} \rightarrow \config{e_1,(\gamma_1,h_1)} \rightarrow^* \config{e',(\gamma',h')}$.
There are two possibilities:
\begin{itemize}
 \item Either there exist $e'_1,(\gamma'_1,h'_1)$, such that 
 $\config{e,(h,\gamma)} \rightarrow \config{e'_1,(\gamma'_1,h'_1)}$ without emitting any events, in which case,
 from the previous lemma and the determinism of the reduction, we get that 
 $\erase{e'_1} = e_1, \erase{\gamma'_1} = \gamma_1$ and $h'_1=h_1$.
 We now use the IH to prove the claim.
 \item Or $e=K[\emit \ u]$ so that $\config{e,(h,\gamma)} \xrightarrow{u} \config{K[\unit],(h,\gamma)}$.
 Then we can apply the IH since $K[\unit]$ has strictly fewer $\emit$s than $e$, and $\erase{e} = \erase{K[\unit]}$.
\end{itemize}
\end{proof}

\end{document}
